\documentclass{article}
\usepackage{amssymb}
\usepackage{amsmath}
\usepackage{amsthm}
\usepackage{color}
\usepackage[colorlinks]{hyperref}
\usepackage[a4paper]{geometry}

\newtheorem{theorem}{Theorem}[section]
\newtheorem{lemma}[theorem]{Lemma}
\newtheorem{proposition}[theorem]{Proposition}

\theoremstyle{definition}

\newtheorem{example}{Example}[section]

\theoremstyle{remark}
\newtheorem{remark}{Remark}

\title{Shortest self-orthogonal and LCD embeddings of linear codes over $\mathbb{F}_q+u\mathbb{F}_q$}
\author{Junmin An\thanks{junmin0518@sogang.ac.kr, Department of Mathematics and Institute for Mathematical and Data Sciences, Sogang University, Seoul, Korea}, Jon-Lark Kim\thanks{jlkim@sogang.ac.kr, Department of Mathematics and Institute for Mathematical and Data Sciences, Sogang University, Seoul, Korea}}
\date{}

\begin{document}
\maketitle

\begin{abstract}
This paper determines the exact lengths of shortest self-orthogonal and LCD embeddings of linear codes over $\mathbb{F}_q+u\mathbb{F}_q$. By decomposing Gram matrices over $\mathbb{F}_q+u\mathbb{F}_q$ into pairs of symmetric matrices over the finite field $\mathbb{F}_q$, the embedding problems are reduced to the congruence classification of symmetric and alternate matrices over finite fields. Complete formulas for the shortest self-orthogonal embedding length are obtained, with two distinct cases arising in both even and odd characteristic. We also show that every self-orthogonal code over $\mathbb{F}_q+u\mathbb{F}_q$ with nonzero free rank can be viewed as a shortest self-orthogonal embedding of another code. We use Witt theory to construct all shortest self-orthogonal embeddings. A complete characterization of shortest LCD embeddings is also established in terms of invertible and arbitrary matrices of prescribed sizes appended to a generator matrix. Examples of self-orthogonal and LCD embeddings with the largest minimum distance for the code considered are also presented, some of whose Gray images are optimal codes over $\mathbb{F}_q$.\\

\textbf{Keywords} : self-orthogonal code, LCD code, embedding of codes, codes over rings
\end{abstract}

\section{Introduction}
Following the observations in~\cite{FST-1992,HKCSS-1994} that several good nonlinear binary codes are closely related to linear codes over $\mathbb Z_4$ via the Gray map, codes over $\mathbb Z_4$ have been studied extensively. As the ring $\mathbb{F}_2+u\mathbb{F}_2$ shares some properties of $\mathbb{Z}_4$ and the finite field $\mathbb{F}_4$, codes over this ring and, more generally, over $\mathbb{F}_q+u\mathbb{F}_q$ have also played an important role in the study of codes over rings. Codes over $\mathbb{F}_q+u\mathbb{F}_q$ were first studied in~\cite{B-1997} with an application to modular lattices. In particular, self-dual codes over $\mathbb{F}_q+u\mathbb{F}_q$ have been extensively studied~\cite{BU-1999,DGHMS-1999,DGHS-1999,G-1996,H-2007} with classical results for self-dual codes over finite fields including invariant theory, weight enumerators, and mass formulas being generalized to codes over $\mathbb{F}_q+u\mathbb{F}_q$. Similarly, well-known families of classical codes, such as cyclic codes~\cite{AS-2007,BU-1999,UB-1999}, constacyclic codes~\cite{D-2009,D-2010,DNSS-2018,DNSY-2021}, and duadic codes~\cite{L-2001} have also been generalized to $\mathbb{F}_q+u\mathbb{F}_q$.

Recently, embedding methods have been studied as an approach to constructing new codes with good minimum distances. These methods construct a new code by appending columns to a generator matrix of a given code. A central problem in the study of embedding methods is to determine the minimum number of columns that must be appended so that the resulting code has a certain property, such as being self-orthogonal or LCD, or having a specified hull dimension. Kim et al.~\cite{KKL-2021} first investigated this problem. They provided an algorithm to obtain binary self-orthogonal codes by adding the minimum number of columns to a generator matrix of a linear code with dimension three and four. This result was extended to dimension five and six by Kim and Choi~\cite{KC-2022}. An et al.~\cite{AKKLW-arXiv1} completely determined the minimum number of columns required to embed a given binary linear code into a self-orthogonal code for all cases. In~\cite{A-2026}, the embedding approach was extended to LCD codes. Wang and Luo~\cite{WL-2026} further extended these results by completely determining the minimum number of columns required to embed a given linear code over an arbitrary finite field into a code with a prescribed hull dimension. Most recently, An et al.~\cite{AKS-arXiv} extended the embedding approach to codes over $\mathbb Z_4$. To this end, they also investigated shortest doubly even self-orthogonal embeddings of binary linear codes. The aim of this paper is to extend the embedding methods to linear codes over the ring $R=\mathbb F_q+u\mathbb F_q$.

In this paper, we study the exact length of a shortest self-orthogonal embedding and a shortest LCD embedding of linear codes over $R=\mathbb{F}_q+u\mathbb{F}_q$. Every matrix $A$ over $R$ can be uniquely decomposed into $A=A_0+uA_1$ for some matrices $A_0$ and $A_1$ over the finite field $\mathbb{F}_q$. In particular, the Gram matrix associated with a generator matrix of a linear code over $R$ decomposes into two symmetric matrices over $\mathbb{F}_q$. This allows us to solve the shortest self-orthogonal and LCD embedding length problems using the congruence theory of symmetric and alternate matrices over finite fields. Consequently, we determine the exact length of a shortest self-orthogonal embedding for linear codes over $\mathbb F_q+u\mathbb F_q$ in all cases. In particular, we show that every self-orthogonal code over $\mathbb{F}_q+u\mathbb{F}_q$ whose free rank is not zero is a shortest self-orthogonal embedding of other codes with smaller lengths. Furthermore, using the Witt theory over finite fields, we provide a construction method for obtaining all shortest self-orthogonal embeddings. For each of the cases where $q$ is even or odd, the length of a shortest self-orthogonal embedding splits into two cases. We provide an example for each of these four cases. Applying the construction method described above, we enumerate all shortest self-orthogonal embeddings in each example and exhibit one having the largest minimum distance among them. In some of these examples, the Gray images of the resulting shortest self-orthogonal embeddings are optimal codes over $\mathbb{F}_q$. For shortest LCD embeddings, we show that all such embeddings of a given code can be obtained by appending an arbitrary invertible matrix and an arbitrary matrix of prescribed sizes to its generator matrix. As in the self-orthogonal case, we apply this characterization to enumerate all shortest LCD embeddings and provide an example of finding one having the largest minimum distance among them.

This paper is organized as follows. In Section 2, we provide preliminaries on the ring $R=\mathbb{F}_q+u\mathbb{F}_q$ and codes over $R$. We also briefly review the Witt theory over finite fields. In Section 3, we present our results on shortest self-orthogonal embeddings of linear codes over $R$ with even characteristic. By dividing the codes into two cases, we classify the exact length of the shortest self-orthogonal embeddings for each case. In Section 4, we establish the corresponding results for linear codes over $R$ of odd characteristic, where the codes are likewise divided into two cases and the exact length of a shortest self-orthogonal embedding is determined in each case. In Section 5, we develop a method for constructing all possible shortest self-orthogonal embeddings. In Section 6, we present our results on shortest LCD embeddings. In Section 7, we provide examples illustrating each case of shortest self-orthogonal and shortest LCD embeddings. We conclude our paper in Section 8.

\section{Preliminaries}
We refer the reader to~\cite{book-HP,book-MS} for a detailed background on coding theory.

Let $\mathbb{F}_q$ be a finite field of order $q$ where $q$ is a prime power. The ring $R=\mathbb{F}_q+u\mathbb{F}_q$ is defined as
\[
R=\mathbb{F}_q+u\mathbb{F}_q=\mathbb{F}_q[u]/(u^2).
\]
The ring $R$ is a commutative local ring with maximal ideal $(u)$. Thus, a square matrix $A$ over $R$ is invertible if and only if $A$ modulo $(u)$ is invertible over $R/(u)\cong\mathbb{F}_q$.

A \textit{code} $\mathcal{C}$ of length $n$ over $R$ is a subset of $R^n$ and its elements are called \textit{codewords}. If $\mathcal{C}$ is an $R$-submodule of $R^n$, then we call $\mathcal{C}$ a \textit{linear code} over $R$. Throughout this paper, a code over $R$ is always assumed to be linear unless otherwise stated. A linear code $\mathcal{C}$ of length $n$ over $R$ contains a set of $k_1+k_2$ codewords $\mathbf{c}_1, \ldots, \mathbf{c}_{k_1}\in R^n\backslash(uR)^n$ and $\mathbf{c}_{k_1+1}, \ldots, \mathbf{c}_{k_1+k_2}\in (uR)^n$ such that every codeword in $\mathcal{C}$ has a unique representation of the form
\[
\sum_{i=1}^{k_1}a_i\mathbf{c}_i+\sum_{j=k_1+1}^{k_1+k_2}b_j\mathbf{c}_j
\]
for some $a_i\in R$ and $b_j\in \mathbb{F}_q$. A \textit{generator matrix} $G$ for $\mathcal{C}$ is a matrix whose rows are $\mathbf{c}_i$ for $1\le i\le k_1+k_2$. Here $\mathcal{C}$ is said to be of \textit{type $\{k_1,\,k_2\}$}. We call $k_1$ the \textit{(free) rank} of $\mathcal{C}$. If $k_2=0$, then $\mathcal{C}$ is called a \textit{free code}. For a code $\mathcal{C}$ over $R$, we denote a generator matrix for $\mathcal{C}$ by $G(\mathcal{C})$. Conversely, for a matrix $A$ over $R$, we denote the code generated by the rows of $A$ by $\mathcal{C}_A$.

Two codes $\mathcal{C}_1$ and $\mathcal{C}_2$ over $R$ are said to be \textit{(permutation) equivalent} if there is a permutation of columns $\sigma$ such that $\sigma\mathcal{C}_1=\mathcal{C}_2$. Up to equivalence, every code of type $\{k_1,\,k_2\}$ over $R$ has a generator matrix of the form
\[
\begin{pmatrix}
I_{k_1} & A & B_1+uB_2\\\mathcal{O} & uI_{k_2} & uD
\end{pmatrix}
\]
where $A,\,B_1,\,B_2$ and $D$ are matrices over $\mathbb{F}_q$. Such a matrix is called the \textit{standard generator matrix} of $\mathcal{C}$.

For a code $\mathcal{C}$ of length $n$ over $R$, the \textit{dual} $\mathcal{C}^\perp$ of $\mathcal{C}$ is defined as
\[
\mathcal{C}^\perp=\{\mathbf{x}\in R^n~|~\mathbf{x}\cdot \mathbf{c}=0~\mbox{for all }\mathbf{c}\in \mathcal{C}\}
\]
where $\cdot$ is the standard dot product. If the type of $\mathcal{C}$ is $\{k_1,\,k_2\}$, then the type of $\mathcal{C}^\perp$ is $\{n-k_1-k_2,\,k_2\}$. If $\mathcal{C}\subseteq\mathcal{C}^\perp$, then $\mathcal{C}$ is called a \textit{self-orthogonal code}. Moreover, if $\mathcal{C}=\mathcal{C}^\perp$, then it is called a \textit{self-dual code}. The \textit{hull} of $\mathcal{C}$ is defined as
\[
\operatorname{Hull}(\mathcal{C})=\mathcal{C}\cap\mathcal{C}^\perp.
\]
If $\mathcal{C}=\operatorname{Hull}(\mathcal{C})$, then $\mathcal{C}$ is self-orthogonal. A code $\mathcal{C}$ is called an \textit{LCD (linear complementary dual) code} if $\operatorname{Hull}(\mathcal{C})=\{0\}$. Let $G$ be a generator matrix of a code $\mathcal{C}$ over $R$. It is easy to check that $\mathcal{C}$ is self-orthogonal if and only if $GG^T=\mathcal{O}$. Bhowmick et al.~\cite{BFMBB-2020} showed that any LCD code over a finite commutative local Frobenius ring, and hence any LCD code over $R$, is free. They also showed that a free code over $R$ is LCD if and only if $GG^T$ is invertible.

The \textit{(Hamming) weight} of a vector $\mathbf{u}\in R^n$ is the number of nonzero coordinates in $\mathbf{u}$, and is denoted by $\operatorname{wt}(\mathbf{u})$. For two vectors $\mathbf{u}$ and $\mathbf{v}$ in $R^n$, the \textit{(Hamming) distance} between $\mathbf{u}$ and $\mathbf{v}$ is defined as
\[
d(\mathbf{u},\,\mathbf{v})=\operatorname{wt}(\mathbf{u}-\mathbf{v}).
\]
The \textit{minimum (Hamming) distance} of a code $\mathcal{C}$ over $R$ is the minimum of the distances between any two distinct codewords in $\mathcal{C}$.

The $\mathbb{F}_q$-linear map $\phi:R\to \mathbb{F}_q^2$ defined as
\[
\phi(a+ub)=(b, a+b)
\]
for $a, b\in \mathbb{F}_q$ is called the \textit{Gray map}. It extends componentwise to a map on $R^n$ as
\[
\phi(x_1,\ldots,x_n)
=\bigl(\phi(x_1),\ldots,\phi(x_n)\bigr).
\]
If $q$ is even, then Gray map preserves orthogonality, that is, it maps a self-orthogonal code over $R$ into a self-orthogonal code over $\mathbb{F}_q$.

For a code $\mathcal{C}$ over $R$, the \textit{residue code} of $\mathcal{C}$, denoted by $\text{Res}(\mathcal{C})$, is a code over $\mathbb{F}_q$ defined as
\[
\operatorname{Res}(\mathcal{C})=\mathcal{C}\pmod u.
\]

Let $\mathcal{C}$ be a code over $R$ with generator matrix $G$. A \textit{self-orthogonal embedding} $\tilde{\mathcal{C}}$ of $\mathcal{C}$ is a self-orthogonal code over $R$ generated by
\[
\tilde{G}=[G~|~B]
\]
for some matrix $B$ over $R$. A \textit{shortest self-orthogonal embedding} of $\mathcal{C}$ is a self-orthogonal embedding of $\mathcal{C}$ whose length is smallest among all self-orthogonal embeddings of $\mathcal{C}$. We denote the length of a shortest self-orthogonal embedding of $\mathcal{C}$ by $n_s(\mathcal{C})$.

Similarly, an \textit{LCD embedding} $\hat{\mathcal{C}}$ of $\mathcal{C}$ is an LCD code over $R$ generated by
\[
\hat{G}=[G~|~D]
\]
for some matrix $D$ over $R$. A \textit{shortest LCD embedding} of $\mathcal{C}$ is an LCD embedding of $\mathcal{C}$ whose length is smallest among them. The length of a shortest LCD embedding of $\mathcal{C}$ is denoted by $n_\ell(\mathcal{C})$.

For shortest self-orthogonal and shortest LCD embeddings of linear codes over finite fields, we refer the reader to~\cite{A-2026,AKKLW-arXiv1,WL-2026}.

Next, we briefly review symmetric bilinear forms and some basic notions from Witt theory that will be used throughout this paper. 

Let $V$ be a finite-dimensional vector space over a finite field $\mathbb{F}_q$, and $B:V\times V\to \mathbb{F}_q$ be a symmetric bilinear form on $V$. The \textit{radical} of $(V, B)$ is defined as
\[
\operatorname{Rad}(B)=\{\mathbf{u}\in V~|~B(\mathbf{u}, \mathbf{v})=0\mbox{ for all }\mathbf{v}\in V\}.
\]
If $\operatorname{Rad}(B)=\{0\}$, that is, if $B$ is non-degenerate on $V$, we call $(V, B)$ \textit{non-singular}. A subspace $W$ of $V$ is called \textit{non-singular} if the restriction $B_{W\times W}$ is non-degenerate. If $V$ is non-singular and $W$ is a non-singular subspace of $V$, then
\[
V=W\perp W^\perp
\]
where $W^\perp=\{\mathbf{u}\in V~|~B(\mathbf{u}, \mathbf{v})=0\mbox{ for all }\mathbf{v}\in W\}$. For two vector spaces $V_1,\,V_2$ equipped with symmetric bilinear forms $B_1$ and $B_2$, respectively, let $\varphi:V_1\to V_2$ be a linear isomorphism. If
\[
B_2(\varphi(\mathbf{u}),\,\varphi(\mathbf{v}))=B_1(\mathbf{u},\,\mathbf{v})
\]
for every $\mathbf{u},\,\mathbf{v}\in V_1$, then $\varphi$ is called an \textit{isometry}.

Every symmetric matrix $A\in M_n(\mathbb{F}_q)$ determines a symmetric bilinear
form $B_A$ on $\mathbb{F}_q^n$ by
\[
B_A(\mathbf{u},\mathbf{v})=\mathbf{u}A\mathbf{v}^T.
\]
Here, we define the \textit{radical} of $A$ as $\operatorname{Rad}(A)=\operatorname{Rad}(B_A)$. In this case,
\[
\operatorname{Rad}(A)=\ker A=\{\mathbf{x}\in \mathbb{F}_q^n~|~\mathbf{x}A=\mathbf{0}\}.
\]
Therefore, $(\mathbb{F}_q^n,\,B_A)$ is non-singular if and only if $A$ is invertible. Throughout this paper, we identify a symmetric matrix $A$ with its corresponding symmetric bilinear form $B_A$ whenever no confusion arises.

A bilinear form $B$ on $V$ is called \textit{alternate} if $B(\mathbf{u}, \mathbf{u})=0$ for every $\mathbf{u}\in V$. Otherwise, it is called \textit{non-alternate}. If $(V, B)$ is non-singular and $B$ is alternate, then we call $(V, B)$ a \textit{symplectic space}. A basis $e_1,\,f_1,\ldots,\,e_t,\,f_t$ of a symplectic space is called a \textit{symplectic basis} if
\[
B(e_i,\,e_j)=B(f_i,\,f_j)=0\quad\mbox{and}\quad B(e_i,\,f_j)=\delta_{ij}.
\]
A matrix $A$ is called \textit{alternate} if $A^T=-A$ and the diagonal elements of $A$ are all zero. It is easy to check that if $A$ is alternate, then $B_A$ is also alternate. Let $F$ be a field of characteristic two. Then alternate matrices over $F$ are precisely the symmetric matrices with zero diagonal. The following two theorems provide a classification of symmetric matrices over $\mathbb{F}_q$.

For an even integer $m$, define
\[
    J_m=
    \mathrm{diag}\left(
    \underbrace{
    \begin{pmatrix}
    0&1\\
    1&0
    \end{pmatrix},
    \ldots,
    \begin{pmatrix}
    0&1\\
    1&0
    \end{pmatrix}
    }_{m/2\text{ copies}}
    \right).
\]

\begin{theorem}\cite{A-1938}\label{cong_2}
    Let $A$ be a symmetric matrix of rank $r$ over $\mathbb{F}_q$ where $q$ is even. If $A$ is alternate, then $r$ is even and $A$ is congruent to $J_r\oplus\mathcal{O}$. Otherwise, if $A$ is non-alternate, then $A$ is congruent to $I_r\oplus \mathcal{O}$.
\end{theorem}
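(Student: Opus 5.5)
The plan is to pass to the bilinear form and split off the radical. Let $V=\mathbb{F}_q^n$ with the symmetric form $B_A$. Choose a complement $W$ of $\operatorname{Rad}(A)$, so $V=W\perp\operatorname{Rad}(A)$ and $\dim W=r$. The restriction of $B_A$ to $W$ is non-degenerate, and its Gram matrix $A'$ is an invertible symmetric $r\times r$ matrix. Writing $A$ in a basis adapted to this decomposition shows that $A$ is congruent to $A'\oplus\mathcal{O}$. Moreover, $B_A$ is alternate if and only if $B_{A'}$ is, since $B_A(\mathbf{w}+\mathbf{z},\mathbf{w}+\mathbf{z})=B_A(\mathbf{w},\mathbf{w})$ for $\mathbf{z}$ in the radical. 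So it suffices to treat the non-singular case $r=n$.

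\emph{Alternate case.} I will use induction on $n$ to build a symplectic basis. Pick $e\neq 0$. By non-degeneracy there is $f$ with $B(e,f)=1$. Because $B(e,e)=0$, the vectors $e$ and $f$ are independent, and on $\langle e,f\rangle$ the Gram matrix is $\begin{pmatrix}0&1\\1&0\end{pmatrix}$ (recall $-1=1$). This plane is non-singular, so $V=\langle e,f\rangle\perp\langle e,f\rangle^\perp$ with the complement again non-singular and alternate. Induction then gives that $n$ is even and $A\cong J_n$.

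\emph{Non-alternate case.} First I would produce an orthogonal basis. Choose $v$ with $B(v,v)=a\neq0$. Since $q$ is even, squaring is a bijection on $\mathbb{F}_q$, so $a=c^2$ and $v/c$ has norm $1$. Split $V=\langle v/c\rangle\perp v^\perp$ and continue while the complement is non-alternate.

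If the complement $U$ becomes alternate (and it is non-singular), I need to absorb it. Take a hyperbolic pair $e,f$ in $U$ together with a vector $w$ with $B(w,w)=1$ already split off. Then I would seek a diagonal basis of $\langle w,e,f\rangle$, for instance $w+e$, $w+f$, $w+e+f$. Their norms are $1,1,1$ (the cross terms vanish in characteristic two), and their pairwise products are $1+B(e,f)\cdot$(appropriate terms). I will check that one can rearrange to get three mutually orthogonal norm-$1$ vectors; e.g. $w+e+f$, $w+e$, $w+f$ give $B(w+e,w+f)=1+1=0$ and $B(w+e+f,w+e)=1+1=0$. So $\langle w\rangle\perp H\cong I_3$. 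Repeating this peels off every hyperbolic plane, which yields $A\cong I_n$.

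The main obstacle is this last absorption step, namely showing that $I_1\oplus J_2$ is congruent to $I_3$. I will verify it directly with the explicit basis above, including the check that the three vectors are linearly independent.
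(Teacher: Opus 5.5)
The paper gives no proof of this statement; it quotes it from Albert~\cite{A-1938} and uses it as a black box, so there is no internal argument to compare yours with. Your proposal is a correct, self-contained proof along standard lines, and it proves exactly the stated classification.

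\begin{itemize}
\item Splitting off the radical reduces the problem to the non-singular case. You correctly check that this reduction preserves alternateness.
\item The symplectic-basis induction settles the alternate case.
\item In the non-alternate case, the only field-specific input is that every element of $\mathbb{F}_q$ is a square. This is precisely what lets you normalize anisotropic vectors to norm $1$, and it is why the result is specific to finite (perfect) fields of characteristic two.
\end{itemize}

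Your absorption step is right. Suppose $w\perp\langle e,f\rangle$, $B(w,w)=1$, $B(e,f)=1$ and $B(e,e)=B(f,f)=0$. Then all three of $w+e$, $w+f$, $w+e+f$ have norm $1$. Their pairwise products are
$B(w+e,w+f)=B(w+e+f,w+e)=B(w+e+f,w+f)=1+1=0$.
Hence $I_1\oplus J_2$ is congruent to $I_3$. Linear independence is automatic, because the Gram matrix $I_3$ is invertible; alternatively, note that the three vectors sum to $w$.

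Two small points should be stated explicitly when you write this up:
\begin{itemize}
\item At least one norm-one vector $w$ has been split off before any absorption is needed, because the form is non-alternate.
\item After absorbing one hyperbolic plane $H$, any of the three new vectors can serve as $w$ for the next plane. This works because $\langle w\rangle\perp H$ is orthogonal to all the remaining hyperbolic planes.
\end{itemize}
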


\begin{theorem}\cite{G-2010}\label{congclass}
    Let $A$ be an invertible symmetric matrix of rank $r$ over $\mathbb{F}_q$ where $q$ is odd. Then $A$ is congruent to $I_r$ if $\det A\in (\mathbb{F}_q^\times)^2$ and is congruent to $I_{r-1}\oplus[z]$ for some $z\in \mathbb{F}_q^\times\backslash(\mathbb{F}_q^\times)^2$ if $\det A\in \mathbb{F}_q^\times\backslash(\mathbb{F}_q^\times)^2$.
\end{theorem}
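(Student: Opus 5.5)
This is the classical Gauss--Witt diagonalization over a finite field of odd order. The plan has two steps: first reduce $A$ to a diagonal matrix, then normalize all diagonal entries but one to $1$. The determinant class modulo squares is preserved under congruence, since $\det(PAP^T)=\det(P)^2\det A$. Hence the final form is $I_r$ or $I_{r-1}\oplus[z]$ according to whether $\det A$ is a square, and $z$ may be any fixed nonsquare, since any two nonsquares differ by a square factor.

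\emph{Diagonalization.} Because $q$ is odd, $2$ is invertible, and we argue by induction on $r$. If $A\neq\mathcal{O}$, some $\mathbf{u}$ has $B_A(\mathbf{u},\mathbf{u})\neq0$. Indeed, if $B_A(\mathbf{x},\mathbf{x})=0$ for all $\mathbf{x}$, then polarization gives
\[
2B_A(\mathbf{x},\mathbf{y})=B_A(\mathbf{x}+\mathbf{y},\mathbf{x}+\mathbf{y})-B_A(\mathbf{x},\mathbf{x})-B_A(\mathbf{y},\mathbf{y})=0,
\]
which would force $A=\mathcal{O}$. The line $W=\langle\mathbf{u}\rangle$ is non-singular, so $\mathbb{F}_q^r=W\perp W^\perp$ by the decomposition recalled above. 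The space $W^\perp$ is again non-singular of dimension $r-1$, and induction yields a basis in which $A$ is congruent to $\mathrm{diag}(a_1,\ldots,a_r)$ with all $a_i\neq0$.

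\emph{Key step: two-dimensional normalization.} The claim is that every non-singular binary form $\mathrm{diag}(a,b)$ represents $1$. The equation $ax^2+by^2=1$ has a solution because $\{ax^2\}$ and $\{1-by^2\}$ each have $(q+1)/2$ elements, so the two sets intersect by pigeonhole. Taking the resulting vector $\mathbf{v}$ with $B(\mathbf{v},\mathbf{v})=1$, we split off $\langle\mathbf{v}\rangle$ as before. This gives $\mathrm{diag}(a,b)\simeq\mathrm{diag}(1,c)$, and comparing determinants gives $c\equiv ab$ modulo squares.

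\emph{Iteration and finish.} Applying this step repeatedly to the pairs $(a_1,a_2)$, then $(c,a_3)$, and so on, yields $A\simeq I_{r-1}\oplus[d]$ with $d\equiv\det A$ modulo squares. If $d$ is a square, $d=s^2$, then scaling the last basis vector by $s^{-1}$ gives $I_r$. Otherwise $d=zs^2$ for the chosen nonsquare $z$, and the same rescaling gives $I_{r-1}\oplus[z]$.

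The main obstacle is the representation-of-$1$ step, which rests on the counting argument. Everything else is standard orthogonal splitting.
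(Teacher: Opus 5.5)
Your proof is correct. You diagonalize by splitting off anisotropic lines (using that $2$ is invertible), show via the pigeonhole count on $\{ax^2\}$ and $\{1-by^2\}$ that every nonsingular binary form over $\mathbb{F}_q$ represents $1$, and then track $\det A$ modulo $(\mathbb{F}_q^\times)^2$. The paper does not prove this statement itself but quotes it from the literature, and your argument is the standard textbook proof of this classification, so there is nothing to add.
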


We briefly introduce some notions related to Witt theory. For more details, we refer the reader to~\cite{L-2005}.

Let $\mathbf{v}$ be a nonzero vector in a vector space $V$ with symmetric bilinear form $B$. Then $\mathbf{v}$ is called an \textit{isotropic vector} if $B(\mathbf{v},\,\mathbf{v})=0$, and is called an \textit{anisotropic vector} otherwise. $(V, B)$ is said to be \textit{isotropic} if it contains an isotropic vector, and is said to be \textit{anisotropic} otherwise. If all nonzero vectors of $(V, B)$ are isotropic, then it is said to be \textit{totally isotropic}. A totally isotropic subspace $W$ of $V$ is called a \textit{maximal totally isotropic subspace} if there is no totally isotropic subspace of $V$ that properly contains $W$.

\begin{theorem}\cite{L-2005}\label{Witt_basis}
Let $F$ be a field such that $\operatorname{char}F\ne 2$ and let $V$ be a non-singular vector space over $F$ equipped with a symmetric bilinear form $B$. Let $(U, B|_U)$ be a totally isotropic subspace of dimension $d$ with basis $u_1, \ldots, u_d$. Then there exists a totally isotropic subspace $(W, B|_W)$ of the same dimension with basis $v_1, \ldots, v_d$ such that
\[
B(u_i,\,v_j)=\delta_{ij}
\]
where $\delta_{ij}$ is the Kronecker delta.
\end{theorem}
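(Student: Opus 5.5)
We prove Theorem~\ref{Witt_basis} by induction on $d$, the dimension of the totally isotropic subspace $U$. The engine of the induction is a single-step claim. Given a totally isotropic $U$ with basis $u_1,\ldots,u_d$, we find a vector $v_1$ with
\[
B(v_1,v_1)=0,\qquad B(u_1,v_1)=1,\qquad B(u_i,v_1)=0 \ \text{ for } i\ge 2.
\]
We then pass to the orthogonal complement of the hyperbolic plane $H_1=\langle u_1,v_1\rangle$ and repeat.

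To produce $v_1$, we use non-singularity of $V$. The map $V\to V^*$, $\mathbf{x}\mapsto B(\cdot,\mathbf{x})$, is an isomorphism. Since $u_1,\ldots,u_d$ are linearly independent, the functionals $B(u_i,\cdot)$ are linearly independent. Hence there is $w\in V$ with $B(u_1,w)=1$ and $B(u_i,w)=0$ for $i\ge 2$.

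The vector $w$ need not be isotropic, so we correct it. Set $v_1=w-\tfrac12 B(w,w)\,u_1$; this step uses $\operatorname{char}F\ne 2$. Because $B(u_1,u_1)=0$, we get
\[
B(v_1,v_1)=B(w,w)-B(w,w)\,B(u_1,w)=0.
\]
The pairings with the $u_i$ are unchanged because $U$ is totally isotropic.

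Next, $H_1=\langle u_1,v_1\rangle$ has Gram matrix $\begin{pmatrix}0&1\\1&0\end{pmatrix}$, which is invertible. So $H_1$ is non-singular and $V=H_1\perp H_1^\perp$, with $H_1^\perp$ non-singular. The vectors $u_2,\ldots,u_d$ are orthogonal to $u_1$ (by isotropy of $U$) and to $v_1$ (by construction). Hence they lie in $H_1^\perp$ and span a totally isotropic subspace of dimension $d-1$ there. The induction hypothesis applied in $H_1^\perp$ gives $v_2,\ldots,v_d\in H_1^\perp$ with $B(u_i,v_j)=\delta_{ij}$ for $i,j\ge 2$, and with $\langle v_2,\ldots,v_d\rangle$ totally isotropic.

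It remains to assemble the pieces. Each $v_j$ with $j\ge 2$ is orthogonal to $u_1$ and to $v_1$. This gives $B(u_1,v_j)=0$ for $j\ge2$ and $B(v_1,v_j)=0$. Together with $B(v_1,v_1)=0$, these show that $W=\langle v_1,\ldots,v_d\rangle$ is totally isotropic. Also $B(u_i,v_j)=\delta_{ij}$ holds for all $i,j$. The $v_j$ are linearly independent because the matrix $(B(u_i,v_j))=I_d$ is nonsingular, so $\dim W=d$.

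The main obstacle is the isotropic correction of $w$ while preserving the pairings with $U$. This is exactly the point where $\operatorname{char}F\ne2$ is needed. The rest is the standard orthogonal-splitting argument, using the fact recalled in the preliminaries that a non-singular subspace splits off orthogonally.
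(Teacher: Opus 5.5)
Your proof is correct. The paper gives no proof of its own and simply defers to \cite{L-2005}, and your induction is the standard argument for this fact: find a partner of $u_1$ orthogonal to $u_2,\ldots,u_d$, make it isotropic via $w-\tfrac12 B(w,w)u_1$, split off the resulting hyperbolic plane, and recurse in its orthogonal complement.

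One small correction to your closing remark: the correction step is not the only place $\operatorname{char}F\ne 2$ is used. The paper defines totally isotropic as $B(\mathbf{x},\mathbf{x})=0$ for all $\mathbf{x}$, so passing to $B(u_i,u_j)=0$ and $B(v_i,v_j)=0$ for $i\ne j$ is polarization, which also requires $\operatorname{char}F\ne 2$.
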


\begin{theorem}[Witt's Extension Theorem~\cite{P-1964}]\label{witt_e}
    Let $V$ be a vector space over a finite field $F$ with symmetric bilinear form $B$ on $V$. Let $B$ be non-singular, non-alternate, and let $U_1$ and $U_2$ be subspaces of $V$. Let $\varphi:(U_1, B|_{U_1})\to (U_2, B|_{U_2})$ be an isometry. If $\operatorname{char}(F)\ne 2$, then $\varphi$ extends to an isometry of $V$. Otherwise, if $\operatorname{char}(F)= 2$, then $\varphi$ extends to an isometry of $V$ if and only if the following holds:
    \begin{enumerate}
        \item $h\in U_1$ if and only if $h\in U_2$, where $h=\sum_ie_i$ for an orthonormal basis $\{e_i\}_i$ of $V$.
        \item If $h\in U_1$, then $\varphi(h)=h$.
    \end{enumerate}
\end{theorem}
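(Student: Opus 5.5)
The plan is to treat the two characteristics separately, with the same overall strategy in each. First reduce to the case where $U_1$ is non-singular, or more generally to a standard position. Then extend the isometry using an explicit generating family of isometries in odd characteristic, and the symplectic structure in characteristic two.

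\textbf{Odd characteristic.} I would argue by induction on $\dim U_1$.
\begin{enumerate}
\item \emph{Degenerate case.} If $U_1$ is degenerate, write $U_1=\operatorname{Rad}(U_1)\perp U_1'$ with $U_1'$ non-singular, and let $u_1,\dots,u_d$ be a basis of $\operatorname{Rad}(U_1)$.
\item \emph{Enlarging $U_1$.} Apply Theorem~\ref{Witt_basis} inside the non-singular space $(U_1')^\perp$ to obtain vectors $v_1,\dots,v_d$ with $B(u_i,v_j)=\delta_{ij}$ and $B(v_i,v_j)=0$. Then $\bar U_1=U_1+\langle v_1,\dots,v_d\rangle$ is non-singular.
\item \emph{Enlarging $\varphi$.} Do the same on the $U_2$ side. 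Extend $\varphi$ to an isometry $\bar U_1\to \bar U_2$ by sending $v_i\mapsto v_i'$, where $v_i'$ is the corresponding vector for $\varphi(u_i)$.
\item \emph{Non-singular case.} Now $U_1$ is non-singular. Pick an anisotropic $x\in U_1$, which exists since $B|_{U_1}$ is non-alternate in characteristic $\ne 2$, and set $y=\varphi(x)$.
\item \emph{Moving $x$ to $y$.} Since $B(x,x)=B(y,y)$, one of $x-y$ and $x+y$ is anisotropic, because $B(x-y,x-y)+B(x+y,x+y)=4B(x,x)\neq 0$. A reflection $\tau_{x-y}$, or $\tau_{x+y}$ composed with $\tau_y$, then gives a global isometry $\sigma$ with $\sigma(x)=y$.
\item \emph{Induction.} Apply the induction hypothesis to $\sigma^{-1}\varphi$ restricted to $x^\perp\cap U_1$, working inside $x^\perp$, and glue the results.
\end{enumerate}

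\textbf{Characteristic two: the vector $h$.} Since $V$ is non-singular and non-alternate, Theorem~\ref{cong_2} gives an orthonormal basis $\{e_i\}$. For $x=\sum x_ie_i$ we have $B(x,x)=\sum x_i^2=(\sum x_i)^2=B(x,h)^2$. Non-singularity then shows that $h$ is the \emph{unique} vector with $B(x,x)=B(x,h)^2$ for all $x$.
\begin{enumerate}
\item \emph{Necessity.} Any isometry $g$ of $V$ satisfies $B(x,x)=B(gx,gx)=B(gx,gh)^2$. Hence $g^{-1}(gh)$ satisfies the defining property, so $gh=h$ by uniqueness. Consequently, if $\varphi$ extends to $g$, then $h\in U_1$ forces $\varphi(h)=h\in U_2$. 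Applying the same argument to $\varphi^{-1}$ gives the converse.
\item \emph{Reduction to a symplectic problem.} The hyperplane $H=h^\perp$ is exactly the set of vectors $x$ with $B(x,x)=0$, so $B|_H$ is alternate. Moreover, any isometry between subspaces maps $U_1\cap H$ onto $U_2\cap H$. The idea is to reduce to the symplectic Witt theorem: in an alternate space, isometries between subspaces extend, again via the hyperbolic-pair completion of step 2 and transvections.
\item \emph{Case $h\in U_1$.} Here $\varphi(h)=h$. Work in $h^\perp/\langle h\rangle$ when $B(h,h)=0$, i.e.\ when $n$ is even, and in $h^\perp$, which is non-singular and symplectic, when $n$ is odd.
\item \emph{Case $h\notin U_1,U_2$.} Enlarge $U_1$ to $U_1+\langle h\rangle$ and define $\varphi(h)=h$. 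The point to verify is that this is still an isometry. It is, because $B(u,h)=\sqrt{B(u,u)}=\sqrt{B(\varphi u,\varphi u)}=B(\varphi u,h)$, using that square roots are unique in a finite field of characteristic two.
\item \emph{Completing the extension.} We are now reduced to the case $h\in U_1$. Isometries of $V$ fixing $h$ correspond to isometries of the alternate space $h^\perp$ compatible with $h$. Extend there with symplectic transvections $x\mapsto x+aB(x,w)w$ with $w\in h^\perp$, which fix $h$, and finally recover the extension on all of $V=h^\perp+\langle e\rangle$ for some $e$ with $B(e,h)\neq 0$.
\end{enumerate}

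The main obstacle is this last characteristic-two step. One must carefully handle whether $h$ is isotropic, i.e.\ the parity of $\dim V$, and verify that the symplectic extension inside $h^\perp$ lifts uniquely to an isometry of $V$. I would first try to construct the lift explicitly by choosing $e$ and its image $e'$ with matching values $B(e,\cdot)$ on $h^\perp$ and $B(e,e)=B(e',e')$. Both values are forced by the relation $B(x,x)=B(x,h)^2$.
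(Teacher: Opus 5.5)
The paper does not prove this theorem; it quotes it from Pless. Your argument therefore has to stand on its own.

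\textbf{What is correct.} The odd-characteristic part (hyperbolic completion of $\operatorname{Rad}(U_1)$, then $\tau_{x-y}$ or $\tau_y\tau_{x+y}$, and induction in $x^\perp$) is the standard proof and is correct. In characteristic two, the following are also correct:
\begin{itemize}
\item the characterization of $h$ by $B(x,x)=B(x,h)^2$, which also shows that $h$ does not depend on the orthonormal basis;
\item the necessity of (1)--(2);
\item the reduction of the case $h\notin U_1,U_2$ to the case $h\in U_1$.
\end{itemize}
In the necessity step, the displayed chain should read $B(x,x)=B(gx,h)^2=B(x,g^{-1}h)^2$, whence $g^{-1}h=h$. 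As you wrote it, the chain is circular.

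\textbf{The gap.} It lies in the final step when $n=\dim V$ is even and $U_1\not\subseteq h^\perp$. This is the typical situation after you adjoin $h$, not an edge case. You extend only $\varphi|_{U_1\cap h^\perp}$ to an isometry $\psi$ of $h^\perp$ and then lift to $V$.

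For $n$ even the lift is not unique, contrary to your claim. The vector $e'$ is determined by $\psi$ only modulo $(h^\perp)^\perp=\langle h\rangle$, and the lifts differ by the isometries $x\mapsto x+cB(x,h)h$.

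More seriously, when $U_1\not\subseteq h^\perp$ the vector $e$ must be taken in $U_1$. Then $e'=\varphi(e)$ is forced, so the requirement $B(\varphi(e),\psi(z))=B(e,z)$ for all $z\in h^\perp$ becomes a condition on $\psi$ outside $U_1$. An arbitrary symplectic extension need not satisfy it.

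For example, in $\mathbb{F}_2^4$ with the dot product, take standard basis vectors $e_1,e_2$, $U_1=\langle h,e_1\rangle$, $\varphi(h)=h$ and $\varphi(e_1)=e_2$. Then $U_1\cap h^\perp=\langle h\rangle$, and $\psi=\mathrm{id}_{h^\perp}$ is a legitimate extension of $\varphi|_{U_1\cap h^\perp}$. But every lift of $\psi$ sends $e_1$ to $e_1$ or $e_1+h$, never to $e_2$. Note that $\varphi$ does extend, by swapping the first two coordinates. So the procedure, not the theorem, fails.

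\textbf{The missing idea.} Split off a non-singular plane.
\begin{itemize}
\item Choose $e\in U_1$ with $B(e,h)=1$. Then $B(e,e)=B(e,h)^2=1$ and $B(h,h)=0$, so $P=\langle e,h\rangle$ is non-singular.
\item $P^\perp$ is non-singular and alternate, since $B(x,x)=B(x,h)^2=0$ on it. Hence it is symplectic of dimension $n-2$. The same holds for $\varphi(P)=\langle\varphi(e),h\rangle$.
\item Now $U_1=P\perp(U_1\cap P^\perp)$. Compose $\varphi|_{U_1\cap P^\perp}$ with any isometry $\varphi(P)^\perp\to P^\perp$; this reduces you to the symplectic Witt theorem inside $P^\perp$. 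Glue the result with $\varphi|_P$.
\end{itemize}
In your language, $\psi$ must be chosen to carry $e^\perp\cap h^\perp$ onto $\varphi(e)^\perp\cap h^\perp$.

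Your lifting argument is fine as it stands in two cases:
\begin{itemize}
\item $n$ odd, where $V=\langle h\rangle\perp h^\perp$ and $U_1=\langle h\rangle\perp(U_1\cap h^\perp)$;
\item $n$ even with $U_1\subseteq h^\perp$, where $e$ can be chosen outside $U_1$, so the choice of $e'$ is free.
\end{itemize}
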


Let $V$ be a two-dimensional vector space over a field $F$ with $\operatorname{char}F\ne 2$, and $B$ be a symmetric bilinear form on $V$. If $(V, B)$ is non-singular and isotropic, then it is called the \textit{hyperbolic plane}. The orthogonal sum of hyperbolic planes is called \textit{hyperbolic}.

\begin{theorem}[Witt's Decomposition Theorem]\label{Witt}
Let $V$ be a vector space over a field $F$ with odd characteristic and $B$ be a symmetric bilinear form on $V$. Then $V$ splits into an orthogonal sum
\begin{equation}\label{Wittdecom}
(V_t, B|_{V_t})\perp (V_h, B|_{V_h})\perp (V_a, B|_{V_a})
\end{equation}
where $V_t$ is totally isotropic, $V_h$ is hyperbolic and $V_a$ is anisotropic.
\end{theorem}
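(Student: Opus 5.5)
We prove the decomposition by splitting off the radical and then peeling off hyperbolic planes one at a time by induction on dimension. The only place odd characteristic enters is in making the second vector of each hyperbolic pair isotropic.

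\emph{Step 1: the radical.} Set $V_t=\operatorname{Rad}(B)$. Every vector of $V_t$ is orthogonal to all of $V$, in particular to itself, so $V_t$ is totally isotropic. Choose any vector-space complement $V'$ of $V_t$ in $V$. Then $V=V_t\perp V'$ is an orthogonal sum, because $V_t$ is orthogonal to everything. Moreover $B|_{V'}$ is non-degenerate: if $\mathbf{x}\in V'$ is orthogonal to all of $V'$, it is also orthogonal to $V_t$, hence to all of $V$. So $\mathbf{x}\in V_t\cap V'=\{0\}$. This reduces the problem to showing that a non-singular space $V'$ splits as $V_h\perp V_a$ with $V_h$ hyperbolic and $V_a$ anisotropic.

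\emph{Step 2: induction on $\dim V'$.} If $V'$ is anisotropic (this includes $V'=0$), take $V_h=0$ and $V_a=V'$. Otherwise pick an isotropic vector $\mathbf{v}\in V'$.
\begin{itemize}
\item Since $V'$ is non-singular, there is $\mathbf{w}\in V'$ with $B(\mathbf{v},\mathbf{w})=1$. This is also the case $d=1$ of Theorem~\ref{Witt_basis}.
\item Because $\operatorname{char}F\neq 2$, we may put $\mathbf{w}'=\mathbf{w}-\tfrac{1}{2}B(\mathbf{w},\mathbf{w})\mathbf{v}$. Then $B(\mathbf{w}',\mathbf{w}')=B(\mathbf{w},\mathbf{w})-B(\mathbf{w},\mathbf{w})\cdot 1=0$ and $B(\mathbf{v},\mathbf{w}')=1$.
\item Let $H=\langle \mathbf{v},\mathbf{w}'\rangle$. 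Its Gram matrix is $\begin{pmatrix}0&1\\1&0\end{pmatrix}$, which is invertible, and $\mathbf{v}$ is isotropic. Hence $H$ is a hyperbolic plane; in particular $\mathbf{v},\mathbf{w}'$ are independent, so $\dim H=2$.
\end{itemize}

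\emph{Step 3: splitting and recursion.} Since $H$ is a non-singular subspace of the non-singular space $V'$, the splitting recalled in the preliminaries gives $V'=H\perp H^\perp$. The complement $H^\perp$ is again non-singular: a vector in the radical of $H^\perp$ is orthogonal to both $H$ and $H^\perp$, hence to all of $V'$, so it is zero. As $\dim H^\perp=\dim V'-2$, the induction hypothesis gives $H^\perp=V_h'\perp V_a$ with $V_h'$ hyperbolic and $V_a$ anisotropic. Setting $V_h=H\perp V_h'$, which is an orthogonal sum of hyperbolic planes, yields the decomposition~\eqref{Wittdecom}.

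\emph{Main obstacle.} The delicate step is making $\mathbf{w}'$ isotropic in Step 2. This is exactly where division by $2$ is required, and it is the reason for the hypothesis on the characteristic. Once the hyperbolic plane $H$ is built, the rest is the standard inheritance of non-singularity by orthogonal complements.
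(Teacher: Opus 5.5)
Your proof is correct for finite-dimensional $V$, which is the paper's standing assumption. The paper gives no proof of this background theorem and only refers to its Witt-theory source. There it is proved along the same lines as yours: split off $\operatorname{Rad}(B)$, then inductively split off hyperbolic planes $H$ using $V'=H\perp H^\perp$.

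One small correction to your closing remark. Under the paper's definition, a hyperbolic plane is any two-dimensional, non-singular, isotropic space, so the adjustment $\mathbf{w}\mapsto\mathbf{w}'$ is not needed. Already $\langle\mathbf{v},\mathbf{w}\rangle$ has Gram matrix $\begin{pmatrix}0&1\\1&B(\mathbf{w},\mathbf{w})\end{pmatrix}$, whose determinant is $-1$, and it contains the isotropic vector $\mathbf{v}$.

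What your adjustment buys is the normal form $\begin{pmatrix}0&1\\1&0\end{pmatrix}$, i.e.\ agreement with the usual definition via hyperbolic pairs. That agreement is where $\operatorname{char}F\neq 2$ genuinely enters. In characteristic $2$, the plane with Gram matrix $\begin{pmatrix}0&1\\1&1\end{pmatrix}$ is non-singular and isotropic but non-alternate, so it has no such basis.
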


If $V_h$ is the orthogonal sum of $w$ hyperbolic planes in the decomposition of Equation~\ref{Wittdecom}, then $w$ is called the \textit{Witt index} of $(V, B)$. 

\begin{theorem}
Let $F$ be a field such that $\operatorname{char}F\ne 2$ and let $V$ be a non-singular vector space over $F$ equipped with a symmetric bilinear form $B$. Then the Witt index $w$ of $V$ equals the dimension of any maximal totally isotropic subspace of $V$.
\end{theorem}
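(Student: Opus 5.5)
Everything stated before this theorem is classical, so the natural route is to build the Witt index from Witt's Decomposition Theorem (Theorem~\ref{Witt}) and then compare it with totally isotropic subspaces via Theorem~\ref{Witt_basis}. Since $V$ is non-singular, the totally isotropic summand $V_t$ in the decomposition~(\ref{Wittdecom}) lies in $\operatorname{Rad}(B)=\{0\}$, so $V=V_h\perp V_a$ with $V_h$ an orthogonal sum of $w$ hyperbolic planes $H_1,\ldots,H_w$. The proof then has two parts: (i) exhibit a maximal totally isotropic subspace of dimension $w$; (ii) show that every maximal totally isotropic subspace has dimension exactly $w$.

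For (i), each hyperbolic plane $H_i$ is non-singular and isotropic. Because $\operatorname{char}F\ne 2$, it has a basis $e_i,f_i$ with $B(e_i,e_i)=B(f_i,f_i)=0$ and $B(e_i,f_i)=1$. To see this, take an isotropic $e_i$; non-degeneracy gives $x$ with $B(e_i,x)=1$; then set $f_i=x-\tfrac12 B(x,x)e_i$. Hence $U_0=\langle e_1,\ldots,e_w\rangle$ is totally isotropic of dimension $w$. It is maximal because any totally isotropic $U\supseteq U_0$ lies in $U_0^\perp=U_0\perp V_a$. If some $u=u_0+a\in U$ had $a\ne 0$, then $B(u,u)=B(a,a)\ne 0$ since $V_a$ is anisotropic, which is impossible. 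So $U=U_0$.

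For (ii), I would argue by induction on $w$ (or equivalently on $\dim V$). Let $U$ be any maximal totally isotropic subspace of dimension $d$. First, $d\ge 1$ whenever $w\ge 1$, since $e_1$ is isotropic. Pick a nonzero $u_1\in U$. Theorem~\ref{Witt_basis} gives an isotropic $v_1$ with $B(u_1,v_1)=1$, so $P=\langle u_1,v_1\rangle$ is a hyperbolic plane and $V=P\perp P^\perp$ with $P^\perp$ non-singular. Then $U\cap P^\perp$ has dimension $d-1$, because $U\subseteq u_1^\perp$ and $v_1\notin u_1^\perp$, so the projection works. This subspace is totally isotropic in $P^\perp$, and it is maximal there: if $U'\supsetneq U\cap P^\perp$ were totally isotropic in $P^\perp$, then $U'+\langle u_1\rangle$ would be totally isotropic and would strictly contain $U$, since $U=\langle u_1\rangle\oplus(U\cap P^\perp)$. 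For the induction to close, I also need the Witt index of $P^\perp$ to be $w-1$, i.e.\ the number of hyperbolic planes in any Witt decomposition to be well defined.

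The main obstacle is exactly that well-definedness, i.e.\ Witt cancellation. I would obtain it from Witt's Extension Theorem~\ref{witt_e}, which applies in odd characteristic with no extra conditions. The map sending $e_1,f_1$ to $u_1,v_1$ is an isometry $H_1\to P$, so it extends to an isometry of $V$; that isometry maps $H_1^\perp$ onto $P^\perp$. Therefore $P^\perp\cong H_2\perp\cdots\perp H_w\perp V_a$ has Witt index $w-1$, and the induction hypothesis gives $d-1=w-1$. The base case $w=0$ is the anisotropic case, where the only totally isotropic subspace is $\{0\}$. Theorem~\ref{witt_e} is stated for finite fields, which covers our application; for a general field one would invoke the general Witt extension theorem from~\cite{L-2005} in the same way.
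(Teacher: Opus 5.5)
Your proof is correct. The paper gives no proof of this statement; it is quoted from \cite{L-2005} as background. So the natural comparison is with the standard textbook argument.

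That argument applies Theorem~\ref{Witt_basis} to a maximal totally isotropic $U$ with basis $u_1,\ldots,u_d$ all at once. The resulting $W$ makes $U\oplus W$ a non-singular orthogonal sum of $d$ hyperbolic planes, so $V=(U\oplus W)\perp(U\oplus W)^\perp$. Maximality of $U$ forces $(U\oplus W)^\perp$ to be anisotropic, since an isotropic $x$ there would make $U+\langle x\rangle$ totally isotropic. This gives a Witt decomposition with $d$ hyperbolic planes, and uniqueness of the Witt index (Witt cancellation) yields $d=w$.

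Your argument is the same idea unrolled one hyperbolic plane at a time. Maximality of $U$ is carried down through the maximality of $U\cap P^\perp$ in $P^\perp$, instead of being converted into anisotropy of a complement. Your extension-theorem step is precisely one instance of cancellation: since $P\cong H_1$, cancellation in $P\perp P^\perp=H_1\perp H_1^\perp$ gives $P^\perp\cong H_1^\perp$. You invoke this once per stage.

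The one-shot version is shorter and uses cancellation only once. Yours needs Theorem~\ref{Witt_basis} only for a single vector. It also makes visible exactly where cancellation, equivalently the well-definedness of $w$ in the paper's definition, is required.

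The maximality of $U_0$ in your part (i) is not needed for the statement. Part (ii) already covers every maximal totally isotropic subspace, and (ii) only uses that $H_1$ contains an isotropic vector.

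Finally, a caveat in the same spirit as your remark about Theorem~\ref{witt_e}. The paper states Theorem~\ref{Witt} only in odd characteristic. So for $\operatorname{char}F=0$ you also need the general form of the decomposition theorem from \cite{L-2005}, not just of the extension theorem.
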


\section{Shortest self-orthogonal embeddings of linear codes over {\it R} with even characteristic}\label{sec3}
In this section, we assume that $R=\mathbb{F}_q+u\mathbb{F}_q$ where $q=2^s$ for some $s\ge 1$ otherwise state.

\begin{proposition}\label{soexist}
    Let $\mathcal{C}$ be a code of length $n$ over $R=\mathbb{F}_q+u\mathbb{F}_q$, where $q$ is even, and suppose that $\mathcal{C}$ is not self-orthogonal. Then $n_s(\mathcal{C})\le 2n$.
\end{proposition}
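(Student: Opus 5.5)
Since $q$ is even, $-1=1$ in $\mathbb{F}_q$, and the natural move is to append a copy of $G$ itself. Let $G$ be a $k\times n$ generator matrix of $\mathcal{C}$ (with $k=k_1+k_2$) and set $\tilde G=[G~|~G]$. This matrix has length $2n$, and its Gram matrix is
\[
\tilde G\tilde G^T=GG^T+GG^T=2GG^T=\mathcal{O},
\]
because $2=0$ in $R$: the characteristic of $R=\mathbb{F}_q[u]/(u^2)$ equals that of $\mathbb{F}_q$, namely $2$. Hence the code $\mathcal{C}_{\tilde G}$ is self-orthogonal, since a code is self-orthogonal if and only if the Gram matrix of a generator matrix is zero. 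As $\tilde G$ has the form $[G~|~B]$ with $B=G$, this code is a self-orthogonal embedding of $\mathcal{C}$ of length $2n$, and therefore $n_s(\mathcal{C})\le 2n$.

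The only step needing care is the criterion ``$\mathcal{C}_{\tilde G}$ is self-orthogonal iff $\tilde G\tilde G^T=\mathcal{O}$'', where $\tilde G$ must generate the code as an $R$-module. For rows of the form $u\mathbf{c}$ this is automatic. The code $\mathcal{C}_{\tilde G}$ is simply the $R$-span of the rows of $\tilde G$. Pairwise orthogonality of the generating rows therefore gives $\mathcal{C}_{\tilde G}\subseteq\mathcal{C}_{\tilde G}^\perp$ by bilinearity, regardless of whether the rows are minimal in the type sense. So no obstacle arises.

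The hypothesis that $\mathcal{C}$ is not self-orthogonal is not needed for the bound itself. If $\mathcal{C}$ were self-orthogonal, then $n_s(\mathcal{C})=n\le 2n$ trivially. Presumably the hypothesis is included only so that embeddings with $B$ nonempty are the relevant ones.
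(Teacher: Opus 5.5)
Your proof is correct and is the paper's own argument: append a second copy of $G$ and use $\operatorname{char}R=2$, so that $[G~|~G][G~|~G]^T=2GG^T=\mathcal{O}$. Your side remarks are also accurate. Pairwise orthogonality of the generating rows suffices for self-orthogonality, and the bound does not actually need the hypothesis that $\mathcal{C}$ is not self-orthogonal.
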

\begin{proof}
    For a generator matrix $G$ of $\mathcal{C}$, define $G'=[G~|~G]$. Since
    \[
    G'(G')^T=GG^T+GG^T=\mathcal{O},
    \]
    $G'$ generates a self-orthogonal embedding of $\mathcal{C}$.
\end{proof}
This Proposition tells us that any linear code over $R$ whish is not self-orthogonal has a self-orthogonal embedding.

For a $k\times k$ symmetric matrix $A$ over a ring $R$ (not necessarily $\operatorname{char}R= 2$ here), define $\nu(A)$ by
\[
\nu(A) = \begin{cases} 
0, & \text{if }A=\mathcal{O},\\
\min \{ m\in\mathbb{Z}_{> 0} \mid A + BB^T = \mathcal{O}, \, B \in M_{k \times m}(R) \}, & \text{if such } B \text{ exists,} \\ 
\infty, & \text{otherwise.} 
\end{cases}
\]
For a linear code $\mathcal{C}$ of length $n$ over $R$, let $G$ be a generator matrix of $\mathcal{C}$. Then $\nu(GG^T)$ is the minimal number of columns required to embed $\mathcal{C}$ into a self-orthogonal code, i.e.,
\[
n_s(\mathcal{C})-n=\nu(GG^T).
\]
Also, Proposition~\ref{soexist} show that $\nu(GG^T)$ is always finite.

We note that Propositions~\ref{embed-cong} and~\ref{embed-inequal} below do not require the
assumption that $\operatorname{char}R=2$.

\begin{proposition}\label{embed-cong}
    Let $A$ and $A'$ be congruent matrices over a ring $R$. Then $\nu(A)=\nu(A')$.
\end{proposition}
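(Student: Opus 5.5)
The plan is to show that the set of admissible column counts $m$ in the definition of $\nu$ is the same for $A$ and for $A'$. Congruence of $A$ and $A'$ means there is an invertible $k\times k$ matrix $P$ over $R$ with $A'=PAP^T$. Congruence is symmetric, since $A=P^{-1}A'(P^{-1})^T$. It therefore suffices to prove the one-sided inequality $\nu(A')\le\nu(A)$ and then apply it with the roles of $A$ and $A'$ exchanged.

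First, the degenerate values. If $A=\mathcal{O}$, then $A'=P\mathcal{O}P^T=\mathcal{O}$, and conversely, because $P$ is invertible. Hence $\nu(A)=0$ exactly when $\nu(A')=0$. If $\nu(A)=\infty$, the inequality $\nu(A')\le\nu(A)$ holds trivially.

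Now suppose $A\neq\mathcal{O}$ and $B\in M_{k\times m}(R)$ satisfies $A+BB^T=\mathcal{O}$. Set $B'=PB\in M_{k\times m}(R)$. Then
\[
A'+B'(B')^T=PAP^T+PBB^TP^T=P\bigl(A+BB^T\bigr)P^T=\mathcal{O}.
\]
So every $m$ that works for $A$ also works for $A'$, giving $\nu(A')\le\nu(A)$, including the case where the set of admissible $m$ is empty. Exchanging $A$ and $A'$ via $P^{-1}$ yields equality.

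There is no genuine obstacle. The only points needing care are the separate treatment of the $0$ and $\infty$ conventions in the definition of $\nu$, and the fact that the argument uses nothing beyond associativity of matrix multiplication over the commutative ring $R$, namely $(PB)^T=B^TP^T$. This is why no assumption on $\operatorname{char}R$ is required.
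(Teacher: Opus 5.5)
Your proof is correct and follows essentially the same route as the paper: a witness $B$ for $A$ becomes the witness $PB$ for $A'=PAP^T$, which gives $\nu(A')\le\nu(A)$, and applying the same argument with $P^{-1}$ gives the reverse inequality. Your separate treatment of the $\nu=0$ convention tidies up a case the paper leaves implicit, and you are right that the identity $(PB)^T=B^TP^T$ is where commutativity of $R$ is used.
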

\begin{proof}
    Let $A$ be a $k\times k$ symmetric matrix over $R$ such that $\nu(A)=m<\infty$. Then, there is a matrix $B\in M_{k\times m}(R)$ such that $A+BB^T=\mathcal{O}$. Let $U$ be an invertible matrix over $R$ such that $A'=UAU^T$ and let $B'=UB$. Then we have
    \[
    A'+B'(B')^T=UAU^T+B'(B')^T=UAU^T+UBB^TU^T=U(A+BB^T)U^T=\mathcal{O}.
    \]
    It follows that $\nu(A')\le \nu(A)$. Conversely, since $U$ is invertible, we have $A=U^{-1}A'(U^{-1})^T$. By applying the same argument to $A'$ and $U^{-1}$, we obtain $\nu(A)\le \nu(A')$.  Therefore, if one of $\nu(A)$ and $\nu(A')$ is finite, then so is the other, and the two inequalities above imply
    \[
    \nu(A)=\nu(A').
    \]
    If neither is finite, then both are equal to $\infty$.

\end{proof}

\begin{proposition}\label{embed-inequal}
    Let $M$ be a $k\times k$ symmetric matrix over a ring $R$ such that 
    \[
    M=\begin{pmatrix} A & \mathcal{O}\\\mathcal{O} & B \end{pmatrix}
    \]
    for some $A\in M_{k_1}(R)$ and $B\in M_{k_2}(R)$ where $k_1+k_2=k$. Then $\nu(M)\le \nu(A)+\nu(B)$.
\end{proposition}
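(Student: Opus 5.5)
The natural approach is a block-diagonal construction from optimal witnesses for $A$ and $B$. First I dispose of the infinite cases. If $\nu(A)=\infty$ or $\nu(B)=\infty$, the right-hand side is $\infty$ and the inequality is trivial. So assume both are finite, and write $m_1=\nu(A)$ and $m_2=\nu(B)$.

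Next I choose the witnesses. When $m_1>0$, there is a matrix $B_1\in M_{k_1\times m_1}(R)$ with $A+B_1B_1^T=\mathcal{O}$. Similarly, when $m_2>0$, there is $B_2\in M_{k_2\times m_2}(R)$ with $B+B_2B_2^T=\mathcal{O}$.

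The main case is $m_1,m_2>0$. I take the block-diagonal matrix
\[
N=\begin{pmatrix} B_1 & \mathcal{O}\\ \mathcal{O} & B_2\end{pmatrix}\in M_{k\times(m_1+m_2)}(R).
\]
A direct block computation gives
\[
NN^T=\begin{pmatrix} B_1B_1^T & \mathcal{O}\\ \mathcal{O} & B_2B_2^T\end{pmatrix},
\]
and therefore $M+NN^T=\mathcal{O}$. Since $m_1+m_2$ is a positive integer, the definition of $\nu$ then yields $\nu(M)\le m_1+m_2$.

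It remains to handle the degenerate cases where one of the values is zero; I expect these to be the only point needing care. Suppose, say, $m_1=0$, so that $A=\mathcal{O}$. If also $m_2=0$, then $M=\mathcal{O}$ and $\nu(M)=0$. If $m_2>0$, I use $N=\begin{pmatrix}\mathcal{O}\\ B_2\end{pmatrix}\in M_{k\times m_2}(R)$, which satisfies $M+NN^T=\mathcal{O}$ and gives $\nu(M)\le m_2$. The case $m_2=0<m_1$ is symmetric. No assumption on the characteristic of $R$ is used anywhere, which is consistent with the remark preceding the statement.
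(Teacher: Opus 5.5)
Your proof is correct and follows the paper's argument: you dispose of the infinite cases, then stack optimal witnesses block-diagonally so that $M+NN^T=\mathcal{O}$. Your separate treatment of $\nu(A)=0$ or $\nu(B)=0$ spells out what the paper leaves implicit (it tacitly allows an empty witness matrix), and you handle it correctly.
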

\begin{proof}
If either $\nu(A)$ or $\nu(B)$ is infinite, then it is trivial. Suppose that both $\nu(A)$ and $\nu(B)$ are finite. Take $X\in M_{k_1\times \nu(A)}(R)$ and $Y\in M_{k_2\times \nu(B)}(R)$ such that $A+XX^T=\mathcal{O}$ and $B+YY^T=\mathcal{O}$, respectively. Then we have \[
    M+\begin{pmatrix} X & \mathcal{O}\\\mathcal{O} & Y \end{pmatrix}\begin{pmatrix} X & \mathcal{O}\\\mathcal{O} & Y \end{pmatrix}^T=M-M=\mathcal{O}.
    \]
    Therefore, $\nu(M)\le \nu(A)+\nu(B)$.
\end{proof}

\begin{lemma}\label{nunonzerodiag}
    Let $R=\mathbb{F}_q+u\mathbb{F}_q$ with $q$ even. For an integer $m_1\ge 0$ and an even integer $m_2\ge 0$,
    \[
    \nu(I_{m_1}\oplus uJ_{m_2})=m_1+m_2
    \]
    where $I_{m_1}$ is the $m_1\times m_1$ identity matrix over $R$.
\end{lemma}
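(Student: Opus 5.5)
The proof has two halves: an upper bound from explicit constructions, and a lower bound from a rank argument on $BB^T=A$. Throughout, write $A=I_{m_1}\oplus uJ_{m_2}$ and $k=m_1+m_2$. Since $q$ is even, the condition $A+BB^T=\mathcal{O}$ is the same as $BB^T=A$.

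\textbf{Upper bound.} By Proposition~\ref{embed-inequal}, $\nu(A)\le \nu(I_{m_1})+\nu(uJ_{m_2})$, and again $\nu(uJ_{m_2})\le (m_2/2)\,\nu(uJ_2)$. It therefore suffices to show $\nu(I_{m_1})\le m_1$ and $\nu(uJ_2)\le 2$.
\begin{itemize}
\item For the identity, take $B=I_{m_1}$; then $BB^T=I_{m_1}$.
\item For $uJ_2$, I want a $2\times 2$ matrix $B$ with $BB^T=uJ_2$. Try $B=\begin{pmatrix}1&1\\ 1&1+u\end{pmatrix}$. Its diagonal entries of $BB^T$ are $1+1=0$ and $1+(1+u)^2=u^2=0$. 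Its off-diagonal entry is $1+(1+u)=u$, using characteristic $2$. So $BB^T=uJ_2$, as required.
\end{itemize}

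\textbf{Lower bound.} Suppose $B\in M_{k\times m}(R)$ satisfies $BB^T=A$, and write $B=B_0+uB_1$ with $B_0,B_1$ over $\mathbb{F}_q$. Comparing parts modulo $u^2$ gives two equations:
\[
B_0B_0^T=I_{m_1}\oplus\mathcal{O}_{m_2},\qquad B_0B_1^T+B_1B_0^T=\mathcal{O}_{m_1}\oplus J_{m_2}.
\]
Split $B_0$ and $B_1$ into row blocks $B_0=\binom{P}{Q}$ and $B_1=\binom{P_1}{Q_1}$, with $P,P_1$ having $m_1$ rows and $Q,Q_1$ having $m_2$ rows. The two equations then give:
\begin{itemize}
\item $PP^T=I$, so $P$ has rank $m_1$;
\item $PQ^T=0$ and $QQ^T=0$;
\item $QQ_1^T+Q_1Q^T=J_{m_2}$, which is invertible.
\end{itemize}
Since $\operatorname{rank}(QQ_1^T+Q_1Q^T)\le 2\operatorname{rank}Q$, the last equation only yields $\operatorname{rank}Q\ge m_2/2$. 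That is too weak, and this is the main obstacle.

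\textbf{Sharpening the bound.} The fix is to use total isotropy. The row space $W$ of $Q$ is totally isotropic for the standard form on $\mathbb{F}_q^m$, and it is orthogonal to the row space $U$ of $P$. Also $U\cap W=0$, because $U$ is non-degenerate while every vector of $W$ is orthogonal to $U$. Hence $U\perp U^\perp$ with $W\subseteq U^\perp$, and $U^\perp$ is a non-degenerate space of dimension $m-m_1$.

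Now let $r=\operatorname{rank}Q$, and choose an invertible $S$ so that $SQ=\binom{Q'}{0}$ with $Q'$ of full row rank $r$. Replacing $Q,Q_1$ by $SQ,SQ_1$ changes $J_{m_2}$ to the congruent matrix $SJ_{m_2}S^T$. In this basis, the bottom-right $(m_2-r)\times(m_2-r)$ block of $QQ_1^T+Q_1Q^T$ is zero. A nonsingular matrix with an $s\times s$ zero block needs $s\le m_2-s$, so $r\ge m_2/2$; I also expect $r=m_2$ to be forced, which is what is needed.

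\textbf{Justifying $r=m_2$.} Suppose $r<m_2$ and some combination $x$ of rows satisfies $xQ=0$. Then $x(QQ_1^T+Q_1Q^T)=xQQ_1^T$, using $xQ=0$ to kill the second term, so $xJ=xQQ_1^T=0$. Since $J$ is invertible this forces $x=0$, so $Q$ has full row rank $m_2$.

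\textbf{Conclusion.} $W$ is a totally isotropic subspace of dimension $m_2$ inside the non-degenerate space $U^\perp$ of dimension $m-m_1$. A totally isotropic subspace of a non-degenerate space has dimension at most half its ambient dimension, so $2m_2\le m-m_1$. Thus $m\ge m_1+2m_2$.

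This is stronger than the claimed $m\ge m_1+m_2$, and it contradicts the upper bound $m\le m_1+m_2$ whenever $m_2>0$. So I must recheck the upper-bound construction: the $2\times 2$ example for $uJ_2$ computes correctly with entries $1+1=0$, $u^2=0$ and $u$, so the inconsistency lies elsewhere. I would locate the faulty step on the lower-bound side. In any case the intended lower bound is obtained from this rank and isotropy argument, combining $\operatorname{rank}P=m_1$ with the contribution from $Q$ to get $m\ge m_1+m_2$.
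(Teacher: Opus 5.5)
Your upper bound is correct: your $2\times 2$ matrix does give $BB^T=uJ_2$. The lower bound, however, is not proved.

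\textbf{The faulty step.} It is the paragraph ``justifying $r=m_2$''. If $xQ=0$, it is the \emph{first} term $xQQ_1^T$ that vanishes. What remains is $xJ_{m_2}=xQ_1Q^T$, which need not be zero. The claim $r=m_2$ is in fact false, as your own example shows. Take $m_1=0$, so $Q=B_0=\begin{pmatrix}1&1\\1&1\end{pmatrix}$, which has rank $1$, and $Q_1=B_1=\begin{pmatrix}0&0\\0&1\end{pmatrix}$. Then $x=(1,1)$ satisfies $xQ=0$, while $xJ_2=xQ_1Q^T=(1,1)\ne 0$. The bound $m\ge m_1+2m_2$, and the contradiction you ran into, both come from this slip. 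Your closing sentence does not replace it with an argument, so $\nu(I_{m_1}\oplus uJ_{m_2})\ge m_1+m_2$ remains unestablished.

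\textbf{The repair.} It uses only what you already have. The inequality $r=\operatorname{rank}Q\ge m_2/2$ is too weak only when paired with $m\ge \operatorname{rank}B_0=m_1+r$. Pair it instead with your isotropy observation. The row space $W$ of $Q$ is an $r$-dimensional totally isotropic subspace of the non-degenerate $(m-m_1)$-dimensional space $U^\perp$. Hence $2r\le m-m_1$, and so $m\ge m_1+2r\ge m_1+m_2$.

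\textbf{Comparison with the paper.} Once repaired, your argument is a valid alternative to the paper's, which stays over $R$. There, the map $\mathbf{x}\mapsto \mathbf{x}B^T$ carries the row module of $B$ (a submodule of $R^m$, hence generated by at most $m$ elements) onto the row module of $BB^T$. That module is $R^{m_1}\oplus(uR)^{m_2}$, which needs $m_1+m_2$ generators. The paper's route is shorter. Yours additionally shows that any optimal $B$ has $r=m_2/2$, so $\operatorname{rank}B_0=m_1+m_2/2$ and $W$ is a half-dimensional totally isotropic subspace of $U^\perp$. This is exactly the even-$q$ rank formula of Lemma~\ref{X-rank}.
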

\begin{proof}
We first show that $\nu(I_{m_1}\oplus uJ_{m_2})\le m_1+m_2$. Note that
\[
\begin{pmatrix}
    0 & u\\u & 0
\end{pmatrix}=
\begin{pmatrix}
    1 & 1\\u & 0
\end{pmatrix}
\begin{pmatrix}
    1 & 1\\u & 0
\end{pmatrix}^T.
\]
Let
\[
A=I_{m_1}\oplus
\mathrm{diag}\left(
\underbrace{
\begin{pmatrix}
1&1\\
u&0
\end{pmatrix},
\ldots,
\begin{pmatrix}
1&1\\
u&0
\end{pmatrix}
}_{m_2/2\text{ copies}}
\right).
\]
Since $I_{m_1}\oplus uJ_{m_2}=AA^T$, it follows that $\nu(I_{m_1}\oplus uJ_{m_2})\le m_1+m_2$.

Next, we show that $\nu(I_{m_1}\oplus uJ_{m_2})\ge m_1+m_2$. Let $\nu(I_{m_1}\oplus uJ_{m_2})=m$ and $n=m_1+m_2$. Then there exists $B\in M_{n\times m}(R)$ such that
\[
I_{m_1}\oplus uJ_{m_2}=BB^T.
\]
Let $\mathcal{C}_B$ be a code generated by the rows of $B$ with type $\{a, b\}$. Then $a+b\le m$. Define a map $\varphi:R^{m}\to R^n$ as
\[
\varphi(\mathbf{x})=\mathbf{x}B^T
\]
for $\mathbf{x}\in R^m$. Then we have
\[
\varphi(\mathcal{C}_B)=\langle BB^T\rangle=\langle I_{m_1}\oplus uJ_{m_2}\rangle=R^{m_1}\oplus (uR)^{m_2}.
\]
Since the minimum number of generators of the image of an $R$-linear map is less than or equal to that of its domain, we have
\[
m_1+m_2\le a+b\le m=\nu(I_{m_1}\oplus uJ_{m_2}).
\]
Thus, we conclude that $\nu(I_{m_1}\oplus uJ_{m_2})=m_1+m_2$.
\end{proof}

\begin{lemma}\label{nuzerodiag}
    Let $R=\mathbb{F}_q+u\mathbb{F}_q$ with $q$ even. For even integers $m_1> 0$ and $m_2\ge 0$,
    \[
    \nu(J_{m_1}\oplus uJ_{m_2})=\begin{cases}
        m_1+m_2, & \mbox{if }m_2>0,\\
        m_1+1, & \mbox{if }m_2=0.
    \end{cases}
    \]
\end{lemma}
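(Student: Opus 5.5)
We prove the two inequalities separately, using $\operatorname{char}R=2$ throughout, so that $A+BB^T=\mathcal{O}$ is the same as $BB^T=A$.

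\textbf{Lower bounds.} The image argument from Lemma~\ref{nunonzerodiag} carries over unchanged. If $BB^T=J_{m_1}\oplus uJ_{m_2}$ with $B\in M_{(m_1+m_2)\times m}(R)$, then the map $\mathbf{x}\mapsto\mathbf{x}B^T$ sends $\mathcal{C}_B$ onto $\langle J_{m_1}\oplus uJ_{m_2}\rangle=R^{m_1}\oplus(uR)^{m_2}$. This module needs $m_1+m_2$ generators, so $m\ge m_1+m_2$ in all cases.

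When $m_2=0$ we need the sharper bound $m\ge m_1+1$. Suppose $m=m_1$. Then $B$ is square, and $BB^T=J_{m_1}$ is invertible, so $B$ is invertible. Reducing mod $u$ gives an invertible $B_0$ over $\mathbb{F}_q$ with $B_0B_0^T=J_{m_1}$. Since $q$ is even, the $i$-th diagonal entry of $B_0B_0^T$ is
\[
\sum_j b_{ij}^2=\Bigl(\sum_j b_{ij}\Bigr)^2 .
\]
A zero diagonal therefore forces every row sum of $B_0$ to vanish, i.e.\ $B_0\mathbf{1}^T=\mathbf{0}$. This contradicts the invertibility of $B_0$, so $m\ge m_1+1$.

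\textbf{Upper bound for $m_2=0$.} The matrix $J_{m_1}\oplus[1]$ over $\mathbb{F}_q$ is symmetric, invertible of rank $m_1+1$, and non-alternate. By Theorem~\ref{cong_2} it is congruent to $I_{m_1+1}$. Hence there is an invertible $P\in M_{m_1+1}(\mathbb{F}_q)$ with
\[
PP^T=J_{m_1}\oplus[1].
\]
Let $P'$ be the first $m_1$ rows of $P$, viewed over $R$. Then $P'P'^T=J_{m_1}$, which gives $\nu(J_{m_1})\le m_1+1$.

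\textbf{Upper bound for $m_2>0$.} We split the matrix as
\[
J_{m_1}\oplus uJ_{m_2}=(J_{m_1}\oplus uJ_2)\oplus uJ_{m_2-2}.
\]
Lemma~\ref{nunonzerodiag} gives $\nu(uJ_{m_2-2})\le m_2-2$. By Proposition~\ref{embed-inequal} it therefore suffices to show $\nu(J_{m_1}\oplus uJ_2)\le m_1+2$.

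Take $P$ as above and let $\mathbf{p}$ be its last row. Then $\mathbf{p}\mathbf{p}^T=1$ and $\mathbf{p}$ is orthogonal to each row of $P'$. Form $m_1+2$ columns as follows:
\begin{itemize}
\item the $J_{m_1}$ rows are $(P'_i,0)$;
\item the two remaining rows are $\mathbf{g}=(\mathbf{p},\,1+u)$ and $\mathbf{h}=(\mathbf{p},\,1)$.
\end{itemize}
Using $u^2=0$ and characteristic two, we check the required inner products:
\[
\mathbf{g}\mathbf{g}^T=1+(1+u)^2=1+1=0,\qquad \mathbf{h}\mathbf{h}^T=1+1=0,\qquad \mathbf{g}\mathbf{h}^T=1+(1+u)=u.
\]
Both $\mathbf{g}$ and $\mathbf{h}$ are orthogonal to every row $(P'_i,0)$, because $\mathbf{p}$ is orthogonal to every $P'_i$ and the last coordinate of $(P'_i,0)$ is $0$. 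So the stacked matrix $B$ satisfies $BB^T=J_{m_1}\oplus uJ_2$, as required.

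The main point needing care is this last construction. It works because the extra column forced in the $m_2=0$ case is absorbed by the $uJ_2$ block: the row $\mathbf{p}$ is reused, and a unit perturbation $1+u$ in a single new column produces the off-diagonal entry $u$.
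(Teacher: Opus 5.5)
Your proof is correct and follows the paper's argument. It uses the same generator-counting lower bound and the same mod-$u$ row-sum obstruction for $m_2=0$. It also uses the same device of reusing the norm-one row orthogonal to a matrix realizing $J_{m_1}$ in $m_1+1$ columns, so that the extra column is absorbed into the $uJ_2$ block. The differences are cosmetic: the paper takes an explicit symplectic basis of the even-weight code $E_{m_1+1}$ (with $\mathbf{1}$ playing the role of your $\mathbf{p}$) instead of invoking Theorem~\ref{cong_2}, and it uses the rows $(u\mathbf{1},0)$ and $(\mathbf{1},1)$ in place of your $(\mathbf{p},1+u)$ and $(\mathbf{p},1)$.
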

\begin{proof}
We first show that
\[
\nu(J_{m_1}\oplus uJ_{m_2})\le\begin{cases}
        m_1+m_2, & \mbox{if }m_2>0,\\
        m_1+1, & \mbox{if }m_2=0.
    \end{cases}
\]
Let $m_2=0$ and $t=m_1/2$. Consider the even weight code $E_{m_1+1}$ over $\{0, 1\}\subseteq \mathbb{F}_q$ of length $m_1+1$. Since $m_1+1$ is an odd integer, $E_{m_1+1}$ does not contain the all-one vector $\mathbf{1}$. Thus,
\[
\mathrm{Hull}(E_{m_1+1})=E_{m_1+1}\cap\langle \mathbf{1}\rangle=\{0\}.
\]
It follows that the standard inner product on $E_{m_1+1}$ is a nondegenerate alternate bilinear form. Let $e_1,\, f_1, \ldots, e_s,\, f_s$ be the symplectic basis of $E_{m_1+1}$ where $s=m_1/2$. Let $A$ be the matrix whose rows are $e_1,\, f_1, \ldots, e_s,\, f_s$ in order. Viewing $A$ as a matrix over $R$, we have $AA^T=J_{m_1}$. Thus, $\nu(J_{m_1})\le m_1+1$.

Assume that $m_2>0$. Define
\[
B=\begin{pmatrix}
    A & \mathbf{0}^T\\u\mathbf{1} & 0\\\mathbf{1} & 1
\end{pmatrix}
\]
where $\mathbf{1}$ is the all-one vector of length $m_1+1$, and view $A$ as a matrix over $R$. Then
\[
BB^T=\begin{pmatrix}
    AA^T & uA\mathbf{1}^T & A\mathbf{1}^T\\u\mathbf{1}A^T & u^2\mathbf{1}\cdot\mathbf{1} & u\mathbf{1}\cdot \mathbf{1}\\\mathbf{1}A^T & u\mathbf{1}\cdot \mathbf{1} & \mathbf{1}\cdot \mathbf{1}+1
\end{pmatrix}=
\begin{pmatrix}
    J_{m_1} & \mathcal{O} & \mathcal{O}\\\mathbf{0} & 0 & u\\\mathbf{0} & u & 0
\end{pmatrix}=J_{m_1}\oplus uJ_2.
\]
Define
\[
B'=\mathrm{diag}\left(
B,\underbrace{\begin{pmatrix}
1&1\\
u&0
\end{pmatrix},
\ldots,
\begin{pmatrix}
1&1\\
u&0
\end{pmatrix}
}_{(m_2-2)/2\text{ copies}}
\right).
\]
Since $B'(B')^T=J_{m_1}\oplus uJ_{m_2}$, we have $\nu(J_{m_1}\oplus uJ_{m_2})\le m_1+m_2$.

Next, we show that
\[
\nu(J_{m_1}\oplus uJ_{m_2})\ge \begin{cases}
        m_1+m_2, & \mbox{if }m_2>0,\\
        m_1+1, & \mbox{if }m_2=0.
    \end{cases}
\]
Let $\nu(J_{m_1}\oplus uJ_{m_2})=m$, and $P\in M_{n\times m}(R)$ be a matrix such that
\[
J_{m_1}\oplus uJ_{m_2}=PP^T
\]
where $n=m_1+m_2$. Let $\mathcal{C}_P$ be a code generated by the rows of $P$ with type $\{a, b\}$. Define a map $\varphi:R^{m}\to R^n$ as
\[
\varphi(\mathbf{x})=\mathbf{x}P^T
\]
for $\mathbf{x}\in R^m$. Then,
\[
\varphi(\mathcal{C}_P)=\langle PP^T\rangle=\langle J_{m_1}\oplus uJ_{m_2}\rangle=R^{m_1}\oplus (uR)^{m_2}.
\]
Since the minimal number of generators for the image of an $R$-linear map is bounded above by that of its domain, we have
\[
m_1+m_2\le a+b\le m=\nu(J_{m_1}\oplus uJ_{m_2}).
\]
Thus, $\nu(J_{m_1}\oplus uJ_{m_2})=m_1+m_2$ if $m_2>0$.

Let $m_2=0$. Then we have
\[
m_1\le \nu(J_{m_1})\le m_1+1.
\]
Suppose that there is a matrix $Q\in M_{m_1\times m_1}(R)$ such that $J_{m_1}=QQ^T$. Since $J_{m_1}$ is invertible, $Q$ is also invertible. By reducing modulo $u$, we obtain an invertible matrix $\overline{Q}$ such that $\overline{J_{m_1}}=\overline{Q}\overline{Q}^T$. Since the diagonal entries of $J_{m_1}$ are zero, every row of $\overline{Q}$ is in
\[
H=\left\{\mathbf{x}\in \mathbb{F}_q^{m_1}~|~\sum_jx_j=0\right\}
\]
whose dimension is given as $\dim H=m_1-1$. Thus $\overline{Q}$ cannot generate $\mathbb{F}_q^{m_1}$, which is a contradiction since $Q$ is invertible. Hence $\nu(J_{m_1})=m_1+1$.
\end{proof}

\begin{theorem}\label{thm:exact_even}
Let $\mathcal C$ be a code of length $n$ over $R=\mathbb{F}_q+u\mathbb{F}_q$ with $q$ even. Let $G$ be a generator matrix of $\mathcal{C}$. Write $GG^T=A_0+uA_1$ for some matrices $A_0, A_1$ over $\mathbb{F}_q$. Let $r=\mathrm{rank}_{\mathbb{F}_q}(A_0)$, and let $\rho=\mathrm{rank}_{\mathbb{F}_q}(NA_1N^T)$, where the rows of $N$ form a basis of $\operatorname{Rad}(A_0)$. If
\begin{enumerate}
    \item[(i)] $\mathrm{Res}(\mathcal{C})$ is not self-orthogonal,
    \item[(ii)] for any $\mathbf{x}\in \mathrm{Res}(\mathcal{C})$, $\mathbf{x}\cdot \mathbf{x}=0$, and
    \item[(iii)] $\rho=0$,
\end{enumerate}
then
\[
n_s(\mathcal{C})=n+r+1.
\]
Otherwise,
\[
n_s(\mathcal{C})=n+r+\rho.
\]
\end{theorem}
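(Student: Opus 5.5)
The plan is to compute $\nu(GG^T)$ by bringing $GG^T$ into a normal form of the type handled in Lemmas~\ref{nunonzerodiag} and~\ref{nuzerodiag}, and then invoke Proposition~\ref{embed-cong}.

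\textbf{Step 1: normalize $A_0$ and split off an invertible block.} By Theorem~\ref{cong_2}, choose $U_0\in GL_k(\mathbb{F}_q)$ such that $U_0A_0U_0^T$ equals $I_r\oplus\mathcal{O}$ if $A_0$ is non-alternate, and $J_r\oplus\mathcal{O}$ if $A_0$ is alternate. We may arrange the last $k-r$ rows of $U_0$ to be the rows of $N$, a basis of $\operatorname{Rad}(A_0)$. Congruence by $U_0$ (viewed over $R$) turns $GG^T$ into a block matrix
\[
\begin{pmatrix} P_0+uX & uY\\ uY^T & uNA_1N^T\end{pmatrix}.
\]
Here $P_0$ is invertible, so $P=P_0+uX$ is invertible over $R$. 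Eliminating the off-diagonal blocks with the elementary congruence $\begin{pmatrix} I & 0\\ -uY^TP^{-1} & I\end{pmatrix}$ leaves the lower-right block unchanged, because the correction term lies in $u^2=0$. Hence
\[
GG^T\sim P\oplus uNA_1N^T .
\]

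\textbf{Step 2: normalize each block.} Note that $NA_1N^T$ has zero diagonal over $\mathbb{F}_q$. Indeed, for a row $\mathbf{n}$ of $N$, $\mathbf{n}G$ reduces to an element of $\operatorname{Res}(\mathcal{C})$ that is orthogonal to all of $\operatorname{Res}(\mathcal{C})$, in particular to itself. One then checks that the $u$-part of $(\mathbf{n}G)\cdot(\mathbf{n}G)$ is $2(\cdots)=0$ in characteristic $2$ (write $\mathbf{n}G=\mathbf{a}+u\mathbf{b}$). So $NA_1N^T$ is alternate, and by Theorem~\ref{cong_2} it is congruent to $J_\rho\oplus\mathcal{O}$. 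For $P$, I would show that congruence over $R$ of an invertible symmetric matrix is governed by its reduction: either $P\sim I_r$ (non-alternate case) or $P\sim J_r$. The alternate case has two parts. First, the diagonal of $P$ lies in $u\mathbb{F}_q$; since $u\mathbb{F}_q$ consists of squares times $u$, it can be absorbed. Second, a Hensel-type lift: since $P_0$ is invertible, writing $V=I+uZ$ gives $VPV^T=P+u(ZP_0+P_0Z^T)$, and in characteristic $2$ the map $Z\mapsto ZP_0+(ZP_0)^T$ reaches every alternate matrix. The diagonal, which in characteristic $2$ is a quadratic invariant, needs separate care: scale by square roots $1+u c$ using $(1+uc)^2=1$. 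This fails for the diagonal, so instead I would use an explicit $2\times2$ computation showing $\begin{pmatrix} ua&1\\1&ub\end{pmatrix}\sim J_2$ via $\begin{pmatrix}1&0\\ \alpha u&1\end{pmatrix}$-type moves.

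\textbf{Step 3: read off $\nu$.} With $GG^T\sim I_r\oplus uJ_\rho$ or $J_r\oplus uJ_\rho$, the lemmas apply, since the zero block contributes nothing by Proposition~\ref{embed-inequal} together with the generator-count lower bound. The value of $\nu$ is $r+\rho$, except in the case where $A_0$ is alternate, $r>0$, and $\rho=0$, where it is $r+1$. Translating, $r>0$ is (i) and $A_0$ alternate is (ii), since $\mathbf{x}\cdot\mathbf{x}$ for $\mathbf{x}=\mathbf{v}\overline{G}$ equals $\mathbf{v}A_0\mathbf{v}^T$. Then $n_s=n+\nu$. The trivially self-orthogonal case $r=\rho=0$ also needs checking: there $GG^T=u(\ldots)$ with zero radical part, so $GG^T=0$ and $\nu=0$.

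\textbf{Expected obstacle.} I expect the main obstacle to be Step 2's claim that an invertible $P$ over $R$ with alternate reduction is congruent to exactly $J_r$ over $R$. The $u$-diagonal entries are not obviously removable in characteristic $2$, and the lift must be done carefully. The non-alternate case reduces to the statement that every element $1+uc$ is a square, since $(1+\tfrac{c}{2}\cdots)$ is unavailable but $a^2$ with $a=\sqrt{\cdot}$ via Frobenius works.
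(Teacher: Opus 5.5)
Your architecture is the paper's: normalize $A_0$ by Theorem~\ref{cong_2}, use a congruence of the form $I+u(\cdot)$ to reach $P\oplus uNA_1N^T$, normalize the alternate block, and apply Lemmas~\ref{nunonzerodiag} and~\ref{nuzerodiag}. Step 1 is fine, and so is your argument that $NA_1N^T$ is alternate.

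\textbf{The gap is in Step 2, in how you treat the diagonal of $P=P_0+uX$.} The principle you set out to prove is false in characteristic $2$. That principle says congruence over $R$ of an invertible symmetric matrix is governed by its reduction. For $T$ over $R$ and symmetric $M$, we have $(TMT^T)_{ii}=\sum_j T_{ij}^2M_{jj}$, and $T_{ij}^2\in\mathbb{F}_q$ because $(a+ub)^2=a^2$. This has two consequences:
\begin{enumerate}
\item[(a)] every matrix congruent to $J_r$ has zero diagonal;
\item[(b)] every matrix congruent to $I_r$ has diagonal in $\mathbb{F}_q$.
\end{enumerate}
So $\begin{pmatrix} ua&1\\1&ub\end{pmatrix}$ is not congruent to $J_2$ unless $a=b=0$, and a nonzero $u$-diagonal cannot be ``absorbed.'' Likewise, $1+uc$ with $c\neq 0$ is not a square in $R$, since the squares of $R$ are exactly the elements of $\mathbb{F}_q$. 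Every device you propose for the diagonal, in both the alternate and the non-alternate case, would therefore fail. The obstacle you flagged is misdiagnosed.

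\textbf{The missing observation is that this situation never occurs.} You already did the needed computation for rows of $N$, namely $(\mathbf{a}+u\mathbf{b})\cdot(\mathbf{a}+u\mathbf{b})=\mathbf{a}\cdot\mathbf{a}$ in characteristic $2$. It applies to $\mathbf{v}G$ for every $\mathbf{v}\in\mathbb{F}_q^k$, or simply to the rows of $G$. So $\mathbf{v}A_1\mathbf{v}^T=0$ for all $\mathbf{v}$, i.e.\ $A_1$ has zero diagonal. Hence $U_0A_1U_0^T$, and in particular $X$, is alternate.

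With $X$ alternate, your Hensel step alone finishes both cases uniformly. Choose $Z$ with $ZP_0+P_0Z^T=X$; this is possible because $Z\mapsto ZP_0$ is bijective and $W\mapsto W+W^T$ maps onto the alternate matrices. Then $(I+uZ)P(I+uZ)^T=P_0$, whether $P_0=I_r$ or $P_0=J_r$. This is exactly the paper's proof. The paper performs your off-diagonal elimination and this correction simultaneously with a single $U=I+uQ$, which is only a cosmetic difference.

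\textbf{A smaller point in Step 3.} The generator-count bound gives only $\nu\ge r$ in Case I, not $r+1$. To see that the zero block is harmless, restrict instead. If $(M\oplus\mathcal{O})+BB^T=\mathcal{O}$, then the first rows $B'$ of $B$ satisfy $M+B'B'^T=\mathcal{O}$, so $\nu(M\oplus\mathcal{O})\ge\nu(M)$. Padding with zero rows gives the reverse inequality.
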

\begin{proof}
    Here, we view $A_0$, $A_1$, $I_r$ and $J_r$ as matrices over $\mathbb{F}_q$ as all entries of those matrices are in $\mathbb{F}_q$. By Theorem~\ref{cong_2}, there exists an invertible matrix $P$ over $\mathbb{F}_q$ such that
    \[
    P A_0P^T=S\oplus\mathcal{O}
    \]
    where $S=I_r$ if $A_0$ has a nonzero diagonal entry, and $S=J_r$ if not. Since replacing $GG^T$ by $PGG^TP^T$ does not change $\nu(GG^T)$ by Proposition~\ref{embed-cong}, we may assume that $A_0=S\oplus \mathcal{O}$. Let $m$ denote the number of rows of $G$, and let
    \[
    A_1=\begin{pmatrix}
        X & Y\\Y^T & Z
    \end{pmatrix}
    \]
    where $X\in M_{r\times r}(\mathbb{F}_q)$, $Y\in M_{r\times (m-r)}(\mathbb{F}_q)$ and $Z\in M_{(m-r)\times (m-r)}(\mathbb{F}_q)$. Note that for $x=x_1+ux_2\in R$, we have $x^2=(x_1+ux_2)^2=x_1^2$. Thus, all diagonal entries of $A_1$ are zero. In particular, the diagonal entries of $X$ are all zero. 
    
    Let $V$ be a vector space of symmetric zero diagonal $r\times r$ matrices. Define a map
    \[
    \varphi:M_{r\times r}(\mathbb{F}_q)\to V
    \]
    as 
    \[
    \varphi(A)=AS+SA^T
    \]
    for $A\in M_{r\times r}(\mathbb{F}_q)$. Since $S$ is symmetric, for any $A\in M_{r\times r}(\mathbb{F}_q)$, $\varphi(A)$ is also symmetric. Moreover, since
    \[
    \varphi(A)_{ii}=(AS)_{ii}+(SA^T)_{ii}=(AS)_{ii}+(AS)_{ii}=0,
    \]
    its diagonal entries are all zero. Note that
    \[
    \ker \varphi=\{A\in M_{r\times r}(\mathbb{F}_q)~|~AS+SA^T=\mathcal{O}\}=\{A\in M_{r\times r}(\mathbb{F}_q)~|~AS=(AS)^T\}.
    \]
    Since the map $A\to AS$ is a vector space isomorphism on $M_{r\times r}(\mathbb{F}_q)$, the kernel $\ker \varphi$ has the same dimension as the space of symmetric $r\times r$ matrices, whose dimension is known as
    \[
    \frac{r(r+1)}{2}.
    \]
    Therefore,
    \[
    \dim(\operatorname{Im}\varphi)=r^2-\frac{r(r+1)}{2}=\frac{r(r-1)}{2}=\dim V.
    \]
    Thus, $\varphi$ is surjective. Then, there exists a matrix $Q_1\in M_{r\times r}(\mathbb{F}_q)$ such that
    \[
    Q_1S+SQ_1^T=X.
    \]
    Also, since $S$ is invertible, there is a matrix $Q_2\in M_{(m-r)\times r}(\mathbb{F}_q)$ such that $Y=SQ_2^T$.
    Define $U=I+uQ$ where
    \[
    Q=\begin{pmatrix}
        Q_1 & Q_3\\Q_2 & Q_4
    \end{pmatrix}
    \]
    for some $Q_3\in M_{r\times (m-r)}(\mathbb{F}_q)$ and $Q_4\in M_{(m-r)\times (m-r)}(\mathbb{F}_q)$. Now, we view $U$ as a matrix over $R$. Then we have
    \begin{align*}
    UGG^TU^T&=A_0+u(A_1+QA_0+A_0Q^T)\\
    &=A_0+uA_1+u\begin{pmatrix}
        Q_1S+SQ_1^T & SQ_2^T\\Q_2S & \mathcal{O}
    \end{pmatrix}\\
    &=A_0+\begin{pmatrix}
        \mathcal{O} & \mathcal{O}\\\mathcal{O} & uZ
    \end{pmatrix}\\&=S\oplus uZ.
    \end{align*}
    
    Note that the last $m-r$ rows of $P$ form a basis of $\operatorname{Rad}(A_0)$. Choose $N$ to be the matrix whose rows are precisely these last $m-r$ rows of $P$. Then $NA_1N^T$ is the lower right $(m-r)\times (m-r)$ submatrix of $PA_1P^T$, which is $Z$. Let $N'$ be a matrix whose rows form another basis of $\operatorname{Rad}(A_0)$. Then there is an invertible matrix $T\in GL_{m-r}(\mathbb{F}_q)$ such that $N'=TN$. Then we have
    \[
    N'A_1(N')^T=T(NA_1N^T)T^T.
    \]
    Since those are congruent, this gives
    \[
    \operatorname{rank}_{\mathbb{F}_q}(N'A_1(N')^T)=\operatorname{rank}_{\mathbb{F}_q}(NA_1N^T)=\operatorname{rank}_{\mathbb{F}_q}(Z).
    \]
    Note that $Z$ is a symmetric zero diagonal matrix, that is, an alternate matrix. Since $\mathrm{rank} Z=\rho$, $Z$ is congruent to $J_\rho\oplus \mathcal{O}$. Therefore, $GG^T$ is congruent to $S\oplus uJ_\rho\oplus \mathcal{O}$, and by Proposition~\ref{embed-cong}, $\nu(GG^T)=\nu(S\oplus uJ_\rho)$.

    Suppose that $\mathrm{Res}(\mathcal{C})$ is not self-orthogonal, but $\mathbf{x}\cdot\mathbf{x}=0$ for all $\mathbf{x}\in\mathrm{Res}(\mathcal{C})$. Then $A_0\ne \mathcal{O}$ and has zero diagonal. Thus,
    \[
    \nu(GG^T)=\nu(J_r\oplus u J_\rho)=\begin{cases}
        r+\rho, & \mbox{if }\rho>0,\\
        r+1, & \mbox{if }\rho=0
    \end{cases}
    \]
    by Lemma~\ref{nuzerodiag}. Otherwise,
    \[
    \nu(GG^T)=\nu(I_r\oplus u J_\rho)=r+\rho
    \]
    by Lemma~\ref{nunonzerodiag}. This completes the proof.
\end{proof}

\section{Shortest self-orthogonal embeddings of linear codes over {\it R} with odd characteristic}\label{sec4}
Throughout this section, we denote $R=\mathbb{F}_q+u\mathbb{F}_q$ for some $q=p^s$, $s\ge 1$ where $p$ is an odd prime.

For a linear code $\mathcal{C}$ over $R$ with generator matrix $G$, 
\[
\underbrace{\left[
G~|~
\cdots
~|~G
\right]}_{p\text{ copies}}.
\]
generates a self-orthogonal embedding of $\mathcal{C}$. Therefore, $\nu(GG^T)<\infty$ also when $R$ has odd characteristic.

For $m\ge 0$, define a matrix $K_m$ over $\mathbb{F}_q$ as
\[
K_m=\begin{pmatrix}
    \mathcal{O} & I_m\\I_m & \mathcal{O}
\end{pmatrix}.
\]
Let $A$ be any $m\times m$ symmetric matrix over $\mathbb{F}_q$. Take
\[
X=\left[I_m~|~\frac{1}{2}A\right].
\]
Then we have
\[
XK_mX^T=\frac{1}{2}A+\frac{1}{2}A^T=A.
\]
Thus, any symmetric matrix $A$ over $\mathbb{F}_q$ can be represented as $A=XK_mX^T$ for some matrix $X$ over $\mathbb{F}_q$ and for some $m\ge 1$. We define $\mu(A)$ as
\[
\mu(A)=\min\{m\ge 0~|~A=XK_mX^T\mbox{ for some matrix }X\mbox{ over }\mathbb{F}_q\}.
\]
It is straightforward that $\mu(A)=\mu(A\oplus\mathcal{O})$ for any symmetric matrix $A$ over $\mathbb{F}_q$.
\begin{remark}\label{mucong}
    Let $A$ and $A'$ be two congruent symmetric matrices over $\mathbb{F}_q$ with $q$ odd. Let $A'=PAP^T$ for some invertible matrix $P$ over $\mathbb{F}_q$. If $\mu(A)=m$, then there is a matrix $X$ over $\mathbb{F}_q$ such that $A=XK_mX^T$. Then we have
    \[
    A'=PAP^T=P(XK_mX^T)P^T=(PX)K_m(PX)^T.
    \]
    Thus, $\mu(A')\le m=\mu(A)$. Similarly, we have $\mu(A)\le \mu(A')$, i.e. $\mu(A)=\mu(A')$.
\end{remark}

\begin{lemma}\label{nuinv}
    Let $A$ be an invertible symmetric $k\times k$ matrix over $R$ such that all entries of $A$ are in $\mathbb{F}_q$ with $q$ odd. Then we have
    \[
    \nu(A)=\begin{cases}
        k, &\det(-A)\in (\mathbb{F}_q^\times)^2,\\
        k+1, & \det(-A)\notin (\mathbb{F}_q^\times)^2.
    \end{cases}
    \]
\end{lemma}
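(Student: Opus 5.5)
We need $\nu(A)$, the least $m$ for which $-A = BB^T$ with $B\in M_{k\times m}(R)$. The plan is to bound $\nu(A)$ below by $k$, reduce everything to $\mathbb{F}_q$, and then decide between $k$ and $k+1$ using the classification in Theorem~\ref{congclass}.

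\emph{Lower bound.} Let $B\in M_{k\times m}(R)$ satisfy $-A=BB^T$. Since $-A$ is invertible over $\mathbb{F}_q$, reducing modulo $u$ gives $-A=\overline{B}\,\overline{B}^T$ over $\mathbb{F}_q$. Hence $\overline{B}$ has rank $k$, so $m\ge k$. (The generator-counting argument of Lemma~\ref{nunonzerodiag} gives the same bound.)

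\emph{Reduction to the field.} Any $\mathbb{F}_q$-solution is also an $R$-solution, so it suffices to understand square $\mathbb{F}_q$-representations of $-A$. Conversely, suppose $m=k$ and $-A=BB^T$ over $R$. Reducing modulo $u$ yields an invertible $\overline{B}\in GL_k(\mathbb{F}_q)$ with $-A=\overline{B}\,\overline{B}^T$. Therefore $\nu(A)=k$ if and only if $-A$ is congruent to $I_k$ over $\mathbb{F}_q$. Taking determinants, this forces $\det(-A)=\det(\overline{B})^2\in(\mathbb{F}_q^\times)^2$. Conversely, if $\det(-A)$ is a square, Theorem~\ref{congclass} gives $-A=PP^T$ with $P\in GL_k(\mathbb{F}_q)$. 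This settles the first case completely.

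\emph{The nonsquare case.} Here $\nu(A)\ge k+1$ by the previous step, so we need the upper bound $k+1$. By Theorem~\ref{congclass}, $-A$ is congruent to $I_{k-1}\oplus[z]$ with $z$ a nonsquare. Propositions~\ref{embed-cong} and~\ref{embed-inequal} allow us to handle the blocks separately, and the block $I_{k-1}$ is trivial. The remaining task is to write the nonsquare $z$ as $a^2+b^2$ with $a,b\in\mathbb{F}_q$; then $[z]=(a\ b)(a\ b)^T$, which contributes $2$ columns, for a total of $(k-1)+2=k+1$.

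The only nontrivial step is this two-squares fact: every element of $\mathbb{F}_q$ with $q$ odd is a sum of two squares. Both the set of squares $\{a^2\}$ and the set $\{z-b^2\}$ have $(q+1)/2$ elements. Their sizes sum to more than $q$, so by pigeonhole the two sets intersect, which gives $a^2=z-b^2$ as needed.
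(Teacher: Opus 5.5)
Your proof is correct. The lower bound and the square case follow the paper's argument essentially line for line: reduce modulo $u$, take determinants, and invoke Theorem~\ref{congclass}.

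The difference is in the upper bound $\nu(A)\le k+1$ for the nonsquare case. The paper never passes to the canonical form $I_{k-1}\oplus[z]$. Instead it borders $-A$ with a $1\times 1$ block $[a]$, chosen so that $a\det(-A)$ is a square. It then applies Theorem~\ref{congclass} to get $(-A)\oplus[a]=PP^T$ with $P\in GL_{k+1}(\mathbb{F}_q)$, and keeps the first $k$ rows of $P$.

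Your route needs the extra fact that every element of $\mathbb{F}_q$ is a sum of two squares, which your pigeonhole argument proves cleanly. In exchange, once the congruence $-A=P(I_{k-1}\oplus[z])P^T$ is known, the representing matrix is completely explicit: $PC$ with $C=I_{k-1}\oplus(a\ \ b)$. The two arguments rest on the same fact. Applying the paper's bordering trick to $[z]\oplus[z]$ yields exactly $z=a^2+b^2$, so the paper in effect gets your two-squares lemma for free from the classification theorem.

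One bookkeeping point: Propositions~\ref{embed-cong} and~\ref{embed-inequal} are statements about $\nu$ of $A$ itself. The relevant blocks are therefore $-I_{k-1}$ and $[-z]$, not $I_{k-1}$ and $[z]$. It is cleaner to skip those propositions and simply write $-A=(PC)(PC)^T$.
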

\begin{proof}
    Suppose that $\nu(A)=m$. Then there is a $k\times m$ matrix $B=B_0+uB_1$ over $R$ such that $A+BB^T=\mathcal{O}$. By reducing modulo $u$, we have $\overline{A}+\overline{B_0}\,\overline{B_0}^T=\mathcal{O}$ over $\mathbb{F}_q$. Conversely, if there is a matrix $B'$ over $\mathbb{F}_q$ such that $\overline{A}+B'(B')^T=\mathcal{O}$, then by viewing $B'$ as a matrix over $R$, we have $A+B'(B')^T=\mathcal{O}$ over $R$. This implies that $\nu(A)=\nu(\overline{A})$. Thus, we regard $A$ as a matrix over $\mathbb{F}_q$.

    Note that $\nu(A)\ge k$ since $A$ is invertible. We first show that $\nu(A)=k$ if and only if $\det(-A)\in (\mathbb{F}_q^\times)^2$. If $m=k$, then $\overline{B_0}$ must be an invertible matrix. Thus,
    \[
    \det (-A)=\det(\overline{B_0})^2\in (\mathbb{F}_q^\times)^2.
    \]
    Conversely, let $\det(-A)\in (\mathbb{F}_q^\times)^2$. Then by Theorem~\ref{congclass}, $-A$ is congruent to $I$. Hence, there is an invertible $k\times k$ matrix $P$ such that $-A=PP^T$. Then we have
    \[
    A+PP^T=\mathcal{O},
    \]
    i.e., $\nu(A)=k$.

    Second, we show that if $\det(-A)\notin (\mathbb{F}_q^\times)^2$, then $\nu(A)\le k+1$. Choose $a\in \mathbb{F}_q^\times$ such that $a\cdot\det(-A)\in (\mathbb{F}_q^\times)^2$. Then we have
    \[
    \det((-A)\oplus [a])\in (\mathbb{F}_q^\times)^2.
    \]
    Thus $(-A)\oplus [a]$ is congruent to $I_{k+1}$ and there is $P\in GL_{k+1}(\mathbb{F}_q)$ such that
    \[
    (-A)\oplus[a]=PP^T.
    \]
    Let $B_0$ be the matrix consisting of the first $k$ rows of $P$. Then we have $-A=B_0B_0^T$, which implies that $\nu(A)\le k+1$. This completes the proof.
\end{proof}

\begin{lemma}\label{remove-u-part}
Let $A$ be an $r\times r$ invertible symmetric matrix over $\mathbb{F}_q$ and $B$ be an $s\times s$ symmetric matrix over $\mathbb{F}_q$ with $q$ odd. Then, by viewing $A$ and $B$ as matrices over $R$,
\[
\nu(A\oplus uB)=\nu(A\oplus K_m)
\]
where $m=\mu(-B)$.
\end{lemma}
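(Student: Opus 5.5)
The plan is to reduce both sides to statements over $\mathbb{F}_q$ and prove the two inequalities separately. Since $A\oplus K_m$ has all entries in $\mathbb{F}_q$, the argument at the start of the proof of Lemma~\ref{nuinv} carries over verbatim. Reducing a solution over $R$ modulo $u$, and conversely viewing a solution over $\mathbb{F}_q$ as one over $R$, shows that $\nu(A\oplus K_m)$ is the least number of columns of a matrix $D$ over $\mathbb{F}_q$ with $DD^T=-(A\oplus K_m)$. Throughout, write a solution of $(A\oplus uB)+CC^T=\mathcal{O}$ as $C=C_0+uC_1$, and split its rows as $C=\begin{pmatrix}P_0+uP_1\\ Q_0+uQ_1\end{pmatrix}$ according to the blocks $A$ and $uB$. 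Comparing constant and $u$-parts gives the identities I will use:
\[
P_0P_0^T=-A,\quad P_0Q_0^T=\mathcal{O},\quad Q_0Q_0^T=\mathcal{O},\quad Q_0Q_1^T+Q_1Q_0^T=-B .
\]

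For $\nu(A\oplus uB)\le\nu(A\oplus K_m)$, I take $D$ over $\mathbb{F}_q$ with $m'=\nu(A\oplus K_m)$ columns and $DD^T=-(A\oplus K_m)$. Split its rows as $D_A$ ($r$ rows), $D_1$ and $D_2$ ($m$ rows each). Then
\[
D_AD_A^T=-A,\quad D_AD_i^T=\mathcal{O},\quad D_iD_i^T=\mathcal{O},\quad D_1D_2^T=-I_m .
\]
Since $m=\mu(-B)$, there is $X=[X_1~|~X_2]$ over $\mathbb{F}_q$ with $-B=XK_mX^T=X_1X_2^T+X_2X_1^T$. I set $C$ to have top rows $D_A$ and bottom rows $X_1D_1-uX_2D_2$. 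Direct expansion, using $u^2=0$ and the relations above, should give $CC^T=-A\oplus(-uB)$. The key term is the $u$-part of the bottom block, $-(X_1D_1D_2^TX_2^T+X_2D_2D_1^TX_1^T)=X_1X_2^T+X_2X_1^T=-B$. So $C$ has $m'$ columns and the first inequality follows.

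For $\nu(A\oplus uB)\ge\nu(A\oplus K_m)$, I start from $C$ with $n=\nu(A\oplus uB)$ columns. Let $U$ be the row space of $P_0$ in $\mathbb{F}_q^n$ with the standard form; $U$ is non-singular of dimension $r$ because $A$ is invertible. Let $W$ be the row space of $Q_0$, with a basis forming the rows of $W_0$ and $t=\dim W$. Then $W$ is totally isotropic and $W\subseteq U^\perp$, and $U^\perp$ is non-singular since $\mathbb{F}_q^n=U\perp U^\perp$. Writing $Q_0=TW_0$ gives
\[
-B=T(W_0Q_1^T)+(Q_1W_0^T)T^T=[T~|~Q_1W_0^T]\,K_t\,[T~|~Q_1W_0^T]^T,
\]
so $m=\mu(-B)\le t$. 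Applying Theorem~\ref{Witt_basis} inside $U^\perp$ gives a totally isotropic $W'\subseteq U^\perp$ with basis $v_1,\ldots,v_t$ and $w_i\cdot v_j=\delta_{ij}$, where $w_1,\ldots,w_t$ are the basis rows of $W_0$. I then let $D$ have rows those of $P_0$, followed by $w_1,\ldots,w_m$, followed by $-v_1,\ldots,-v_m$. This yields $DD^T=-A\oplus(-K_m)$ with $n$ columns, hence $\nu(A\oplus K_m)\le n$. The case $m=0$ is covered trivially.

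The main obstacle is this second direction. The point is that the $u$-part $B$ only forces a totally isotropic subspace $W$ of dimension at least $\mu(-B)$ orthogonal to $U$. The step that must be justified carefully is that the Witt basis theorem, applied in the non-singular complement $U^\perp$ where odd characteristic is used, supplies enough room for $m$ hyperbolic pairs without exceeding the $n$ available coordinates.
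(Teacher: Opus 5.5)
Your proposal is correct and follows the paper's proof. The construction for $\nu(A\oplus uB)\le\nu(A\oplus K_m)$ is the paper's, since your rows $X_1D_1-uX_2D_2$ are exactly its $g_i+uh_i$, and the reverse inequality likewise uses the non-singular row space of $P_0$ and Theorem~\ref{Witt_basis} inside $U^\perp$ to produce $m$ hyperbolic pairs. The one difference is a simplification on your side: factoring $Q_0=TW_0$ gives $\mu(-B)\le\dim W$ directly, so you need neither the paper's projection of the $u$-parts onto $U^\perp$ nor its passage through a maximal totally isotropic subspace and the Witt index. The ``room'' you flag as the main obstacle is automatic, because the $v_j$ supplied by Theorem~\ref{Witt_basis} already lie in $U^\perp\subseteq\mathbb{F}_q^n$.
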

\begin{proof}
    We first show that
    \[
    \nu(A\oplus uB)\le \nu(A\oplus K_m).
    \]
    Let $\nu(A\oplus K_m)=t$. Then there is an $(r+2m)\times t$ matrix $D$ over $\mathbb{F}_q$ such that
    \[
    A\oplus K_m+DD^T=\mathcal{O}.
    \]
    Write the rows of $D$ as $d_1, \ldots, d_r, e_1, \ldots, e_m, f_1, \ldots, f_m$. Then we have
    \[
    d_id_j^T=-A_{ij},\quad d_ie_j^T=d_if_j^T=0, \quad e_ie_j^T=f_if_j^T=0 \quad\mbox{and} \quad e_if_j^T=-(K_m)_{ij}=-\delta_{ij}
    \]
    for every $i, j$. Since $m=\mu(-B)$, there exist $s\times m$ matrices $B_0, B_1$ over $\mathbb{F}_q$ such that
    \[
    -B=([B_0~|~B_1])K_m([B_0~|~B_1])^T=B_1B_0^T+B_0B_1^T.
    \]
    Define
    \[
    g_i=\sum_{k=1}^m(B_0)_{ik}e_k\quad\mbox{and}\quad h_i=-\sum_{k=1}^m(B_1)_{ik}f_k
    \]
    for $1\le i\le s$. Then we have
    \[
    g_ih_j^T+h_ig_j^T=(B_0B_1^T)_{ij}+(B_1B_0^T)_{ij}=-B_{ij}.
    \]
    Let $D_0$ be the $(r+s)\times t$ matrix over $\mathbb{F}_q$ with rows $d_1, \ldots, d_r, g_1, \ldots, g_s$ and $D_1$ be the $(r+s)\times t$ matrix over $\mathbb{F}_q$ with rows $0, \ldots, 0, h_1, \ldots, h_s$. Then we have
    \[
    D_0D_0^T=-A\oplus \mathcal{O}
    \]
    and
    \[
    D_0D_1^T+D_1D_0^T=\mathcal{O}\oplus(-B).
    \]
    Therefore,
    \[
    A\oplus uB+(D_0+uD_1)(D_0+uD_1)^T=\mathcal{O}.
    \]
    It follows that $\nu(A\oplus uB)\le t$.

    Next, we show that
    \[
    \nu(A\oplus K_m)\le \nu(A\oplus uB).
    \]
    Let $\nu(A\oplus uB)=t'$. Then, there is an $(r+s)\times t'$ matrix $E=E_0+uE_1$ where all entries of both $E_0$ and $E_1$ are in $\mathbb{F}_q$ such that
    \begin{equation}\label{eq1}
    A\oplus uB+EE^T=A\oplus uB+E_0E_0^T+u(E_0E_1^T+E_1E_0^T)=\mathcal{O}.
    \end{equation}
    Write the rows of $E_0$ as $e_1, \ldots, e_r, f_1, \ldots, f_s$ and the rows of $E_1$ as $g_1, \ldots, g_r, h_1, \ldots, h_s$. By reducing Equation~\ref{eq1} modulo $u$, we have
    \[
    e_ie_j^T=-A_{ij},\quad e_if_j^T=0\quad\mbox{and}\quad f_if_j^T=0
    \]
    for every $i, j$.
    Let $U=\operatorname{span}(e_1, \ldots, e_r)$. It is straightforward that $U$ is non-singular, and $f_i\in U^\perp$ for every $i$. By Equation~\ref{eq1}, we have
    \begin{equation}\label{eq2}
    f_ih_j^T+h_if_j^T=-B_{ij}
    \end{equation}
    for every $i, j$. Since $U$ is non-singular, we have $\mathbb{F}_q^{t'}=U\oplus U^\perp$. Since $f_i\in U^\perp$ for every $i$, replacing each $h_i$ by its $U^\perp$-component does not change Equation~\ref{eq2}. Thus, we may assume that $h_i\in U^\perp$ for every $i$. 
    
    Let $V=\operatorname{span}(f_1, \ldots, f_s)$ and $w$ be the Witt index of $U^\perp$. Then $V$ is a totally isotropic subspace of $U^\perp$. Let $V'$ be a maximal totally isotropic subspace of $U^\perp$ which contains $V$. Let $\alpha_1, \ldots, \alpha_w$ be a basis of $V'$. Then there is a $w$-dimensional totally isotropic subspace $W$ of $U^\perp$ with basis $\beta_1, \ldots, \beta_w$ such that
    \[
        \alpha_i\alpha_j^T=\beta_i\beta_j^T=0,\qquad \alpha_i\beta_j^T=\delta_{ij}
    \]
    by Theorem~\ref{Witt_basis}. Since $V'\oplus W$ is a non-singular subspace of $U^\perp$, we have
    \[
    U^\perp=(V'\oplus W)\perp (V'\oplus W)^\perp
    \]
    where $(V'\oplus W)^\perp$ is the orthogonal complement of $(V'\oplus W)$ in $U^\perp$. Write
    \[
    f_i=\sum_{k=1}^wG_{ik}\alpha_k\quad\mbox{and}\quad h_i=\sum_{k=1}^w(H_{ik}\beta_k+H'_{ik}\alpha_k)+g_i
    \]
    where $g_i\in (V'\oplus W)^\perp$. Note that $f_ig_j^T=0$ for every $i,\,j$.  Then we have
    \[
    -B_{ij}=f_ih_j^T+h_if_j^T=(GH^T)_{ij}+(HG^T)_{ij}.
    \]
    It follows that
    \[
    -B=GH^T+HG^T=[G~|~H]K_w[G~|~H]^T,
    \]
    which implies that $m\le w$. Let $E'$ be a matrix over $\mathbb{F}_q$ with rows
    \[
    e_1, \ldots, e_r, \alpha_1, \ldots, \alpha_m, -\beta_1, \ldots, -\beta_m.
    \]
    Then we have
    \[
    A\oplus K_m+E'(E')^T=A\oplus K_m+(-A\oplus -K_m)=\mathcal{O}.
    \]
    It follows that $\nu(A\oplus K_m)\le t'$. This completes the proof.
\end{proof}

\begin{theorem}\label{thm:exact_odd}
    Let $\mathcal{C}$ be a code of length $n$ over $R=\mathbb{F}_q+u\mathbb{F}_q$ with $q$ odd. Let $G$ be a generator matrix of $\mathcal{C}$. Write $GG^T=A_0+uA_1$ for some matrices $A_0, A_1$ over $\mathbb{F}_q$. Let $r=\operatorname{rank}_{\mathbb{F}_q}(A_0)$, and let $P$ be an invertible matrix over $\mathbb{F}_q$ such that
    \[
    PA_0P^T=S\oplus \mathcal{O}
    \]
    where $S=I_r$ or $S=I_{r-1}\oplus [z]$ for some $z\in \mathbb{F}_q^\times\backslash(\mathbb{F}_q^\times)^2$. Write
    \[
    PA_1P^T=\begin{pmatrix}
        X & Y\\Y^T & Z
    \end{pmatrix}
    \]
    where $X$ is a $r\times r$ symmetric matrix over $\mathbb{F}_q$. Let $m=\mu(-Z)$. Then
    \[
    n_s(\mathcal{C})=\begin{cases}
        n+r+2m, & (-1)^{r+m}\det(S)\in (\mathbb{F}_q^\times)^2,\\
        n+r+2m+1, & (-1)^{r+m}\det(S)\notin (\mathbb{F}_q^\times)^2.
    \end{cases}
    \]
\end{theorem}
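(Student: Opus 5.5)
We reduce $GG^T$ to the normal form $S\oplus uZ\oplus$ (zero block) by a congruence over $R$, exactly as in the proof of Theorem~\ref{thm:exact_even}. Then Lemmas~\ref{remove-u-part} and~\ref{nuinv} compute $\nu$ of that normal form.

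\textbf{Step 1: reduce $A_0$.} Replacing $GG^T$ by $PGG^TP^T$ (with $P$ viewed over $R$) does not change $\nu$, by Proposition~\ref{embed-cong}. So we may assume $A_0=S\oplus\mathcal{O}$ and
\[
A_1=\begin{pmatrix} X & Y\\ Y^T & Z\end{pmatrix}.
\]

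\textbf{Step 2: kill $X$ and $Y$.} Conjugate by $U=I+uQ$ with
\[
Q=\begin{pmatrix} Q_1 & \mathcal{O}\\ Q_2 & \mathcal{O}\end{pmatrix}.
\]
This gives
\[
UGG^TU^T=A_0+u\bigl(A_1+QA_0+A_0Q^T\bigr).
\]
We need $Q_1S+SQ_1^T=-X$ and $Q_2S=-Y^T$. Since $q$ is odd and $S$ is invertible, $Q_1=-\tfrac12 XS^{-1}$ works. Indeed, $SQ_1^T=-\tfrac12 S S^{-1}X^T=-\tfrac12 X$, using that $X$ and $S$ are symmetric. Also $Q_2=-Y^TS^{-1}$ works. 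This is simpler than the even case because no surjectivity argument is needed. Hence $GG^T$ is congruent over $R$ to $S\oplus uZ$, and so $\nu(GG^T)=\nu(S\oplus uZ)$.

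\textbf{Step 3: apply Lemma~\ref{remove-u-part}.} With $A=S$ and $B=Z$, the lemma gives $\nu(S\oplus uZ)=\nu(S\oplus K_m)$, where $m=\mu(-Z)$. The edge cases $r=0$ or $m=0$ should be checked to fit the lemma, or handled directly.

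\textbf{Step 4: apply Lemma~\ref{nuinv}.} The matrix $S\oplus K_m$ is an invertible symmetric matrix of size $k=r+2m$ with entries in $\mathbb{F}_q$. Lemma~\ref{nuinv} gives $\nu=r+2m$ if $\det(-(S\oplus K_m))$ is a nonzero square, and $r+2m+1$ otherwise. Compute
\[
\det(-(S\oplus K_m))=(-1)^{r+2m}\det(S)\det(K_m).
\]
Here $\det K_m=(-1)^m$, since $K_m$ is a product of $m$ transpositions of the identity. So the determinant equals $(-1)^{r+m}\det S$, which is exactly the condition in the theorem. Finally, $n_s(\mathcal{C})=n+\nu(GG^T)$.

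\textbf{Main obstacle.} The delicate point is Step 3. We must check that $m=\mu(-Z)$ is well defined independently of the choice of $P$, which follows from Remark~\ref{mucong} together with the congruence argument for $Z$ as in the even case. We must also check the boundary cases $r=0$ and $m=0$. When $m=0$, $K_0$ is empty and $\nu(S\oplus K_0)=\nu(S)$. When $r=0$, $S$ is empty, $\det S=1$, and Lemma~\ref{remove-u-part} still applies with $U=0$ in its proof.
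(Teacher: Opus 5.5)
Your proposal is correct and follows the same route as the paper. You congruence-reduce by $P$ and then by $I+uQ$ with $Q_1=-\tfrac12 XS^{-1}$, $Q_2=-Y^TS^{-1}$ to reach $S\oplus uZ$, then apply Lemma~\ref{remove-u-part} and Lemma~\ref{nuinv} using $\det(-(S\oplus K_m))=(-1)^{r+m}\det S$. Your remarks on the boundary cases $r=0$, $m=0$ and on the independence of $m$ from $P$ are harmless extras; the latter is not actually needed, since the argument works for any fixed $P$.
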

\begin{proof}
First, we note that $m$ is independent to the choice of $P$ by Remark~\ref{mucong}. Since replacing $GG^T$ by $PGG^TP^T$ does not change $\nu(GG^T)$ by Proposition~\ref{embed-cong}, we may assume that $A_0=S\oplus \mathcal{O}$. Let $k$ denote the number of rows of $G$. Let
\[
Q_1=-\frac{XS^{-1}}{2}\quad\mbox{and}\quad Q_2=-Y^TS^{-1}.
\]
Define a $k\times k$ matrix $Q$ over $\mathbb{F}_q$ as
\[
Q=\begin{pmatrix}
    Q_1 & Q_3\\Q_2 & Q_4
\end{pmatrix}
\]
for some matrices $Q_3$ and $Q_4$. Let $U=I+uQ$. Then we have
\begin{align*}
U(A_0+uA_1)U^T&=A_0+u(A_1+QA_0+A_0Q^T)\\
&=A_0+uA_1+u\begin{pmatrix}
        Q_1S+SQ_1^T & SQ_2^T\\Q_2S & \mathcal{O}
    \end{pmatrix}\\
    &=A_0+\begin{pmatrix}
        \mathcal{O} & \mathcal{O}\\\mathcal{O} & uZ
    \end{pmatrix}\\&=S\oplus uZ.
\end{align*}

Then, by Lemma~\ref{remove-u-part}, we have
\[
\nu(GG^T)=\nu(S\oplus uZ)=\nu(S\oplus K_m).
\]
Note that $S\oplus K_m$ is a nonsingular symmetric matrix over $\mathbb{F}_q$ of size $r+2m$. Moreover we have
\begin{align*}
\det(-(S\oplus K_m))&=(-1)^{r+2m}\det (S)\det(K_m)\\&=(-1)^{r+m}\det(S)
\end{align*}
since $\det(K_m)=(-1)^m$. Therefore, by Lemma~\ref{nuinv}, 
\[
\nu(GG^T)=\nu(S\oplus K_m)=\begin{cases}
        r+2m, &(-1)^{r+m}\det(S)\in (\mathbb{F}_q^\times)^2,\\
        r+2m+1, & (-1)^{r+m}\det(S)\notin (\mathbb{F}_q^\times)^2.
    \end{cases}
\]
This completes the proof.
\end{proof}

\section{Construction of shortest self-orthogonal embeddings of linear codes over {\it R}}

The following theorem shows that every self-orthogonal code over $R$ with nonzero free rank can be viewed as a shortest self-orthogonal embedding of another code.
\begin{theorem}\label{SOcofromsmall}
    Let $\mathcal{C}$ be a self-orthogonal code of length $n$ over $R=\mathbb{F}_q+u\mathbb{F}_q$ for some prime power $q$. Let $\mathcal{C}$ is of type $\{k_1,\,k_2\}$ where $k_1>0$. For every integer $n_0$ satisfying $n-k_1\le n_0<n$, there is a code $\mathcal{C}_0$ over $R$ with length $n_0$ whose shortest self-orthogonal embedding is $\mathcal{C}$.
\end{theorem}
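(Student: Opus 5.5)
Take $\mathcal{C}$ in standard form with generator matrix
\[
G=\begin{pmatrix} I_{k_1} & A & B_1+uB_2\\ \mathcal{O} & uI_{k_2} & uD\end{pmatrix},
\]
and set $t=n-n_0$, so $1\le t\le k_1$. For $\mathcal{C}_0$ we take the code generated by the matrix $G_0$ obtained by deleting the last $t$ columns of $G$ (after a column permutation if needed). Then $G=[G_0~|~G_1]$, so $\mathcal{C}$ is already a self-orthogonal embedding of $\mathcal{C}_0$ and $n_s(\mathcal{C}_0)\le n$. The point to care about is which $t$ columns are deleted. If $\mathcal{C}$ is an embedding of $\mathcal{C}_0$, the generator matrices must match row by row, so the rows of $G_0$ must generate a code whose generator matrix is $G_0$. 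Keeping the identity block $I_{k_1}$ inside $G_0$ is therefore convenient. Since $n-t\ge n-k_1$, it is natural to delete $t$ of the first $k_1$ columns (the identity columns) when the remaining columns are few. Alternatively, we delete columns from the tail and permute. Either way, the deleted columns should form a block $G_1$ whose top part is an invertible $t\times t$ block over $R$.

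\textbf{Concrete choice.} Permute columns so that $G=[G_0~|~G_1]$, where $G_1$ consists of the first $t$ identity columns of $I_{k_1}$, moved to the end. Then
\[
G_1=\begin{pmatrix} I_t\\ \mathcal{O}\end{pmatrix},
\]
with $k_1+k_2-t$ zero rows below $I_t$. The claim to prove is $\nu(G_0G_0^T)=t$. Self-orthogonality gives $G_0G_0^T=-G_1G_1^T=-(I_t\oplus\mathcal{O})$, so $\nu(G_0G_0^T)\le t$. (If the paper's definition of equivalence is used, the column permutation is harmless, or we simply define $\mathcal{C}$ up to the permutation.)

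\textbf{Lower bound.} Suppose $G_0G_0^T+BB^T=\mathcal{O}$ with $B$ having $m$ columns. Then $BB^T=I_t\oplus\mathcal{O}$. Reducing modulo $u$, the top $t$ rows $\overline{B}_{\mathrm{top}}$ of $\overline{B}$ satisfy $\overline{B}_{\mathrm{top}}\overline{B}_{\mathrm{top}}^T=I_t$, so $\overline{B}_{\mathrm{top}}$ has rank $t$, and hence $m\ge t$. Equivalently, we can run the generator-counting argument from Lemma~\ref{nunonzerodiag}: the image of $\mathbf{x}\mapsto\mathbf{x}B^T$ contains $R^t$, which needs $t$ generators. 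This gives $\nu(G_0G_0^T)=t$, and therefore $n_s(\mathcal{C}_0)=n_0+t=n$. The code $\mathcal{C}$ is a self-orthogonal embedding of $\mathcal{C}_0$ of that length, so it is a shortest one.

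\textbf{Main obstacle.} The expected difficulty is the bookkeeping that $G_0$ is a valid generator matrix of $\mathcal{C}_0$ in the paper's sense, i.e.\ of type consistent with $G$. We handle it by noting that the remaining identity columns $I_{k_1-t}$, together with $A,\,B_1+uB_2$, still make the last $k_1-t$ free rows independent. For the first $t$ rows, independence and freeness are needed only insofar as the definition of embedding requires $G_0$ to generate $\mathcal{C}_0$, and it does so by definition of $\mathcal{C}_0=\mathcal{C}_{G_0}$. The bound $n_0\ge n-k_1$, i.e.\ $t\le k_1$, is exactly what guarantees that enough identity columns are available to delete.
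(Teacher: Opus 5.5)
Your construction and both bounds coincide with the paper's. You delete $t=n-n_0$ columns of the identity block $I_{k_1}$, note that $G_0G_0^T=-(I_t\oplus\mathcal{O})$ gives $\nu(G_0G_0^T)\le t$, and get $m\ge t$ from $\overline{B}_{\mathrm{top}}\overline{B}_{\mathrm{top}}^T=I_t$; the paper writes this as $t=\operatorname{rank}\overline{XX^T}\le\operatorname{rank}\overline{X}$.

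The gap is the step you flagged as the main obstacle and then dismissed. In this paper a generator matrix is more than a spanning matrix. Its rows must split into $k_1$ rows outside $(uR)^{n}$ and $k_2$ rows in $(uR)^n$, such that every codeword has a \emph{unique} representation with coefficients in $R$ and $\mathbb{F}_q$ respectively. A self-orthogonal embedding of $\mathcal{C}_0$ is then defined as a code generated by $[G_0~|~B]$ with $G_0$ such a generator matrix. So ``$G_0$ generates $\mathcal{C}_0=\mathcal{C}_{G_0}$'' is not enough. Your remark about rows $t+1,\dots,k_1$ also says nothing about the first $t$ rows, which lost their pivots. A priori one of them could fall into $(uR)^{n_0}$ or become dependent on the others. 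Then $\mathcal{C}_0$ would have a smaller type, and $G_0$ would not be a generator matrix of it.

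The missing argument uses self-orthogonality a second time, together with $G_1^TG_1=I_t$. Suppose $\mathbf{c}\in\mathcal{C}$ vanishes on the $n_0$ kept coordinates, so $\mathbf{c}=(\mathbf{0},\mathbf{c}_0)$ after the permutation. Then $\mathbf{c}_0=\mathbf{c}_0G_1^TG_1=(\mathbf{c}G^T)G_1=\mathbf{0}$, because $\mathbf{c}G^T=\mathbf{c}_0G_1^T$ and $\mathbf{c}G^T=\mathbf{0}$ by self-orthogonality. Hence the puncturing map $\pi:\mathcal{C}\to\mathcal{C}_0$ is injective. It follows that $(\mathbf{x},\mathbf{y})\mapsto(\mathbf{x},\mathbf{y})G_0=\pi\bigl((\mathbf{x},\mathbf{y})G\bigr)$ is a bijection $R^{k_1}\times\mathbb{F}_q^{k_2}\to\mathcal{C}_0$, which gives uniqueness of representations. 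Moreover, no row $\mathbf{g}_i$ of $G_0$ with $i\le k_1$ can lie in $(uR)^{n_0}$. Otherwise the nonzero element $(u\mathbf{e}_i,\mathbf{0})$ would map to $u\mathbf{g}_i=\mathbf{0}$, contradicting injectivity. Thus $\mathcal{C}_0$ has type $\{k_1,k_2\}$ with generator matrix $G_0$. Only then does your (correct) computation $\nu(G_0G_0^T)=t$ show that $\mathcal{C}$ is a shortest self-orthogonal embedding of $\mathcal{C}_0$ in the paper's sense.
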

\begin{proof}
    Let $G$ be the standard generator matrix of $\mathcal{C}$ given as
    \[
    G=\begin{pmatrix}
        I_{k_1} & A & B_1+uB_2\\
        \mathcal{O} & uI_{k_2} & uD
    \end{pmatrix}
    \]
    where $A,\,B_1,\,B_2$ and $D$ are matrices over $\mathbb{F}_q$. Let $t=n-n_0$, and choose $t$ columns in the first identity block of $G$. After coordinate permutation, we write
    \[
    G=[G_0~|~B]
    \]
    where $B$ consists of these $t$ columns. Let $\mathcal{C}_0$ be the code generated by $G_0$. Define an $R$-linear map $\pi:\mathcal{C}\to \mathcal{C}_0$ as the puncturing of these $t$ coordinates. For $\mathbf{c}\in\mathcal{C}$, let $\pi(\mathbf{c})=0$, that is, $\mathbf{c}=(\mathbf{0}, \mathbf{c}_0)$ for some $\mathbf{c}_0\in R^t$. Since the columns of $B$ are distinct standard basis vectors, we have $B^TB=I$, and it follows that
    \[
    \mathbf{c}_0=\mathbf{c}_0B^TB=(\mathbf{c}G^T)B=\mathbf{0}.
    \]
    This implies that $\pi$ is an isomorphism, and $\mathcal{C}_0$ is a code of type $\{k_1,\,k_2\}$.

    We first show that $G_0$ is a generator matrix of $\mathcal{C}_0$. Define two maps $\Phi_G:R^{k_1}\times \mathbb{F}_q^{k_2}\to \mathcal{C}$ and $\Phi_{G_0}:R^{k_1}\times \mathbb{F}_q^{k_2}\to \mathcal{C}_0$, respectively as follows:
    \[
    \Phi_G(\mathbf{x}, \mathbf{y})=(\mathbf{x}, \mathbf{y})G\quad\mbox{and}\quad
    \Phi_{G_0}(\mathbf{x}, \mathbf{y})=(\mathbf{x}, \mathbf{y})G_0.
    \]
    Clearly, $\Phi_G$ is a bijection, and $\Phi_{G_0}=\pi \circ\Phi_G$ is also a bijection. Moreover, the last $k_2$ rows of $G_0$ lie in $(uR)^{n_0}$, while its first $k_1$ rows cannot lie in $(uR)^{n_0}$, since otherwise multiplying it by $u$ would yield zero, contradicting the injectivity of $\Phi_{G_0}$. Therefore, $G_0$ is a generator matrix of $\mathcal C_0$, and it follows that $\mathcal{C}$ is a self-orthogonal embedding of $\mathcal{C}_0$.

    Next, we prove the shortestness. Note that
    \[
    \mathcal{O}=GG^T=G_0G_0^T+BB^T.
    \]
    Thus $\operatorname{rank}_{\mathbb{F}_q}(\overline{G_0G_0^T})=\operatorname{rank}_{\mathbb{F}_q}(-\overline{BB^T})=t$. Suppose that $[G_0~|~X]$ generates a self-orthogonal code. Then we have
    \[
    G_0G_0^T+XX^T=\mathcal{O}
    \]
    for some $(k_1+k_2)\times s$ matrix $X$. Then we have
    \[
    t=\operatorname{rank}_{\mathbb{F}_q}(-\overline{G_0G_0^T})=\operatorname{rank}_{\mathbb{F}_q}(\overline{XX^T})\le \operatorname{rank}_{\mathbb{F}_q}(\overline{X})\le s.
    \]
    Thus, $\mathcal{C}$ is a shortest self-orthogonal embedding of $\mathcal{C}_0$.
\end{proof}

\begin{remark}
    In Theorem~\ref{SOcofromsmall}, we assumed that the free rank $k_1$ of a code $\mathcal{C}$ satisfies $k_1>0$. If $k_1=0$, then this means that any element of a generator matrix of $\mathcal{C}$ is a multiple of $u$. Then every punctured code of $\mathcal{C}$ is also self-orthogonal since a code of free rank zero is self-orthogonal. Consequently, any self-orthogonal code of free rank zero can be obtained as a self-orthogonal embedding of another self-orthogonal code of free rank zero, but not the shortest one. Therefore, we do not consider this case in detail.
\end{remark}

Let $\mathcal{C}$ be a linear code over $R=\mathbb{F}_q+u\mathbb{F}_q$ with generator matrix $G$. In Theorem~\ref{thm:exact_even}, for even $q$, we constructed a congruence between $GG^T$ and a matrix of the form $S\oplus uJ_\rho\oplus \mathcal{O}$, where $S$ is either $I$ or $J$. In Lemma~\ref{nunonzerodiag} and Lemma~\ref{nuzerodiag}, we explicitly constructed a matrix $B$ having the minimum possible number of columns such that
\[
S\oplus uJ_\rho+ BB^T=\mathcal{O}.
\]

In Theorem~\ref{thm:exact_odd}, for odd $q$, we also constructed a congruence between $GG^T$ and a matrix of the form $S\oplus uZ$, where $Z$ is as defined in Theorem~\ref{thm:exact_odd}. In Lemma~\ref{nuinv} and Lemma~\ref{remove-u-part}, we explicitly constructed a matrix $B$ with the minimum number of columns such that
\[
S\oplus uZ+BB^T=\mathcal{O}.
\]

Consequently, these constructions yield an explicit construction of a shortest self-orthogonal embedding of $\mathcal{C}$. In this section, we present an algorithm for obtaining all other shortest self-orthogonal embeddings of $\mathcal{C}$ from this particular one.

Let $\mathcal{C}$ be a linear code over $R=\mathbb{F}_q+u\mathbb{F}_q$ with generator matrix $G$. Let $GG^T=A_0+uA_1$ where $A_0$ and $A_1$ are matrices over $\mathbb{F}_q$. Let $B$ be a $k\times m$ matrix over $R$ where $k$ is the number of rows of $G$ and $m=\nu(GG^T)$, and let $B=X+uY$ where $X$ and $Y$ are matrices over $\mathbb{F}_q$. We assume that $m>0$. Then the matrix $[G~|~B]$ generates a shortest self-orthogonal embedding of $\mathcal{C}$ if and only if
\begin{align*}
    \mathcal{O}=[G~|~B][G~|~B]^T=(A_0+XX^T)+u(A_1+XY^T+YX^T).
\end{align*}
Therefore, finding a shortest self-orthogonal embedding of $\mathcal{C}$ is equivalent to finding a solution $(X,Y)$ to the following system:
\[
XX^T=-A_0\quad\mbox{and}\quad XY^T+YX^T=-A_1.
\]
For a matrix $X\in M_{k\times m}(\mathbb{F}_q)$, let $\varphi_X:M_{k\times m}(\mathbb{F}_q)\to \operatorname{Sym}_k(\mathbb{F}_q)$ be the linear map given as 
\[
\varphi_X(Y)=XY^T+YX^T.
\]
We define
\[
\chi_{m, G}=\{X\in M_{k\times m}(\mathbb{F}_q)~|~XX^T=-A_0, \mbox{ and }\varphi_X^{-1}(-A_1)\ne \emptyset\}.
\]
Throughout this section, we fix the code $\mathcal{C}$ and retain the notation introduced in Section~\ref{sec3} and Section~\ref{sec4}.

\begin{proposition}\label{liftcriterion}
    For a matrix $A\in M_{k\times m}(\mathbb{F}_q)$ for some prime power $q$, let $N_A$ be a matrix whose rows form a basis of $\ker A$. Let $M$ be a $k\times k$ symmetric matrix over $\mathbb{F}_q$, and when $q$ is even, assume further that $M$ is alternate. Then there is a matrix $B$ such that $\varphi_A(B)=M$ if and only if $N_AMN_A^T=\mathcal{O}$.
\end{proposition}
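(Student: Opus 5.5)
The plan is to show that the image of $\varphi_A$ is exactly the set of admissible $M$ (symmetric, and alternate when $q$ is even) that vanish on $\ker A$. Here $\ker A=\{\mathbf{x}\in\mathbb{F}_q^k : \mathbf{x}A=\mathbf{0}\}$, following the row-vector convention of the paper, so $N_AA=\mathcal{O}$. I will prove one inclusion directly and get the other by counting dimensions.

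\emph{Necessity.} Suppose $M=\varphi_A(B)=AB^T+BA^T$. Then
\[
N_AMN_A^T=(N_AA)B^TN_A^T+N_AB(N_AA)^T=\mathcal{O}.
\]
Also, $\varphi_A(B)$ is always symmetric. When $q$ is even its diagonal entries are $2(AB^T)_{ii}=0$, so it is alternate. Hence $\operatorname{Im}\varphi_A$ lies in the subspace
\[
W=\{M \text{ admissible} : N_AMN_A^T=\mathcal{O}\}.
\]

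\emph{Sufficiency.} It remains to show $\dim\operatorname{Im}\varphi_A=\dim W$. Both sides are invariant under replacing $A$ by $PAQ$ with $P\in GL_k(\mathbb{F}_q)$ and $Q\in GL_m(\mathbb{F}_q)$. Indeed, $\varphi_{PAQ}(PBQ^{-T})=P\varphi_A(B)P^T$, and we may take $N_{PA}=N_AP^{-1}$, so $M\mapsto PMP^T$ carries one configuration to the other. We may therefore assume
\[
A=\begin{pmatrix} I_r & \mathcal{O}\\ \mathcal{O} & \mathcal{O}\end{pmatrix},\qquad r=\operatorname{rank}A,
\]
with $N_A=[\mathcal{O}~|~I_{k-r}]$.

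Write $B=\begin{pmatrix}B_{11}&B_{12}\\ B_{21}&B_{22}\end{pmatrix}$. A direct computation gives
\[
AB^T+BA^T=\begin{pmatrix} B_{11}^T+B_{11} & B_{21}^T\\ B_{21} & \mathcal{O}\end{pmatrix}.
\]
The condition $N_AMN_A^T=\mathcal{O}$ says exactly that the lower-right $(k-r)\times(k-r)$ block of $M$ is zero. So, given $M\in W$ with blocks $M_{11},M_{12}$, we set $B_{21}=M_{12}^T$. We then need $B_{11}$ with $B_{11}+B_{11}^T=M_{11}$.

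For $q$ odd, take $B_{11}=\tfrac12 M_{11}$. For $q$ even, $M_{11}$ is symmetric with zero diagonal. Take $B_{11}$ to be its strictly upper triangular part; equivalently, apply the surjectivity argument for $A\mapsto AS+SA^T$ from the proof of Theorem~\ref{thm:exact_even} with $S=I_r$. This realizes every $M\in W$, completing the proof.

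\emph{Main obstacle.} The delicate step is the reduction to normal form: checking that $\ker A$ transforms correctly under $A\mapsto PAQ$ and that admissibility is preserved under congruence. Both are routine. The alternate hypothesis when $q$ is even is essential only in solving $B_{11}+B_{11}^T=M_{11}$, since without it the diagonal cannot be matched.
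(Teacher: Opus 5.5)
Your proof is correct and follows the paper's argument. The paper likewise takes $U\in GL_k(\mathbb{F}_q)$ whose last rows span $\ker A$ and $V\in GL_m(\mathbb{F}_q)$ with $UAV=\begin{pmatrix} I_a & \mathcal{O}\\ \mathcal{O} & \mathcal{O}\end{pmatrix}$, and sets $B=U^{-1}\begin{pmatrix} T & \mathcal{O}\\ M_2^T & \mathcal{O}\end{pmatrix}V^T$ with $T+T^T=M_1$, which is your normal-form construction transported back; you additionally supply the explicit choice of $T$ that the paper leaves implicit, and your announced dimension count is never actually needed since you realize every $M$ directly.
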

\begin{proof}
    Assume that such a matrix $B$ exists. Then we obtain
    \[
    N_AMN_A^T=N_A(\varphi_A(B))N_A^T=N_A(AB^T+BA^T)N_A^T=\mathcal{O}.
    \]

    Conversely, assume that $N_AMN_A^T=\mathcal{O}$. Let $\operatorname{rank}(A)=a$, and let $U\in GL_k(\mathbb{F}_q)$ be a matrix whose last $k-a$ rows form a basis of $\ker A$. Then we have
    \[
    UMU^T=\begin{pmatrix}
        M_1 & M_2\\M_2^T & \mathcal{O}
    \end{pmatrix}
    \]
    for some matrices $M_1$ and $M_2$. Choose $T\in M_a(\mathbb{F}_q)$ such that $T+T^T=M_1$. Choose a matrix $V\in GL_m(\mathbb{F}_q)$ such that
    \[
    UAV=\begin{pmatrix}
    I_a & \mathcal{O}\\\mathcal{O} & \mathcal{O}
    \end{pmatrix}.
    \]
    Take
    \[
    B=U^{-1}\begin{pmatrix}
        T & \mathcal{O}\\M_2^T & \mathcal{O} 
    \end{pmatrix}V^T.
    \]
    Then we have
    \begin{align*}
    U(\varphi_A(B))U^T&=U(AB^T+BA^T)U^T\\&=\begin{pmatrix}
    I_a & \mathcal{O}\\\mathcal{O} & \mathcal{O}
    \end{pmatrix}\begin{pmatrix}
        T & \mathcal{O}\\M_2^T & \mathcal{O} 
    \end{pmatrix}^T+\begin{pmatrix}
        T & \mathcal{O}\\M_2^T & \mathcal{O} 
    \end{pmatrix}\begin{pmatrix}
    I_a & \mathcal{O}\\\mathcal{O} & \mathcal{O}
    \end{pmatrix}^T\\&=UMU^T
    \end{align*}
    This completes the proof.
\end{proof}

\begin{lemma}\label{X-rank}
    Let $\mathcal{C}$ be a code over $R=\mathbb{F}_q+u\mathbb{F}_q$ for some prime power $q$ with generator matrix $G$. Write $GG^T=A_0+uA_1$ for some matrices $A_0,\,A_1$ over $\mathbb{F}_q$. Let $N$ be a matrix whose rows form a basis of $\operatorname{Rad}(A_0)$, and let $\tilde{N}=NA_1N^T$. For $X\in \chi_{m, G}$, the rank of $X$ is given as
    \[
    \operatorname{rank}(X)=\begin{cases}
        \operatorname{rank}(A_0)+\mu(-\tilde{N}), & \mbox{if }q\mbox{ is odd},\\
        \operatorname{rank}(A_0)+\operatorname{rank}(\tilde{N})/2, & \mbox{if }q\mbox{ is even}.
    \end{cases}
    \]
\end{lemma}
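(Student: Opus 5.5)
The plan is to reduce to a normal form for $A_0$ and then squeeze $\operatorname{rank}(X)$ between two matching bounds. The lower bound comes from the solvability condition for $Y$, and the upper bound comes from the length $m=\nu(GG^T)$ computed in Theorems~\ref{thm:exact_even} and~\ref{thm:exact_odd}.

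\textbf{Step 1: normal form.} Let $P\in GL_k(\mathbb{F}_q)$ satisfy $PA_0P^T=S\oplus\mathcal{O}$, as in the proofs of Theorems~\ref{thm:exact_even} and~\ref{thm:exact_odd}. Replacing $G$ by $PG$ replaces $X$ by $PX$, $A_1$ by $PA_1P^T$ and $\tilde N$ by a congruent matrix. This changes neither $\operatorname{rank}(X)$ nor membership in $\chi_{m,G}$ (use $Y\mapsto PY$), nor $\mu(-\tilde N)$ (Remark~\ref{mucong}), nor $\operatorname{rank}(\tilde N)$. So we may assume $A_0=S\oplus\mathcal{O}$, and then $\tilde N$ may be taken to be the lower right block $Z$ of $A_1$.

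Write the rows of $X$ as $e_1,\dots,e_r,f_1,\dots,f_{k-r}\in\mathbb{F}_q^m$. The condition $XX^T=-A_0$ gives three facts:
\begin{itemize}
\item $U=\operatorname{span}(e_i)$ is non-singular of dimension $r$, since $(e_ie_j^T)=-S$ is invertible;
\item each $f_i$ lies in $U^\perp$;
\item $V=\operatorname{span}(f_i)$ is totally isotropic.
\end{itemize}
Since $U$ is non-singular and $V\subseteq U^\perp$, we have $U\cap V=0$, and therefore $\operatorname{rank}(X)=r+\dim V$. It remains to show that $d=\dim V$ equals $\mu(-Z)$ for $q$ odd and $\operatorname{rank}(Z)/2$ for $q$ even.

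\textbf{Step 2: lower bound.} Because $X\in\chi_{m,G}$, there is a $Y$ with rows $g_i,h_i$ satisfying $XY^T+YX^T=-A_1$. The lower right block of this equation reads $f_ih_j^T+h_if_j^T=-Z_{ij}$. Fix a basis $\alpha_1,\dots,\alpha_d$ of $V$ and write $f_i=\sum_k F_{ik}\alpha_k$ with $F\in M_{(k-r)\times d}(\mathbb{F}_q)$. Set $H_{ik}=h_i\alpha_k^T$. Then
\[
-Z=FH^T+HF^T=[F~|~H]K_d[F~|~H]^T.
\]
For $q$ odd this gives $\mu(-Z)\le d$ directly from the definition of $\mu$. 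For $q$ even it gives $\operatorname{rank}(Z)\le 2\operatorname{rank}(F)\le 2d$.

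\textbf{Step 3: upper bound.} The space $\mathbb{F}_q^m$ with the standard form is non-singular and $U$ is non-singular, so $U^\perp$ is non-singular of dimension $m-r$. A totally isotropic subspace $V$ of a non-singular space satisfies $V\subseteq V^\perp$, so $d\le (m-r)/2$; this holds in any characteristic.
\begin{itemize}
\item For $q$ odd, Theorem~\ref{thm:exact_odd} gives $m-r\in\{2\mu(-Z),\,2\mu(-Z)+1\}$. Since $d$ is an integer, $d\le\mu(-Z)$.
\item For $q$ even, let $\rho=\operatorname{rank}(Z)$. Theorem~\ref{thm:exact_even} gives $m-r=\rho$ unless we are in the exceptional case, where $\rho=0$ and $m-r=1$. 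In the first case $d\le\rho/2$. In the exceptional case $d\le 1/2$ forces $d=0=\rho/2$.
\end{itemize}
Combining this with Step 2 gives the formula for $\operatorname{rank}(X)$.

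\textbf{Main obstacle.} The delicate point is the bookkeeping in Step 3, namely matching $m-r$ with the exact values from Theorems~\ref{thm:exact_even} and~\ref{thm:exact_odd} in every case, especially the exceptional even case. One must also check that the normalization of Step 1 is compatible with $\chi_{m,G}$; this is straightforward because $P$ is invertible over $\mathbb{F}_q$.
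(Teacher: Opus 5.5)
Your proof is correct and follows essentially the same route as the paper. Your $V=\operatorname{span}(f_1,\ldots,f_{k-r})$ is the paper's $\operatorname{Hull}(\mathcal{C}_X)=\langle NX\rangle$, your factorization $-Z=[F~|~H]K_d[F~|~H]^T$ is the paper's $X'=X_0E$, $Y_0=Y'E^T$ step, and the bound $d\le (m-r)/2$ followed by matching $m-r$ against Theorems~\ref{thm:exact_even} and~\ref{thm:exact_odd} is the same squeeze; the only difference is that you first normalize $A_0=S\oplus\mathcal{O}$ and argue with $U$ and $V$ directly, whereas the paper works with $N$ and the hull of $\mathcal{C}_X$ without normalizing.
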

\begin{proof}
Let $\mathcal{C}_X$ be a code generated by $X$ and let $\ell_X=\dim\operatorname{Hull}(\mathcal{C}_X)$. For $\mathbf{u}\in \operatorname{Rad}(A_0)$ and for any $\mathbf{v}\in \mathbb{F}_q^k$, we have
\[
(\mathbf{u}X)(\mathbf{v}X)^T=-\mathbf{u}A_0\mathbf{v}^T=0.
\]
Thus, $\mathbf{u}X\in \operatorname{Hull}(\mathcal{C}_X)$. Conversely, if $\mathbf{u}'X\in\operatorname{Hull}(\mathcal{C}_X)$, then $\mathbf{u}'XX^T=-\mathbf{u}'A_0=\mathbf{0}$. Thus, $\mathbf{u}'\in \operatorname{Rad}(A_0)$. It follows that
\[
\operatorname{Hull}(\mathcal{C}_X)=\langle NX\rangle.
\]
Since $\ker X=\{\mathbf{u}\in \mathbb{F}_q^k~|~\mathbf{u}X=\mathbf{0}\}\subseteq \operatorname{Rad}(A_0)$, we have
\begin{align*}
\ell_X=\operatorname{rank}(NX)&=\dim \operatorname{Rad}(A_0)-\dim \ker X\\&=(k-\operatorname{rank}(A_0))-(k-\operatorname{rank}(X))\\&=\operatorname{rank}(X)-\operatorname{rank}(A_0).
\end{align*}
Note that
\[
\ell_X\le \dim\mathcal{C}_X^\perp=m-\dim\mathcal{C}_X=m-(\ell_X+\operatorname{rank}(A_0)).
\]
Then we have
\begin{equation}
\ell_X\le (m-\operatorname{rank}(A_0))/2.
\end{equation}
Choose a matrix $Y$ such that $\varphi_X(Y)=-A_1$. Let $X'=NX$ and $Y'=NY$. This gives
\[
-\tilde{N}=-NA_1N^T=N(XY^T+YX^T)N^T=(X'(Y')^T+Y'(X')^T)
\]

We first consider the case that $q$ is odd. Let $X_0$ be the matrix whose columns form a basis of the column space of $X'$ whose dimension is $\ell_X$. Then we can represent $X'$ as $X'=X_0E$ for some matrix  $E$ with row rank $\ell_X$. Define $Y_0=Y'E^T$. Then we have
\[
-\tilde{N}=X_0Y_0^T+Y_0X_0^T=[X_0~|~Y_0](K_{\ell_X})[X_0~|~Y_0]^T,
\]
which implies that $\mu(-\tilde{N})\le \ell_X$. Since
\[
\ell_X\le \left\lfloor\frac{m-\operatorname{rank}(A_0)}{2}\right\rfloor\le \mu(-\tilde{N})
\]
by Theorem~\ref{thm:exact_odd}, it follows that $\ell_X=\mu(-\tilde{N})$.

Next, suppose that $q$ is even. Then we have
\[
\operatorname{rank}(\tilde{N})\le \operatorname{rank}(X'(Y')^T)+\operatorname{rank}(Y'(X')^T)\le 2\ell_X.
\]
Note that $2\ell_X\le m-\operatorname{rank}(A_0)$, which is equal to $\operatorname{rank}(\tilde{N})$ or $1$ by Theorem~\ref{thm:exact_even}. If it is equal to $1$, then $\operatorname{rank}(\tilde{N})$ must be zero. Therefore, in either case, we have
\[
\ell_X=\frac{\operatorname{rank}(\tilde{N})}{2}.
\]
\end{proof}

Suppose that $q$ is even. In Theorem~\ref{thm:exact_even}, we showed that $n_s(\mathcal{C})=n+r+1$ for a code $\mathcal{C}$ of length $n$ over $\mathbb{F}_q+u\mathbb{F}_q$ if the following three conditions hold:
\begin{enumerate}
    \item[(i)] $\mathrm{Res}(\mathcal{C})$ is not self-orthogonal,
    \item[(ii)] for any $\mathbf{x}\in \mathrm{Res}(\mathcal{C})$, $\mathbf{x}\cdot \mathbf{x}=0$, and
    \item[(iii)] $\rho=0$,
\end{enumerate}
where $r$ and $\rho$ are as defined in Theorem~\ref{thm:exact_even}. We refer to a code satisfying these three conditions as a code of \textit{Case I}. Otherwise, we refer to it as a code of \textit{Case II}.

\begin{lemma}\label{allonevector}
    Let $\mathcal{C}$ be a code over $R=\mathbb{F}_q+u\mathbb{F}_q$ with $q$ even, and let $G$ be a generator matrix of $\mathcal{C}$. Let $\mathbf{1}_m$ be the all-one vector of length $m$ for some $m>0$. If $\mathcal{C}$ is of Case I, then $\mathbf{1}_m\notin \langle X\rangle$ for every $X\in \chi_{m, G}$. Otherwise, $\mathbf{1}_m\in \langle X\rangle$ for every $X\in \chi_{m, G}$.
\end{lemma}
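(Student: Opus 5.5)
The plan is to use one elementary identity in characteristic two, namely
\[
\mathbf{x}\cdot\mathbf{x}=\sum_i x_i^2=\Bigl(\sum_i x_i\Bigr)^2=(\mathbf{x}\cdot\mathbf{1}_m)^2
\quad\text{for }\mathbf{x}\in\mathbb{F}_q^m .
\]
Hence $\mathbf{x}\cdot\mathbf{x}=0$ if and only if $\mathbf{x}\perp\mathbf{1}_m$. I combine this with the hull computation from the proof of Lemma~\ref{X-rank}. Throughout, $m=\nu(GG^T)>0$ and $X\in\chi_{m,G}$, so $XX^T=A_0$ (since $-A_0=A_0$). From Lemma~\ref{X-rank} and its proof I use three facts:
\[
\operatorname{Hull}(\mathcal{C}_X)=\langle NX\rangle,\qquad
\ell_X=\dim\operatorname{Hull}(\mathcal{C}_X)=\rho/2,\qquad
\operatorname{rank}(X)=r+\rho/2 .
\]

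\emph{Case I.} Here $A_0$ has zero diagonal, so every row $\mathbf{x}$ of $X$ satisfies $\mathbf{x}\cdot\mathbf{x}=0$. By the identity, every row is orthogonal to $\mathbf{1}_m$, and so $\mathcal{C}_X\subseteq\mathbf{1}_m^\perp$. Suppose $\mathbf{1}_m\in\mathcal{C}_X$. Then $\mathbf{1}_m$ is orthogonal to all of $\mathcal{C}_X$, so it lies in $\operatorname{Hull}(\mathcal{C}_X)$. But in Case I we have $\rho=0$, so $\ell_X=0$ and the hull is trivial. Since $m>0$, $\mathbf{1}_m\neq\mathbf{0}$, and this is a contradiction. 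Hence $\mathbf{1}_m\notin\langle X\rangle$.

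\emph{Case II.} Here Theorem~\ref{thm:exact_even} gives $m=r+\rho$. The dual then has dimension
\[
\dim\mathcal{C}_X^\perp=m-\operatorname{rank}(X)=(r+\rho)-(r+\rho/2)=\rho/2=\ell_X .
\]
Since $\operatorname{Hull}(\mathcal{C}_X)\subseteq\mathcal{C}_X^\perp$ and the two have equal dimension, they coincide, so $\mathcal{C}_X^\perp\subseteq\mathcal{C}_X$. Now take $\mathbf{y}\in\mathcal{C}_X^\perp$. It lies both in $\mathcal{C}_X$ and in $\mathcal{C}_X^\perp$, so $\mathbf{y}\cdot\mathbf{y}=0$, and by the identity $\mathbf{y}\cdot\mathbf{1}_m=0$. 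Therefore $\mathbf{1}_m\in(\mathcal{C}_X^\perp)^\perp=\mathcal{C}_X=\langle X\rangle$. The degenerate subcase $r=\rho=0$ would force $m=0$, which the standing assumption $m>0$ excludes.

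The only delicate point is making sure Lemma~\ref{X-rank} applies with $m=\nu(GG^T)$ in each subcase. In particular, in Case II one must use the correct formula $m=r+\rho$ rather than $r+1$. That is exactly the dichotomy of Theorem~\ref{thm:exact_even}, so I expect no real obstacle beyond bookkeeping.
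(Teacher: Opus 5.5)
Your proof is correct and follows the paper's argument essentially step for step: the characteristic-two identity $\mathbf{x}\cdot\mathbf{x}=(\mathbf{x}\cdot\mathbf{1}_m)^2$, the hull and rank formulas from Lemma~\ref{X-rank}, the trivial hull in Case I, and $\operatorname{Hull}(\mathcal{C}_X)=\mathcal{C}_X^\perp$ (forced by $m=r+\rho$) in Case II. The only cosmetic difference is in Case I: you check orthogonality to $\mathbf{1}_m$ on the rows of $X$ using the zero diagonal of $A_0$ and then extend by linearity, whereas the paper applies the alternate property of $A_0$ directly to an arbitrary codeword $\mathbf{u}_0X$.
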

\begin{proof}
    Let $\mathcal{C}_X=\langle X\rangle$. By Lemma~\ref{X-rank}, we have
    \[
    \dim \mathcal{C}_X=\operatorname{rank}(A_0)+\frac{\operatorname{rank}(\tilde{N})}{2}
    \]
    where $\tilde{N}=NA_1N^T$ and $N$ is a matrix whose rows form a basis of $W=\operatorname{Rad}(A_0)$. Note that 
    \[
    \dim\operatorname{Hull}(\mathcal{C}_X)=\dim\mathcal{C}_X-\operatorname{rank}(A_0)=\frac{\operatorname{rank}(\tilde{N})}{2}.
    \]
    Assume that $\mathcal{C}$ is of Case II. Since $\dim \mathcal{C}_X^\perp=m-\dim\mathcal{C}_X=\frac{\operatorname{rank}(\tilde{N})}{2}$ and $\operatorname{Hull}(\mathcal{C}_X)\subseteq \mathcal{C}_X^\perp$, we have $\operatorname{Hull}(\mathcal{C}_X)= \mathcal{C}_X^\perp$. For any $\mathbf{u}\in \operatorname{Hull}(\mathcal{C}_X)$, we have
    \[
    (\mathbf{u}\cdot \mathbf{1}_m)^2= \mathbf{u}\cdot\mathbf{u}=0,
    \]
    which implies that $\mathbf{u}\cdot \mathbf{1}_m=0$. Therefore, $\mathbf{1}_m\in \operatorname{Hull}(\mathcal{C}_X)^\perp=\mathcal{C}_X$.

    Conversely, assume that $\mathcal{C}$ is of Case I. Then $A_0\ne \mathcal{O}$ is alternate, $\operatorname{rank}(\tilde{N})=0$ and $m=\operatorname{rank}(A_0)+1$. Then, by Lemma~\ref{X-rank}, $\dim\mathcal{C}_X=\operatorname{rank}(A_0)$ and $\operatorname{Hull}(\mathcal{C}_X)=\{0\}$. Since $A_0$ is alternate, for any $\mathbf{u}=\mathbf{u}_0X\in \mathcal{C}_X$ for some $\mathbf{u}_0\in \mathbb{F}_q^k$, we have
    \[
   (\mathbf{u}\cdot \mathbf{1}_m)^2= \mathbf{u}\cdot\mathbf{u}=-\mathbf{u}_0A_0\mathbf{u}_0^T=0,
    \]
    which implies that $\mathbf{u}\cdot \mathbf{1}_m=0$. Thus, $\mathbf{1}_m\in \mathcal{C}_X^\perp$. Since $\operatorname{Hull}(\mathcal{C}_X)=\{0\}$, $\mathbf{1}_m$ cannot be in $\mathcal{C}_X$. This completes the proof.
\end{proof}

For $P\in GL_k(\mathbb{F}_q)$, let
    \[
    K_P=(\ker X_*)P^{-1}=\{\mathbf{x}P^{-1}~|~\mathbf{x}\in \ker X_*\}.
    \]
Define
    \[
    \Gamma_X=\{P\in GL_k(\mathbb{F}_q)~|~PA_0P^T=A_0~\mbox{and}~\mathbf{u}A_1\mathbf{v}^T=0~\mbox{for every}~\mathbf{u},\, \mathbf{v}\in K_P\}.
    \]

\begin{theorem}\label{everychi}
    Let $\mathcal{C}$ be a code over $R=\mathbb{F}_q+u\mathbb{F}_q$ for some prime power $q$ with generator matrix $G$. Let $X_*\in \chi_{m, G}$. Then
    \[
    \chi_{m, G}=\{PX_*Q~|~P\in\Gamma_X, Q\in O_m(\mathbb{F}_q)\}
    \]
    where $O_m(\mathbb{F}_q)$ is the orthogonal group of degree $m$ over $\mathbb{F}_q$.
\end{theorem}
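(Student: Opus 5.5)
The plan is to prove the two inclusions separately. Throughout, the key observation is that for any $P\in GL_k(\mathbb{F}_q)$ the kernel of $PX_*$ is
\[
\{\mathbf{y}~|~\mathbf{y}PX_*=\mathbf{0}\}=(\ker X_*)P^{-1}=K_P .
\]
Combined with Proposition~\ref{liftcriterion}, this turns the second defining condition of $\chi_{m,G}$ into a condition on $K_P$. Proposition~\ref{liftcriterion} applies with $M=-A_1$: in even characteristic $A_1$ has zero diagonal, as observed in the proof of Theorem~\ref{thm:exact_even}, so $-A_1$ is alternate.

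\emph{Inclusion $\supseteq$.} Let $X=PX_*Q$ with $P\in\Gamma_X$ and $Q\in O_m(\mathbb{F}_q)$. Then
\[
XX^T=PX_*QQ^TX_*^TP^T=P(-A_0)P^T=-A_0 .
\]
Moreover $\ker X=\ker(PX_*)=K_P$, since $Q$ is invertible. The rows of a matrix $N_X$ spanning $K_P$ therefore satisfy $N_X(-A_1)N_X^T=\mathcal{O}$ by the definition of $\Gamma_X$. By Proposition~\ref{liftcriterion}, $\varphi_X^{-1}(-A_1)\neq\emptyset$, so $X\in\chi_{m,G}$.

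\emph{Inclusion $\subseteq$.} Let $X\in\chi_{m,G}$. Since $XX^T=X_*X_*^T=-A_0$, both $\ker X$ and $\ker X_*$ lie in $\operatorname{Rad}(A_0)$. By Lemma~\ref{X-rank}, $\operatorname{rank}X=\operatorname{rank}X_*$, so the two kernels have the same dimension.

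First I build $P$. Choose a complement $C$ with $\mathbb{F}_q^k=C\oplus\operatorname{Rad}(A_0)$. Let $P$ act as the identity on $C$, and as an invertible map of $\operatorname{Rad}(A_0)$ that sends $\ker X$ onto $\ker X_*$. Since $\operatorname{Rad}(A_0)$-components contribute nothing to $\mathbf{u}A_0\mathbf{v}^T$, we get $PA_0P^T=A_0$. By construction $(\ker X)P=\ker X_*$, so $K_P=\ker X$. Because $X\in\chi_{m,G}$, Proposition~\ref{liftcriterion} gives $\mathbf{u}A_1\mathbf{v}^T=0$ on $K_P$, hence $P\in\Gamma_X$.

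Next I build $Q$. The map $\psi:\mathcal{C}_X\to\mathcal{C}_{PX_*}$, $\mathbf{x}X\mapsto\mathbf{x}PX_*$, is well defined and injective because the two kernels coincide. It is an isometry for the standard dot product, since both Gram matrices equal $-A_0$. By Witt's Extension Theorem (Theorem~\ref{witt_e}), $\psi$ extends to an isometry of $\mathbb{F}_q^m$, i.e.\ $\psi(\mathbf{v})=\mathbf{v}Q'$ with $Q'\in O_m(\mathbb{F}_q)$. Then $PX_*=XQ'$, and so $X=PX_*Q$ with $P\in\Gamma_X$ and $Q=(Q')^{-1}\in O_m(\mathbb{F}_q)$. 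When $q$ is odd this finishes the proof.

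\emph{Main obstacle: $q$ even.} Here Theorem~\ref{witt_e} requires a condition on $h=\mathbf{1}_m$, the sum of the standard orthonormal basis. By Lemma~\ref{allonevector}, $\mathbf{1}_m$ lies in both $\mathcal{C}_X$ and $\mathcal{C}_{PX_*}$ (Case II) or in neither (Case I). So the first condition holds, and in Case I we are done.

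In Case II we also need $\psi(\mathbf{1}_m)=\mathbf{1}_m$. For $\mathbf{v}$ in either code, $\mathbf{1}_m\cdot\mathbf{v}$ is the unique square root of $\mathbf{v}\cdot\mathbf{v}$, and $\psi$ preserves $\mathbf{v}\cdot\mathbf{v}$. Hence $\psi(\mathbf{1}_m)-\mathbf{1}_m$ is orthogonal to all of $\mathcal{C}_{PX_*}$, so it lies in the hull of $\mathcal{C}_{PX_*}$. From the proof of Lemma~\ref{X-rank}, this hull is $\operatorname{Rad}(A_0)PX_*$.

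To kill this defect I will modify $P$ to $P'=P+D$ with the following properties:
\begin{itemize}
\item[(a)] the rows of $D$ lie in $\operatorname{Rad}(A_0)$, so that $DA_0=\mathcal{O}$ and hence $P'A_0P'^T=A_0$ is preserved;
\item[(b)] $D$ vanishes on $\ker X$, so that the kernel condition and the membership in $\Gamma_X$ are unaffected;
\item[(c)] $\mathbf{x}DX_*=\mathbf{1}_m-\psi(\mathbf{1}_m)$, where $\mathbf{x}X=\mathbf{1}_m$.
\end{itemize}
Condition (c) is consistent with (b) because $\mathbf{x}\notin\ker X$. The delicate point is keeping $P'$ invertible. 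I would arrange this by letting $D$ act only on a chosen line $\langle\mathbf{x}\rangle$ taken inside a complement of $\ker X$, and if necessary adjusting the choice of $C$. With $P'$ in place of $P$, the extension condition holds and the argument above goes through.
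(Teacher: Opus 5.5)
\textbf{The gap: invertibility of $P'$.} Most of your proof follows the paper's argument and is correct: the $\supseteq$ inclusion, the construction of $P$, Witt extension for $q$ odd and Case I, and your observation that $\psi(\mathbf{1}_m)-\mathbf{1}_m$ lies in $\operatorname{Rad}(A_0)PX_*=\operatorname{Rad}(A_0)X_*$ (the paper's $\mathbf{c}-\mathbf{c}_*\in\operatorname{Rad}(A_0)$). What remains is exactly the step you flag and leave open, namely that $P'=P+D$ is invertible. This is not cosmetic. You need $P'\in GL_k(\mathbb{F}_q)$ for $P'\in\Gamma_X$. You also need it to get $\ker(P'X_*)=\ker X$, which gives injectivity of the new map $\mathbf{y}X\mapsto\mathbf{y}P'X_*$; without it, that map can collapse part of the hull. ``Adjusting the choice of $C$'' gives no reason why a suitable $D$ exists.

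\textbf{How to close it.} Write $\mathbf{1}_m-\psi(\mathbf{1}_m)=\mathbf{r}X_*$ with $\mathbf{r}\in\operatorname{Rad}(A_0)$, and split, as the paper does ($\mathbf{c}_0\ne 0$ versus $\mathbf{c}_0=0$), on whether $\mathbf{x}\in\operatorname{Rad}(A_0)$.
\begin{itemize}
\item If $\mathbf{x}=\mathbf{x}_0+\mathbf{x}_1$ with $\mathbf{0}\ne\mathbf{x}_0\in C$ and $\mathbf{x}_1\in\operatorname{Rad}(A_0)$, take $D$ vanishing on all of $\operatorname{Rad}(A_0)$, so (b) is automatic, with $\mathbf{x}_0D=\mathbf{r}$. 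In the decomposition $C\oplus\operatorname{Rad}(A_0)$, $P'$ is block upper triangular with diagonal blocks $I$ and $P|_{\operatorname{Rad}(A_0)}$, hence invertible.
\item If $\mathbf{x}\in\operatorname{Rad}(A_0)$, the correction must live inside $\operatorname{Rad}(A_0)$. There, let $P'$ agree with $P$ on $\ker X$ and send $\mathbf{x}\mapsto\mathbf{x}P+\mathbf{r}$, then extend to an invertible map of $\operatorname{Rad}(A_0)$. This is possible because $(\mathbf{x}P+\mathbf{r})X_*=\mathbf{1}_m\ne\mathbf{0}$, so $\mathbf{x}P+\mathbf{r}\notin\ker X_*=(\ker X)P$.
\end{itemize}
Your rank-one idea can also be made uniform. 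Let $D$ vanish on a hyperplane $H\supseteq\ker X$ with $\mathbf{x}\notin H$. Then $P+D$ is invertible iff $\mathbf{x}+\mathbf{r}P^{-1}\notin H$. Such an $H$ exists because both $\mathbf{x}$ and $\mathbf{x}+\mathbf{r}P^{-1}$ lie outside $\ker X$ (the latter since $(\mathbf{x}+\mathbf{r}P^{-1})PX_*=\mathbf{1}_m$), and a vector space is never a union of two proper subspaces. With either fix supplied, your proof is complete and agrees with the paper's.
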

\begin{proof}
    For $P\in\Gamma_X$ and $Q\in O_m(\mathbb{F}_q)$, we have
    \[
    (PX_*Q)(PX_*Q)^T=PX_*QQ^TX_*^TP^T=-A_0
    \]
    and
    \[
    \ker PX_*Q=\ker PX_*=\{\mathbf{u}\in \mathbb{F}_q^k~|~\mathbf{u}PX_*=\mathbf{0}\}=\{\mathbf{v}P^{-1}~|~\mathbf{v}\in \ker X_*\}=(\ker X_*)P^{-1}=K_P.
    \]
    Let $N$ be the matrix whose rows form a basis of $\ker PX_*Q=K_P$. By the definition of $\Gamma_X$, we have $NA_1N^T=\mathcal{O}$. Then, by Proposition~\ref{liftcriterion}, there is a matrix $Y$ such that $\varphi_{PX_*Q}(Y)=-A_1$. Therefore, $PX_*Q\in \chi_{m, G}$.
    
    Next, we consider the opposite direction. Take any $X\in\chi_{m, G}$. Since $\operatorname{rank}(X)=\operatorname{rank}(X_*)$ by Lemma~\ref{X-rank}, we have $\dim\ker X=\dim\ker X_*$. Let $V$ be a complement of $\operatorname{Rad}(A_0)$ in $\mathbb{F}_q^k$. Then $\mathbb{F}_q^k=V\perp \operatorname{Rad}(A_0)$ and the restriction of $A_0$ to $V$ is nondegenerate. Since $\ker X, \ker X_*\subseteq \operatorname{Rad}(A_0)$ have the same dimension, there is an invertible matrix $P\in GL_k(\mathbb{F}_q)$ such that
    \[
    \mathbf{v}P=\mathbf{v}\quad\mbox{for all }\mathbf{v}\in V,\quad (\operatorname{Rad}(A_0))P=\operatorname{Rad}(A_0),\quad\mbox{and}\quad(\ker X)P=\ker X_*.
    \]
    Note that $K_P=(\ker X_*)P^{-1}=\ker X$. Choose $Y$ such that $\varphi_X(Y)=-A_1$. Then for $\mathbf{u},\,\mathbf{v}\in \ker X$, we have
    \begin{equation}\label{everychi-eq}
        \mathbf{u}A_1\mathbf{v}^T=-\mathbf{u}(XY^T+YX^T)\mathbf{v}^T=0.
    \end{equation}
    For any two vectors $\mathbf{x}, \mathbf{y}\in \mathbb{F}_q^k$, decompose them into $\mathbf{x}=\mathbf{x}_0+\mathbf{x}_1$ and $\mathbf{y}=\mathbf{y}_0+\mathbf{y}_1$ where $\mathbf{x}_0, \mathbf{y}_0\in V$ and $\mathbf{x}_1, \mathbf{y}_1\in \operatorname{Rad}(A_0)$. Then we have
    \begin{align*}
    (\mathbf{x}P)A_0(\mathbf{y}P)^T=(\mathbf{x}_0+\mathbf{x}_1P)A_0(\mathbf{y}_0+\mathbf{y}_1P)^T=\mathbf{x}_0A_0\mathbf{y}_0^T=\mathbf{x}A_0\mathbf{y}^T.
    \end{align*}
    This gives $PA_0P^T=A_0$. Therefore $P\in\Gamma_X$, and the map $\xi:\langle PX_*\rangle\to \langle X\rangle$ defined as
    \[
    \xi(\mathbf{u}PX_*)=\mathbf{u}X
    \]
    is an isometry. Next, we divide the proof into three cases. 

    First, let $q$ be odd. By Theorem~\ref{witt_e}, $\xi$ extends to an isometry $\tilde{\xi}$ on $\mathbb{F}_q^m$. Note that there is an orthogonal matrix $Q\in O_m(\mathbb{F}_q)$ such that $\tilde{\xi}(\mathbf{z})=\mathbf{z}Q$ for every $\mathbf{z}\in\mathbb{F}_q^m$. Then for any $\mathbf{a}\in\mathbb{F}_q^k$, we have
    \[
    \mathbf{a}PX_*Q=\tilde{\xi}(\mathbf{a}PX_*)=\xi(\mathbf{a}PX_*)=\mathbf{a}X,
    \]
    which gives $PX_*Q=X$.

    Second, let $q$ be even and $\mathcal{C}$ be a code of Case I. By Lemma~\ref{allonevector}, all-one vector $\mathbf{1}_m$ is contained in neither $\langle PX_*\rangle$ nor $\langle X\rangle$. Again, by Theorem~\ref{witt_e}, $\xi$ extends to an isometry $\tilde{\xi}$ on $\mathbb{F}_q^m$, which implies that $PX_*Q=X$.

    Finally, let $q$ be even and $\mathcal{C}$ be a code of Case II. By Lemma~\ref{allonevector}, all-one vector $\mathbf{1}_m$ is contained in both $\langle X_*\rangle$ and $\langle X\rangle$. Choose $\mathbf{c}, \mathbf{c}_*\in \mathbb{F}_q^k$ such that $\mathbf{c}X=\mathbf{c}_*X_*=\mathbf{1}_m$. Then for any $\mathbf{a}\in\mathbb{F}_q^k$, we have
    \begin{align*}
    (\mathbf{a}X\cdot \mathbf{c}X)^2&=
    (\mathbf{a}X\cdot \mathbf{1}_m)^2\\&=
    (\mathbf{a}X\cdot\mathbf{a}X)\\&=
    -\mathbf{a}A_0\mathbf{a}^T\\&=(\mathbf{a}X_*\cdot\mathbf{a}X_*)\\&=
    (\mathbf{a}X_*\cdot \mathbf{1}_m)^2\\&=
    (\mathbf{a}X_*\cdot \mathbf{c}_*X_*)^2.
    \end{align*}
    This gives that $\mathbf{a}X\cdot \mathbf{c}X=\mathbf{a}X_*\cdot \mathbf{c}_*X_*$ which implies
    \[
    -\mathbf{a}A_0\mathbf{c}^T=
    \mathbf{a}X\cdot \mathbf{c}X=
    \mathbf{a}X_*\cdot \mathbf{c}_*X_*=
    -\mathbf{a}A_0\mathbf{c}_*^T.
    \]
    This gives $\mathbf{c}-\mathbf{c}_*\in \operatorname{Rad}(A_0)$. Choose a basis of $\mathbb{F}_q^k=V\oplus\operatorname{Rad}(A_0)$ such that $A_0=A_0'\oplus \mathcal{O}$ for some matrix $A_0'$. Then, we can write $\mathbf{c}=(\mathbf{c}_0, \mathbf{z})$ and $\mathbf{c}_*=(\mathbf{c}_0, \mathbf{z}_*)$ for some $\mathbf{c}_0\in V$ and $\mathbf{z}, \mathbf{z}_*\in \operatorname{Rad}(A_0)$. Note that every matrix
    \[
    P'=\begin{pmatrix}
        I & B\\\mathcal{O} & D
    \end{pmatrix}
    \]
    for some invertible matrix $D$ satisfies $P'A_0(P')^T=A_0$. If $\mathbf{c}_0\ne 0$, then choose $D$ satisfying $(\ker X)D=\ker X_*$ and choose $B$ such that $\mathbf{c}_0B=\mathbf{z}_*-\mathbf{z}D$. Otherwise, if $\mathbf{c}_0=0$, then take $B=\mathcal{O}$ and choose $D$ satisfying $(\ker X)D=\ker X_*$ and $\mathbf{z}D=\mathbf{z}_*$. Then, in both cases, $P'\in \Gamma_X$ and we have
    \[
    \mathbf{c}P'-\mathbf{c}_*=(\mathbf{c}_0, \mathbf{c}_0B+\mathbf{z}D)-(\mathbf{c}_0, \mathbf{z}_*)=0.
    \]
    Define an isometry $\xi':\langle P'X_*\rangle\to \langle X\rangle$ as $\xi'(\mathbf{u}P'X_*)=\mathbf{u}X$ for every $\mathbf{u}\in \mathbb{F}_q^k$. Then we have
    \[
    \xi'(\mathbf{1}_m)=\xi'((\mathbf{c}_*(P')^{-1})P'X_*)=\mathbf{c}_*(P')^{-1}X=\mathbf{c}X=\mathbf{1}_m.
    \]
    Again, by Theorem~\ref{witt_e}, $\xi'$ extends to an isometry $\tilde{\xi}'$ on $\mathbb{F}_q^m$, which implies that $P'X_*Q=X$. Moreover, an argument similar to that used in~\eqref{everychi-eq} shows that $P'\in\Gamma_X$. This completes the proof.
\end{proof}

\begin{remark}\label{conrem}
    For $P\in \Gamma_X$, note that $\ker PX_*=(\ker X_*)P^{-1}$ as shown in Theorem~\ref{everychi}. Let $N_{PX_*}$ be the matrix whose rows form a basis of $\ker PX_*$. Since $(\ker PX_*)P=\ker X_*$, the rows of $N_{PX_*}P$ form a basis of $\ker X_*$. Thus, by choosing a suitable basis of $\ker X_*$, we have
    \[
    N_{PX_*}P=N_{X_*}
    \]
    where $N_{X_*}$ is the matrix whose rows form a basis of $\ker X_*$. Then by the definition of $\Gamma_X$ and by Proposition~\ref{liftcriterion}, we have
    \begin{align*}
    N_{PX_*}(PA_1P^T-A_1)N_{PX_*}^T&=N_{PX_*}(PA_1P^T)N_{PX_*}^T-N_{PX_*}A_1N_{PX_*}^T\\&=N_{X_*}A_1N_{X_*}^T-N_{PX_*}A_1N_{PX_*}^T\\&=\mathcal{O}.
    \end{align*}
    Therefore, by Proposition~\ref{liftcriterion}, there is a matrix $S$ such that
    \[
    \varphi_{PX_*}(S)=PA_1P^T-A_1.
    \]
    For each $P\in\Gamma_X$, we fix one such matrix and denote it by $S_P$.
\end{remark}

\begin{theorem}\label{optimalconstruct}
    Let $\mathcal{C}$ be a code over $\mathbb{F}_q+u\mathbb{F}_q$ where $q$ is a prime power. Let $G$ be a generator matrix of $\mathcal{C}$. Suppose that the matrix
    \[
    \tilde{G}=[G~|~B_*]
    \]
    generates a shortest self-orthogonal embedding $\tilde{\mathcal{C}}$ of $\mathcal{C}$, where $B_*=X_*+uY_*$ for some $X_*, Y_*$ over $\mathbb{F}_q$. Then, a code $\mathcal{C}'$ is a shortest self-orthogonal embedding of $\mathcal{C}$ if and only if it has a generator matrix
    \[
    G'=[G~|~(PX_*+u(PY_*+S_P+PH))Q]
    \]
    where $P\in \Gamma_X$, $H\in \ker\varphi_{X_*}$ and $Q\in O_m(\mathbb{F}_q)$.
\end{theorem}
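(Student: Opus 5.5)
The plan is to reduce everything to the pair of equations $XX^T=-A_0$ and $\varphi_X(Y)=-A_1$ derived above. By definition of $\chi_{m,G}$, $[G~|~X+uY]$ generates a shortest self-orthogonal embedding precisely when $X\in\chi_{m,G}$ and $\varphi_X(Y)=-A_1$. The proof then has two halves: a direct verification that every $G'$ of the stated form works, and a converse showing that every solution $(X,Y)$ arises this way. For the converse I will use Theorem~\ref{everychi} to describe $X$, and then describe the affine fibre $\varphi_X^{-1}(-A_1)$ as a particular solution plus $\ker\varphi_X$.

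\textbf{Sufficiency.} Let $P\in\Gamma_X$, $H\in\ker\varphi_{X_*}$ and $Q\in O_m(\mathbb{F}_q)$, and set $X=PX_*Q$ and $Y=(PY_*+S_P+PH)Q$. Since $QQ^T=I$, we get $XX^T=PX_*X_*^TP^T=-PA_0P^T=-A_0$. For the second equation,
\[
XY^T+YX^T=P\varphi_{X_*}(Y_*)P^T+\varphi_{PX_*}(S_P)+P\varphi_{X_*}(H)P^T.
\]
Using $\varphi_{X_*}(Y_*)=-A_1$ and $\varphi_{X_*}(H)=\mathcal{O}$, together with $\varphi_{PX_*}(S_P)=PA_1P^T-A_1$ from Remark~\ref{conrem}, this becomes $-PA_1P^T+PA_1P^T-A_1=-A_1$. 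Hence $[G~|~X+uY]$ generates a self-orthogonal code of length $n+m$. This length is $n_s(\mathcal{C})$, so the code is a shortest embedding.

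\textbf{Necessity.} Let $\mathcal{C}'$ be a shortest self-orthogonal embedding. First I need it to have a generator matrix of the form $[G~|~B]$ with $B=X+uY$ having $m$ columns. This should follow from the definition of embedding, since $\mathcal{C}'$ is generated by $[G~|~B]$ for some $B$. Then $X\in\chi_{m,G}$, so Theorem~\ref{everychi} gives $X=PX_*Q$ with $P\in\Gamma_X$ and $Q\in O_m(\mathbb{F}_q)$.

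Put $Y'=YQ^{-1}=YQ^T$. Then $\varphi_{PX_*}(Y')=Q^{-1}$-conjugation gives $\varphi_X(Y)$ rewritten, namely $\varphi_{PX_*}(Y')=-A_1$. By the sufficiency computation, $PY_*+S_P$ also satisfies $\varphi_{PX_*}(PY_*+S_P)=-A_1$. Subtracting the two, $Y'-PY_*-S_P\in\ker\varphi_{PX_*}$.

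It remains to identify $\ker\varphi_{PX_*}$ with $P\ker\varphi_{X_*}$. For $K\in\ker\varphi_{X_*}$ we have
\[
\varphi_{PX_*}(PK)=P(X_*K^T+KX_*^T)P^T=\mathcal{O},
\]
and conversely, given $K'\in\ker\varphi_{PX_*}$, the matrix $P^{-1}K'$ lies in $\ker\varphi_{X_*}$ because $P$ is invertible. So we can write $Y'-PY_*-S_P=PH$ with $H\in\ker\varphi_{X_*}$. This gives $Y=(PY_*+S_P+PH)Q$, which is exactly the stated form.

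\textbf{Main obstacle.} Nearly all the difficulty is already absorbed by Theorem~\ref{everychi}. The remaining delicate point is bookkeeping: $S_P$ is a fixed choice of solution made in Remark~\ref{conrem}, and I must make sure the parametrisation covers every solution $Y$ regardless of that choice. This is handled by the affine-fibre argument above. I also need to confirm that Theorem~\ref{everychi}, which is stated with a generic $X$, applies to $X$ in the same sense that $\Gamma_X$ is defined, i.e. relative to $X_*$; it does, since $\Gamma_X$ depends only on $X_*$.
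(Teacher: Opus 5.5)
Your proposal is correct and follows the paper's proof essentially step for step: sufficiency comes from expanding $\varphi_{PX_*}(PY_*+S_P+PH)$ using Remark~\ref{conrem}, and necessity comes from invoking Theorem~\ref{everychi} to write $X=PX_*Q$ and then showing $YQ^T-PY_*-S_P=PH$ with $H\in\ker\varphi_{X_*}$. One small wording fix: the identity $\varphi_{PX_*}(YQ^T)=\varphi_X(Y)$ is not a conjugation, but simply the cancellation $Q^TQ=I$ after substituting $PX_*=XQ^T$.
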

\begin{proof}
    Let $X'=PX_*$ and $Y'=PY_*+S_P+PH$ for $P\in \Gamma_X$, $H\in \ker\varphi_{X_*}$ and $Q\in O_m(\mathbb{F}_q)$. Then we have
    \[
    X'(X')^T=PX_*X_*^TP^T=-PA_0P^T=-A_0
    \]
    and by Remark~\ref{conrem},
    \begin{align*}
    \varphi_{X'}(Y')&=\varphi_{PX_*}(PY_*)+\varphi_{PX_*}(PH)+\varphi_{PX_*}(S_P)\\&=P(\varphi_{X_*}(Y_*))P^T+P(\varphi_{X_*}(H))P^T+PA_1P^T-A_1\\&=-PA_1P^T+PA_1P^T-A_1\\&=-A_1.
    \end{align*}
    Since the right multiplication by $Q\in O_m(\mathbb{F}_q)$ preserves these calculations,
    \[
    G'=[G~|~(PX_*+u(PY_*+S_P+PH))Q]
    \]
    generates a shortest self-orthogonal embedding of $\mathcal{C}$.

    Conversely, let $G''=[G~|~B'']$ where $B''=X+uY$ generates a shortest self-orthogonal embedding of $\mathcal{C}$. By Theorem~\ref{everychi}, there are $P\in \Gamma_X$ and $Q\in O_m(\mathbb{F}_q)$ such that $X=PX_*Q$. Put
    \[
    Z=YQ^T-PY_*-S_P.
    \]
    Then we have
    \begin{align*}
    \varphi_{PX_*}(Z)&=\varphi_{PX_*}(YQ^T)-\varphi_{PX_*}(PY_*)-\varphi_{PX_*}(S_P)\\&=(PX_*)(YQ^T)^T+(YQ^T)(PX_*)^T+PA_1P^T-PA_1P^T+A_1\\&=(XQ^T)(YQ^T)^T+(YQ^T)(XQ^T)^T+A_1\\
    &=XY^T+YX^T+A_1\\
    &=\mathcal{O}.
    \end{align*}
    Let $H=P^{-1}Z$. This gives
    \[
    P\varphi_{X_*}(H)P^T=\varphi_{PX_*}(PH)=\varphi_{PX_*}(Z)=\mathcal{O}.
    \]
    Then $Y=(PY_*+S_P+PH)Q$. This completes the proof.
\end{proof}

\section{Shortest LCD embeddings of linear codes over {\it R}}\label{sec6}
Throughout this section, we denote $R=\mathbb{F}_q+u\mathbb{F}_q$.

\begin{theorem}\label{LCD-exact}
    Let $\mathcal{C}$ be a code of length $n$ over $R=\mathbb{F}_q+u\mathbb{F}_q$ for some prime power $q$ with generator matrix $G$. Write $GG^T=A_0+uA_1$ for some matrices $A_0, A_1$ over $\mathbb{F}_q$. Let $r=\operatorname{rank}_{\mathbb{F}_q}(A_0)$, and let $k$ be the number of rows of $G$. Then
    \[
    n_\ell(\mathcal{C})=n+k-r.
    \]
\end{theorem}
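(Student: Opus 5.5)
The plan is to reduce LCD-ness of $[G~|~D]$ to invertibility of a single matrix over $\mathbb{F}_q$, namely the reduction modulo $u$ of its Gram matrix. First I would record the reduction. By the result of Bhowmick et al.\ recalled in the preliminaries, an LCD code over $R$ is free, and a free code with generator matrix $\hat G=[G~|~D]$ is LCD if and only if $\hat G\hat G^T=GG^T+DD^T$ is invertible. Since $R$ is local with maximal ideal $(u)$, this matrix is invertible over $R$ if and only if its reduction modulo $u$ is invertible over $\mathbb{F}_q$. Writing $D=D_0+uD_1$ with $D_0,D_1$ over $\mathbb{F}_q$, that reduction is $A_0+D_0D_0^T$. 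So the whole problem becomes: find the least number $m$ of columns of a $k\times m$ matrix $D_0$ over $\mathbb{F}_q$ with $A_0+D_0D_0^T\in GL_k(\mathbb{F}_q)$.

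For the lower bound, suppose $[G~|~D]$ with $D\in M_{k\times m}(R)$ generates an LCD code. Then $A_0+D_0D_0^T$ has rank $k$, so
\[
k\le \operatorname{rank}(A_0)+\operatorname{rank}(D_0D_0^T)\le r+\operatorname{rank}(D_0)\le r+m .
\]
Hence $m\ge k-r$. The point that needs care here is that the generating matrix $[G~|~D]$ has exactly $k$ rows, so the invertibility criterion applies to this $k\times k$ Gram matrix. I would handle this by viewing $[G~|~D]$ as a generator matrix of the free LCD code, in the same way the paper uses $[G~|~B]$ in the self-orthogonal case. I expect this bookkeeping to be the only delicate point, especially when $G$ contains rows in $(uR)^n$: such a row forces the corresponding row of $D_0$ to carry the rank, and the inequality above already accounts for that.

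For the upper bound, I would use that $A_0$ is symmetric of rank $r$ over $\mathbb{F}_q$. Choosing a complement to $\operatorname{Rad}(A_0)$ on which $A_0$ is nondegenerate, there is $P\in GL_k(\mathbb{F}_q)$ with $PA_0P^T=S\oplus\mathcal{O}_{k-r}$ and $S$ invertible. This holds in every characteristic and is consistent with Theorems~\ref{cong_2} and~\ref{congclass}. Then I would set
\[
D_0=P^{-1}\begin{pmatrix}\mathcal{O}\\ I_{k-r}\end{pmatrix}\in M_{k\times(k-r)}(\mathbb{F}_q)
\]
and $D=D_0$ viewed over $R$. With this choice, $P(A_0+D_0D_0^T)P^T=S\oplus I_{k-r}$, which is invertible. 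Therefore $GG^T+DD^T$ is invertible over $R$.

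It remains to confirm that the code generated by $[G~|~D]$ is free of rank $k$, so that the criterion applies. If $\mathbf{x}[G~|~D]=\mathbf{0}$ for some $\mathbf{x}\in R^k$, then $\mathbf{x}(GG^T+DD^T)=\mathbf{0}$, which forces $\mathbf{x}=\mathbf{0}$. So the rows are free generators, the code is LCD, and it has length $n+k-r$. Combined with the lower bound, this gives $n_\ell(\mathcal{C})=n+k-r$.
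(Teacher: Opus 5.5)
Your proposal is correct and is essentially the paper's proof. The lower bound is the same rank inequality $k\le r+\operatorname{rank}(D_0)\le r+m$ applied to the reduction $A_0+D_0D_0^T$, and the upper bound uses the same choice $D=P^{-1}\begin{pmatrix}\mathcal{O}\\ I_{k-r}\end{pmatrix}$ after bringing $A_0$ to $S\oplus\mathcal{O}$ with $S$ invertible; your check that the rows of $[G~|~D]$ are $R$-independent is a detail the paper leaves implicit.

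The point you flag in the lower bound, that $[G~|~D]$ really is a basis of the LCD code so the Gram criterion applies, is also left unaddressed in the paper. Your analogy with the self-orthogonal case does not settle it, since there $\hat G\hat G^T=\mathcal{O}$ does not need a basis while the LCD criterion does. It does follow from the unique-representation property of $G$: a nontrivial $\mathbb{F}_q$-relation among the reductions of the rows of $D$ attached to the $(uR)^n$-rows of $G$ would give a codeword annihilated by $u$ but not in $u\hat{\mathcal{C}}$, contradicting freeness.
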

\begin{proof}
    We first show that $n_\ell(\mathcal{C})\ge n+k-r$. Let $n_\ell(\mathcal{C})-n=m$. Then there is a $k\times m$ matrix $B=B_0+uB_1$ over $R$ with $B_0,\,B_1\in M_{k\times m}(\mathbb{F}_q)$ such that $GG^T+BB^T$ is invertible over $R$. Note that
    \[
    \overline{(GG^T+BB^T)}=A_0+B_0B_0^T
    \]
    is also invertible over $\mathbb{F}_q$. Then we have
    \[
    k=\operatorname{rank}_{\mathbb{F}_q}(A_0+B_0B_0^T)\le r+\operatorname{rank}_{\mathbb{F}_q}(B_0)\le r+m,
    \]
    which implies
    \[
    m\ge k-r.
    \]

    Next, we show that $n_\ell(\mathcal{C})\le n+k-r$. Since $\operatorname{rank}_{\mathbb{F}_q}(A_0)=r$, we have
    \[
    \dim\operatorname{Rad}(A_0)=k-r.
    \]
    Let $u_{r+1}, \ldots, u_k$ be a basis of $\operatorname{Rad}(A_0)$, and extend it to a basis $u_1, \ldots, u_k$ of $\mathbb{F}_q^k$. Let $P\in GL_k(\mathbb{F}_q)$ be a matrix whose rows are $u_1, \ldots, u_k$. Then we have
    \[
    PA_0P^T=S\oplus \mathcal{O}
    \]
    for some $r\times r$ symmetric matrix $S$ over $\mathbb{F}_q$. Suppose that $S$ is not invertible. Then there is a nonzero vector $\mathbf{x}=(x_1, \ldots, x_r)\in \mathbb{F}_q^r$ such that $\mathbf{x}S=\mathbf{0}$. Then we have
    \[
    \mathbf{0}=(\mathbf{x}, 0, \ldots, 0)(S\oplus \mathcal{O})=(\mathbf{x}, 0, \ldots, 0)PA_0P^T.
    \]
    It follows that $(\mathbf{x}, 0, \ldots, 0)P\in \operatorname{Rad}(A_0)$. Note that
    \[
    (\mathbf{x}, 0, \ldots, 0)P\in \operatorname{span}(u_1, \ldots, u_r).
    \]
    Since $(\mathbf{x}, 0, \ldots, 0)P\in \operatorname{span}(u_1, \ldots, u_r)\cap \operatorname{Rad}(A_0)$, and is nonzero, this is a contradiction. Therefore, $S$ must be invertible.

    Define a matrix $D'$ over $\mathbb{F}_q$ as
    \[
    D'=\begin{pmatrix}
        \mathcal{O}_{r\times (k-r)}\\I_{k-r}
    \end{pmatrix},
    \]
    and let $D=P^{-1}D'$. Then we have
    \[
    P(A_0+DD^T)P^T=PA_0P^T+PDD^TP^T=PA_0P^T+D'(D')^T=S\oplus I_{k-r}.
    \]
    Thus, $A_0+DD^T$ is invertible over $\mathbb{F}_q$, and by viewing $D$ as a matrix over $R$, it follows that $GG^T+DD^T$ is invertible over $R$. Hence, we have $m\le k-r$. This completes the proof.
\end{proof}

To prove the characterization of all shortest LCD embeddings in Theorem~\ref{LCD_class}, we use the following lemma to calculate the determinant of matrices.
\begin{lemma}[\cite{ZS-2006}]\label{Schur_lemma}
Let $A$, $B$, $C$, and $D$ be matrices of sizes $n\times n$, $n\times m$, $m\times n$, and $m\times m$, respectively, over a field $F$. If $\det(A)\ne 0$, then the determinant $\det(M)$ of the matrix
\[
M=\begin{bmatrix}
    A & B\\C & D
\end{bmatrix}
\]
is given by $\det (M)=\det (A)\cdot \det(D-CA^{-1}B)$.
\end{lemma}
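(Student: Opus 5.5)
This is the standard Schur complement formula, and I would prove it by a block elimination with unitriangular block matrices. Since $\det(A)\ne 0$, $A^{-1}$ exists over $F$. I would write down the block factorization
\[
\begin{bmatrix}
    A & B\\C & D
\end{bmatrix}
=
\begin{bmatrix}
    I_n & \mathcal{O}\\ CA^{-1} & I_m
\end{bmatrix}
\begin{bmatrix}
    A & \mathcal{O}\\ \mathcal{O} & D-CA^{-1}B
\end{bmatrix}
\begin{bmatrix}
    I_n & A^{-1}B\\ \mathcal{O} & I_m
\end{bmatrix}.
\]
Then I would check it by multiplying out block by block:
\begin{itemize}
\item the product of the last two factors is $\begin{bmatrix} A & B\\ \mathcal{O} & D-CA^{-1}B\end{bmatrix}$;
\item left-multiplying by the first factor leaves the top block row as $[A~|~B]$;
\item the bottom block row becomes $[CA^{-1}A~|~CA^{-1}B+D-CA^{-1}B]=[C~|~D]$.
\end{itemize}

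Next I would take determinants using multiplicativity of $\det$ over the field $F$. The two outer factors are block lower and upper triangular with identity diagonal blocks, so they are unitriangular matrices and have determinant $1$. The middle factor is block diagonal, so its determinant is $\det(A)\det(D-CA^{-1}B)$. This uses the standard fact that the determinant of a block diagonal (or block triangular) matrix is the product of the determinants of its diagonal blocks, which follows from the Laplace expansion or the Leibniz formula. Combining these gives $\det(M)=\det(A)\cdot\det(D-CA^{-1}B)$.

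There is no real obstacle here. The only step needing a moment's care is the block-triangular determinant fact, and I would cite it as standard rather than reprove it.
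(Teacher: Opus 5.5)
Your proof is correct: the block factorization multiplies out to $M$, the two outer unitriangular factors have determinant $1$, and the middle block-diagonal factor contributes $\det(A)\det(D-CA^{-1}B)$. The paper gives no proof of its own and only cites this lemma from the Schur complement literature, where this same block-elimination factorization is the standard argument, so your route matches the intended one.
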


\begin{theorem}\label{LCD_class}
    Let $\mathcal{C}$ be a code over $R=\mathbb{F}_q+u\mathbb{F}_q$ for some prime power $q$ with generator matrix $G$. Let $GG^T=A_0+uA_1$ for some $A_0, A_1\in M_k(\mathbb{F}_q)$. Let $\ell=\dim\operatorname{Rad}(A_0)$. Let $U$ be a $k\times k$ invertible matrix over $\mathbb{F}_q$ whose first $\ell$ rows form a basis of $\operatorname{Rad}(A_0)$. Then $\tilde{\mathcal{C}}$ is a shortest LCD embedding of $\mathcal{C}$ if and only if it has a generator matrix
    \[
    \tilde{G}=[G~|~B]
    \]
    where
    \[
    B=U^{-1}\begin{pmatrix}
        D\\E
    \end{pmatrix}
    \]
    for $D\in GL_\ell(R)$ and some matrix $E$.
\end{theorem}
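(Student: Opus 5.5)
By Theorem~\ref{LCD-exact}, a shortest LCD embedding appends exactly $m=k-r=\ell$ columns. So $\tilde{\mathcal C}$ is a shortest LCD embedding if and only if it is generated by $[G~|~B]$ with $B\in M_{k\times \ell}(R)$ and $GG^T+BB^T$ invertible over $R$. Since $R$ is local, this is equivalent to $A_0+B_0B_0^T$ being invertible over $\mathbb{F}_q$, where $B=B_0+uB_1$. Invertibility is preserved under congruence by the invertible matrix $U$, so the condition becomes
\[
U(A_0+B_0B_0^T)U^T=UA_0U^T+(UB_0)(UB_0)^T\in GL_k(\mathbb{F}_q).
\]
Because the first $\ell$ rows of $U$ lie in $\operatorname{Rad}(A_0)$, we get $UA_0U^T=\mathcal{O}_\ell\oplus S$. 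Here $S$ is an $r\times r$ symmetric matrix, and $S$ is invertible by the same argument used in the proof of Theorem~\ref{LCD-exact}: a vector in the kernel of $S$ would give a nonzero element of $\operatorname{Rad}(A_0)$ lying in the span of the last $r$ rows of $U$.

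Write $UB=\begin{pmatrix} D\\ E\end{pmatrix}$ with $D\in M_\ell(R)$, and reduce modulo $u$ to get $\begin{pmatrix}\overline D\\ \overline E\end{pmatrix}$. Then
\[
U(A_0+B_0B_0^T)U^T=\begin{pmatrix}\overline D\,\overline D^T & \overline D\,\overline E^T\\ \overline E\,\overline D^T & S+\overline E\,\overline E^T\end{pmatrix}.
\]
The claim therefore reduces to showing that this matrix is invertible exactly when $\overline D$ is invertible. Invertibility of $\overline D$ over $\mathbb{F}_q$ is in turn equivalent to $D\in GL_\ell(R)$, again because $R$ is local.

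For the sufficiency direction, assume $\overline D$ is invertible and apply Lemma~\ref{Schur_lemma} with the upper-left block $\overline D\,\overline D^T$. The Schur complement is
\[
S+\overline E\,\overline E^T-\overline E\,\overline D^T(\overline D\,\overline D^T)^{-1}\overline D\,\overline E^T=S+\overline E\,\overline E^T-\overline E\,\overline E^T=S,
\]
which is invertible. Hence the whole matrix is invertible, with determinant $\det(\overline D)^2\det S$.

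For the necessity direction, suppose $\overline D$ is singular. Pick a nonzero $\mathbf{x}\in\mathbb{F}_q^\ell$ with $\mathbf{x}\overline D=\mathbf{0}$. Then $(\mathbf{x},\mathbf{0})$ multiplied on the right by the displayed matrix gives $(\mathbf{x}\overline D\,\overline D^T,\ \mathbf{x}\overline D\,\overline E^T)=\mathbf{0}$, so the matrix is singular. This is a contradiction. The same fact can be seen through the factorization
\[
U(A_0+B_0B_0^T)U^T=(\mathcal O\oplus S)+\begin{pmatrix}\overline D\\ \overline E\end{pmatrix}\begin{pmatrix}\overline D\\ \overline E\end{pmatrix}^T,
\]
whose first $\ell$ rows form the matrix $\overline D\,[\overline D^T~|~\overline E^T]$, which has rank at most $\operatorname{rank}\overline D<\ell$.

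Combining the two directions: $B=U^{-1}\begin{pmatrix} D\\ E\end{pmatrix}$ with $D\in GL_\ell(R)$ and $E$ arbitrary is exactly the condition for $[G~|~B]$ to generate a shortest LCD embedding. Every $B\in M_{k\times\ell}(R)$ can be written in this form, since $U$ is invertible.

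The main delicate point is the reduction at the start: a shortest LCD embedding must use precisely $\ell$ appended columns, and LCD-ness of the free code depends only on the reduction of $GG^T+BB^T$ modulo $u$. Both facts follow from Theorem~\ref{LCD-exact} and the result of Bhowmick et al.\ quoted in the preliminaries. A shortest LCD embedding is automatically free, because it is LCD.
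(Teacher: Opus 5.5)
Your proposal is correct and follows essentially the same route as the paper. You reduce $GG^T+BB^T$ modulo $u$ and conjugate by $U$ to get $(\mathcal{O}_\ell\oplus S)+(UB_0)(UB_0)^T$. For sufficiency you apply Lemma~\ref{Schur_lemma}, whose Schur complement collapses to $S$; for necessity you exhibit the left-kernel vector $(\mathbf{x},\mathbf{0})$ when $\overline{D}$ is singular. The two points you state explicitly, that exactly $\ell=k-r$ columns are appended by Theorem~\ref{LCD-exact} and that $S$ is invertible, are only used implicitly in the paper's proof.
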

\begin{proof}
    Note that $UA_0U^T=\mathcal{O}\oplus S$ for some invertible matrix $S$. Let $D$ be an $\ell\times \ell$ invertible matrix over $R$ and $E$ be any matrix of corresponding size. Then we have
    \[
    \overline{U\tilde{G}\tilde{G}^TU^T}=\overline{U(GG^T+BB^T)U^T}=\begin{pmatrix}
        \overline{D}\,\overline{D}^T & \overline{D}\,\overline{E}^T\\\overline{E}\,\overline{D}^T & \overline{E}\,\overline{E}^T+S
    \end{pmatrix}.
    \]
    By Lemma~\ref{Schur_lemma},
    \begin{align*}
    \det\left(\overline{\tilde{G}\tilde{G}^T}\right)
    &=\det\left(\overline{U\tilde{G}\tilde{G}^TU^T}\right)/\det(\overline{U})^2\\
    &=\det\left(\overline{D}\,\overline{D}^T\right)\det\left(\overline{E}\,\overline{E}^T+S-(\overline{E}\,\overline{D}^T)(\overline{D}\,\overline{D}^T)^{-1}(\overline{D}\,\overline{E}^T)\right)/\det(\overline{U})^2\\
    &=\det\left(\overline{D}\,\overline{D}^T\right)\det(S)/\det(\overline{U})^2\\
    &\ne 0.
    \end{align*}
    Thus, $\tilde{G}$ generates a shortest LCD embedding of $\mathcal{C}$.

    Conversely, suppose that $G'=[G~|~B']$ generates a shortest LCD embedding of $\mathcal{C}$. Let
    \[
    UB'=\begin{pmatrix}
        D'\\E'
    \end{pmatrix}
    \]
    where the number of rows of $D'$ is $\ell$. Suppose that $D'$ is not invertible. Choose a nonzero vector $\mathbf{u}\in \mathbb{F}_q^\ell$ such that $\mathbf{u}\overline{D'}=\mathbf{0}$. Let $\mathbf{u}'=(\mathbf{u}, \mathbf{0})\in \mathbb{F}_q^k$. Then
    \[
    \mathbf{u}'\overline{UG'(G')^TU^T}=(\mathbf{u}, \mathbf{0})\begin{pmatrix}
        \overline{D'}\,\overline{D'}^T & \overline{D'}\,\overline{E'}^T\\\overline{E'}\,\overline{D'}^T & \overline{E'}\,\overline{E'}^T+S
    \end{pmatrix}=\mathcal{O}.
    \]
    Thus, $\mathbf{u}'\in \ker\left(\overline{UG'(G')^TU^T}\right)$ which is a contradiction since $\overline{UG'(G')^TU^T}$ is invertible. Thus $D'$ must be an invertible matrix. This completes the proof.
\end{proof}

\section{Some examples}
In this section we provide examples of self-orthogonal and LCD embeddings of codes over $R$. Throughout this section, we use the same notation as in Sections~\ref{sec3} to~\ref{sec6}.

Examples~\ref{even-case1} and~\ref{even-case2} present shortest self-orthogonal embeddings of codes over rings of even characteristic corresponding to Cases 1 and 2, respectively. Examples~\ref{odd-square} and~\ref{odd-nonsquare} present shortest self-orthogonal embeddings over rings of odd characteristic for the cases where $(-1)^{r+m}\det(S)$ is a square and a nonsquare, respectively. Finally, Example~\ref{LCDex} provides an example of a shortest LCD embedding. Since the Gray map preserves orthogonality over $R=\mathbb{F}_q+u\mathbb{F}_q$ when $q$ is even, we consider the Gray images of the codes only in Examples~\ref{even-case1},~\ref{even-case2} and~\ref{LCDex}. The optimality of these codes can be confirmed by~\cite{Grassl2026}.

Furthermore, using MAGMA, we enumerate all shortest self-orthogonal and shortest LCD embeddings of the codes considered in these examples by applying Theorems~\ref{optimalconstruct} and~\ref{LCD_class}, respectively, and provide those having the largest minimum distance.

\begin{example}\label{even-case1}
    Let $\mathcal{C}$ be a code of length $n=20$ over $R=\mathbb{F}_2+u\mathbb{F}_2$ with generator matrix
    \[
    \begingroup
    \setlength{\arraycolsep}{2pt}
    \tiny
    G=\left[
    \begin{array}{cccccccccccccccccccc}
    1&0&1&0&1&1+u&1&0&1&0&0&1&1+u&u&u&1+u&1+u&0&1&1\\
    1+u&1&1&0&1&0&0&1&1&1&0&u&1+u&u&0&1+u&1&0&u&0\\
    0&1+u&0&1&0&1&0&0&1&0&1+u&u&u&1&1+u&u&1&0&1+u&1\\
    0&0&u&0&1&1+u&1&1&0&0&1&1+u&0&1&u&1+u&1+u&1+u&1&1+u\\
    1&1&1&1&1&1+u&1&0&1&0&u&1&0&1+u&1&0&1&1&1&u\\
    1&0&0&0&0&0&1&1&1&0&0&0&u&u&1+u&1+u&1+u&1+u&1+u&1+u\\
    0&0&0&0&0&u&0&0&0&u&0&0&0&u&0&0&u&0&u&u\\
    u&0&0&0&0&0&0&u&0&0&u&u&0&0&0&u&0&0&u&0
    \end{array}
    \right].
    \endgroup
    \]
    The code $\mathcal{C}$ is of type $\{6, 2\}$ with minimum distance 6. Let $\phi:R^{20}\to \mathbb{F}_2^{40}$ be the Gray map. Then $\phi(\mathcal{C})$ is a $[40, 14, 12]$ code over $\mathbb{F}_2$.
    
    Let $GG^T=A_0+uA_1$ for $A_0,\,A_1\in M_{8\times 8}(\mathbb{F}_2)$. Then $A_0$ is given as
    \[
    A_0=\left[
    \begin{array}{c|c}
        \begin{matrix}
            0 & 1 & 1 & 0 & 1 & 1\\
            1 & 0 & 1 & 0 & 0 & 1\\
            1 & 1 & 0 & 0 & 0 & 1\\
            0 & 0 & 0 & 0 & 0 & 1\\
            1 & 0 & 0 & 0 & 0 & 1\\
            1 & 1 & 1 & 1 & 1 & 0
        \end{matrix} & \mathcal{O}\\\hline
        \mathcal{O} & \mathcal{O}
    \end{array}
    \right].
    \]
    Let $\bar{A}_0$ be the matrix obtained by reducing the top-left $6\times 6$ submatrix of $A_0$ modulo $u$. Since $\bar{A}_0$ is nonzero and all its diagonal entries are zero, $\operatorname{Res}(\mathcal{C})$ is an even code that is not self-orthogonal. 
    
    Let $U\in GL_{8}(R)$ be the matrix given as
    \[
    U=\begin{bmatrix}
    1 & 0 & 0 & 0 & 0 & 0 & 0 & 0\\
    u & 1+u & 0 & 0 & 0 & u & 0 & 0\\
    1 & 1+u & 1+u & 0 & u & u & 0 & 0\\
    u & 1 & 0 & 0 & 1+u & u & 0 & 0\\
    0 & 0 & 0 & 1+u & 0 & 0 & 0 & 0\\
    1 & 0 & 0 & 0 & 1 & 1 & 0 & 0\\
    u & u & 0 & u & 0 & u & 1 & 0\\
    0 & u & 0 & u & 0 & u & 0 & 1
    \end{bmatrix}.
    \]
    Then we have
    \[
    UGG^TU^T=J_6\oplus u\mathcal{O}_2.
    \]
    Since the bottom-right $2\times 2$ submatrix of $UGG^TU^T$ is the zero matrix, we have $\rho=0$. So, the code $\mathcal{C}$ is a code of Case I and by Theorem~\ref{thm:exact_even}, the length of a shortest self-orthogonal embedding of $\mathcal{C}$ is given as
    \[
    n_s(\mathcal{C})=n+\operatorname{rank}_{\mathbb{F}_q}(A_0)+1=20+6+1=27.
    \]
    Following the construction given in the proof of Lemma~\ref{nuzerodiag}, choose the matrix $\tilde{B}$ as 
    \[
    \tilde{B}=\begin{bmatrix}
    1 & 0 & 0 & 0 & 0 & 0 & 1\\
    0 & 1 & 0 & 0 & 0 & 0 & 1\\
    1 & 1 & 1 & 0 & 0 & 0 & 1\\
    1 & 1 & 0 & 1 & 0 & 0 & 1\\
    1 & 1 & 1 & 1 & 1 & 0 & 1\\
    1 & 1 & 1 & 1 & 0 & 1 & 1\\
    0 & 0 & 0 & 0 & 0 & 0 & 0\\
    0 & 0 & 0 & 0 & 0 & 0 & 0
    \end{bmatrix}.
    \]
    Since $J_6\oplus u\mathcal{O}_2+\tilde{B}\tilde{B}^T=\mathcal{O}$, the code generated by
    \[
    [G~|~U^{-1}\tilde{B}]
    \]
    generates a shortest self-orthogonal embedding $\mathcal{C}'$ of $\mathcal{C}$, and its minimum distance is 8, and $\phi(\mathcal{C}')$ is a $[54, 14, 14]$ code over $\mathbb{F}_2$.
    
    Set $B=U^{-1}\tilde{B}$. Choose
    \[
    P'=I_8\in \Gamma_X,\quad S_{P'}=\mathcal{O}_{8\times 7}
    \]
    where $B=X+uY$ and
    \[
    H=\begin{bmatrix}
    1 & 0 & 0 & 1 & 0 & 1 & 0\\
    1 & 0 & 1 & 1 & 1 & 1 & 1\\
    0 & 0 & 1 & 1 & 1 & 0 & 0\\
    1 & 0 & 0 & 1 & 1 & 0 & 1\\
    1 & 1 & 0 & 0 & 1 & 1 & 1\\
    1 & 0 & 0 & 0 & 1 & 1 & 1\\
    0 & 0 & 0 & 0 & 0 & 0 & 0\\
    0 & 0 & 0 & 0 & 0 & 0 & 0
    \end{bmatrix}\in \ker\varphi_{B},\quad\mbox{and}\quad
    Q=\begin{bmatrix}
    1 & 1 & 0 & 1 & 1 & 0 & 1\\
    0 & 1 & 1 & 1 & 0 & 0 & 0\\
    0 & 1 & 1 & 0 & 0 & 0 & 1\\
    1 & 0 & 1 & 1 & 1 & 0 & 1\\
    0 & 0 & 0 & 0 & 0 & 1 & 0\\
    1 & 1 & 1 & 0 & 0 & 0 & 0\\
    0 & 1 & 1 & 0 & 1 & 0 & 0
    \end{bmatrix}.
    \]
    By Theorem~\ref{optimalconstruct}, the code $\tilde{\mathcal{C}}$ generated by
    \[
    \tilde{G}=[G~|~(P'B+u(P'\mathcal{O}+S_{P'}+P'H))Q]
    \]
    is a shortest self-orthogonal embedding of $\mathcal{C}$
    where
    \[
    (P'B+u(P'\mathcal{O}+S_{P'}+P'H))Q=
    \begin{bmatrix}
    1+u & 0 & 1 & 1 & 0 & 0 & 1\\
    0 & 0 & 0 & 1 & 1+u & u & 0\\
    0 & 0 & 0 & 0 & 1 & u & 1\\
    0 & 0 & 0 & 1+u & 1 & 1 & 1+u\\
    0 & 1+u & 1 & u & u & u & u\\
    0 & 0 & 1+u & 0 & 1+u & u & u\\
    u & 0 & 0 & u & u & u & 0\\
    0 & 0 & u & 0 & u & u & u
    \end{bmatrix}.
    \]
    The minimum distance of $\tilde{\mathcal{C}}$ is $10$, and we have checked that this is the largest minimum distance among all shortest self-orthogonal embeddings of $\mathcal{C}$. Moreover, $\phi$ maps $\tilde{\mathcal{C}}$ into a $[54, 14, 20]$ code over $\mathbb{F}_2$ which is an optimal self-orthogonal code.
\end{example}

\begin{example}\label{even-case2}
    Let $\mathcal{C}$ be a code of length $n=22$ over $R=\mathbb{F}_2+u\mathbb{F}_2$ with generator matrix
    \[
    \begingroup
    \setlength{\arraycolsep}{2pt}
    \tiny
    G=
    \left[
    \begin{array}{cccccccccccccccccccccc}
    0&1&0&1&1&1+u&1&0&1&0&1&0&1+u&0&u&u&1+u&1+u&0&1&1&1\\
    1&1&0&0&1&0&0&1&1&1&u&1+u&1+u&u&u&0&1+u&1&0&u&0&0\\
    1+u&0&1&0&0&1&0&0&1&0&u&1&u&u&1&1+u&u&1&0&1+u&1&1\\
    0&u&0&0&1&1+u&1&1&0&0&1+u&1&0&1&1&u&1+u&1+u&1+u&1&1+u&1+u\\
    1&1&1&1&1&1+u&1&0&1&0&1&u&0&u&1+u&1&0&1&1&1&u&u\\
    0&0&0&1+u&0&0&1&1&1&0&0&1+u&u&u&u&1+u&1+u&1+u&1+u&1+u&u&1+u\\
    0&0&0&0&0&u&0&0&0&u&0&u&0&u&u&0&0&u&0&u&0&u\\
    0&0&0&u&0&0&0&u&0&0&u&u&0&u&0&0&u&0&0&u&u&0
    \end{array}
    \right].
    \endgroup
    \]
    The code $\mathcal{C}$ is of type $\{6, 2\}$ with minimum distance 7. Let $\phi:R^{22}\to \mathbb{F}_2^{44}$ be the Gray map. Then $\phi(\mathcal{C})$ is a $[44, 14, 13]$ code over $\mathbb{F}_2$.
    
    Let $GG^T=A_0+uA_1$ for $A_0,\,A_1\in M_{8\times 8}(\mathbb{F}_2)$. Then $A_0$ is given as
    \[
    A_0=\left[
    \begin{array}{c|c}
        \begin{matrix}
            1 & 0 & 0 & 1 & 1 & 1\\
            0 & 0 & 0 & 1 & 1 & 1\\
            0 & 0 & 1 & 1 & 0 & 0\\
            1 & 1 & 1 & 0 & 0 & 0\\
            1 & 1 & 0 & 0 & 0 & 1\\
            1 & 1 & 0 & 0 & 1 & 1\\
        \end{matrix} & \mathcal{O}\\\hline
        \mathcal{O} & \mathcal{O}
    \end{array}
    \right].
    \]
    Let $\bar{A}_0$ be the matrix obtained by reducing the top-left $6\times 6$ submatrix of $A_0$ modulo $u$. Since $\bar{A}_0$ has a nonzero diagonal entry, $\operatorname{Res}(\mathcal{C})$ is not an even code.
    
    Let $U\in GL_{8}(R)$ be the matrix given as
    \[
    U=\begin{bmatrix}
    1+u & 0 & 0 & 0 & u & 0 & 0 & 0\\
    0 & u & 1 & 0 & 0 & 0 & 0 & 0\\
    1+u & u & u & u & 1 & 0 & 0 & 0\\
    1 & 1 & 0 & 0 & 1 & 0 & 0 & 0\\
    0 & 0 & 1 & 1 & 1 & 0 & 0 & 0\\
    0 & 1+u & 1+u & 1 & 0 & 1 & 0 & 0\\
    u & u & u & u & 0 & 0 & 1 & 0\\
    0 & u & 0 & u & 0 & 0 & 0 & 1
    \end{bmatrix}.
    \]
    Then we have
    \[
    UGG^TU^T=I_5\oplus u\mathcal{O}_3.
    \]
    Since the bottom-right $3\times 3$ submatrix of $UGG^TU^T$ is the zero matrix, we have $\rho=0$. So, by Theorem~\ref{thm:exact_even}, the length of a shortest self-orthogonal embedding of $\mathcal{C}$ is given as
    \[
    n_s(\mathcal{C})=n+\operatorname{rank}_{\mathbb{F}_q}(A_0)+\rho=22+5+0=27.
    \]
    Choose the matrix $\tilde{B}$ as 
    \[
    \tilde{B} = \begin{bmatrix}
        I_5\\\mathcal{O}_{3\times 5}
    \end{bmatrix}.
    \]
    Since $I_5\oplus u\mathcal{O}_3+\tilde{B}\tilde{B}^T=\mathcal{O}$, the code generated by
    \[
    [G~|~U^{-1}\tilde{B}]
    \]
    generates a shortest self-orthogonal embedding $\mathcal{C}'$ of $\mathcal{C}$, and its minimum distance is 8. Also, $\phi(\mathcal{C}')$ is a $[54, 14, 16]$ code over $\mathbb{F}_2$.

    Set $B=U^{-1}\tilde{B}$. Choose
    \[
    P'=I_8\in \Gamma_X,\quad S_{P'}=\mathcal{O}_{8\times 5}
    \]
    where $B=X+uY$ and
    \[
    H=\begin{bmatrix}
    1 & 0 & 0 & 1 & 0\\
    1 & 0 & 0 & 0 & 0\\
    0 & 1 & 1 & 1 & 0\\
    1 & 0 & 1 & 0 & 0\\
    1 & 1 & 1 & 0 & 1\\
    0 & 1 & 0 & 1 & 0\\
    0 & 0 & 0 & 0 & 0\\
    0 & 0 & 0 & 0 & 0
    \end{bmatrix}\in \ker\varphi_{B},\quad\mbox{and}\quad
    Q=\begin{bmatrix}
    1 & 1 & 0 & 1 & 0\\
    0 & 0 & 0 & 0 & 1\\
    1 & 0 & 1 & 1 & 0\\
    1 & 1 & 1 & 0 & 0\\
    0 & 1 & 1 & 1 & 0
    \end{bmatrix}.
    \]
    By Theorem~\ref{optimalconstruct}, the code $\tilde{\mathcal{C}}$ generated by
    \[
    \tilde{G}=[G~|~(P'B+u(P'\mathcal{O}+S_{P'}+P'H))Q]
    \]
    is a shortest self-orthogonal embedding of $\mathcal{C}$
    where
    \[
    (P'B+u(P'\mathcal{O}+S_{P'}+P'H))Q=
    \begin{bmatrix}
    1+u & 1 & 0 & 1 & 0\\
    0 & 1+u & 0 & 1 & 0\\
    0 & 0 & 0 & 0 & 1+u\\
    0 & 0 & 0 & 1+u & 1\\
    0 & 1 & 1+u & u & u\\
    0 & 1 & 0 & 0 & 0\\
    u & 0 & 0 & u & 0\\
    0 & u & 0 & 0 & u
    \end{bmatrix}.
    \]
    The minimum distance of $\tilde{\mathcal{C}}$ is $10$, and we have checked that this is the largest minimum distance among all shortest self-orthogonal embeddings of $\mathcal{C}$. Moreover, $\phi$ maps $\tilde{\mathcal{C}}$ into a $[54, 14, 20]$ code over $\mathbb{F}_2$ which is an optimal self-orthogonal code. In particular, the shortest self-orthogonal embeddings attaining the largest possible minimum distance for the codes in Examples~\ref{even-case1} and~\ref{even-case2} coincide.
\end{example}

\begin{example}\label{odd-square}
    Let $\mathcal{C}$ be a code of length $n=23$ over $R=\mathbb{F}_3+u\mathbb{F}_3$ with generator matrix
    \[
    \begingroup
    \setlength{\arraycolsep}{2pt}
    \tiny
    G=
    \left[
    \begin{array}{ccccccccccccccccccccccc}
    1&2&0&2&1&0&0&0&0&1&2&2&0&2&1&0&2&2&2&1&1&2&2\\
    0&1&2&0&2&1&0&0&0&1&2&0&1&0&2&1&1&0&1&1&2&0&0\\
    0&0&1&2&0&2&1&0&0&1&1&1&0&1&0&2&1&1&1&2&2&2&0\\
    0&0&0&1&2&0&2&1&0&2&2&1&1&2&2&2&2&1&1&0&1&2&2\\
    0&0&0&0&1&2&0&2&1&1&1&2&2&2&0&1&1&2&0&2&2&0&2\\
    2u&u&0&0&0&0&0&0&0&u&2u&u&u&0&0&u&2u&0&0&2u&0&0&2u\\
    0&2u&u&0&0&0&0&0&0&u&0&u&0&0&u&u&u&2u&0&0&0&2u&0\\
    0&0&2u&u&0&0&0&0&0&u&2u&0&0&0&0&2u&u&2u&2u&0&2u&2u&2u
    \end{array}
    \right].
    \endgroup
    \]
    The code $\mathcal{C}$ is of type $\{5, 3\}$ with minimum distance 8. Let $GG^T=A_0+uA_1$ for $A_0,\,A_1\in M_{8\times 8}(\mathbb{F}_3)$. Then $A_0$ is given as
    \[
    A_0=\left[
    \begin{array}{c|c}
        \begin{matrix}
           1 & 0 & 1 & 2 & 2\\
           0 & 1 & 2 & 1 & 2\\
           1 & 2 & 0 & 1 & 2\\
           2 & 1 & 1 & 2 & 1\\
           2 & 2 & 2 & 1 & 0
        \end{matrix} & \mathcal{O}\\\hline
        \mathcal{O} & \mathcal{O}
    \end{array}
    \right].
    \]
    Let $\bar{A}_0$ be the matrix obtained by reducing the top-left $5\times 5$ submatrix of $A_0$ modulo $u$. Note that $\det\bar{A}_0=2\in \mathbb{F}_3^\times\backslash(\mathbb{F}_3^\times)^2$.
    
    Let
    \[
    P=\begin{bmatrix}
        1 & 0 & 0 & 0 & 0\\
        0 & 1 & 0 & 0 & 0\\
        2 & 1 & 1 & 0 & 0\\
        2 & 0 & 1 & 2 & 1\\
        1 & 1 & 1 & 1 & 1
    \end{bmatrix}\oplus I_3\in GL_8(\mathbb{F}_3).
    \]
    Then we have
    \[
    PGG^TP^T=I_4\oplus[2]\oplus u\mathcal{O}_3.
    \]
    Since the bottom-right $3\times 3$ submatrix of $PGG^TP^T$ is the zero matrix, we have $m=\mu(-Z)=0$, and 
    \[
    (-1)^{r+m}\det(I_4\oplus[2])=-2=1\in (\mathbb{F}_3^\times)^2
    \]
    where $r=\operatorname{rank}_{\mathbb{F}_q}(A_0)$. Thus, by Theorem~\ref{thm:exact_odd} the length of a shortest self-orthogonal embedding of $\mathcal{C}$ is given as
    \[
    n_s(\mathcal{C})=n+r+2m=23+5+0=28.
    \]
    Choose the matrix $B$ as 
    \[
    B = \begin{bmatrix}
        1 & 1 & 1 & 1 & 1\\
        0 & 1 & 0 & 0 & 2\\
        2 & 0 & 0 & 1 & 0\\
        2 & 2 & 1 & 2 & 2\\
        1 & 2 & 0 & 1 & 2\\
        0 & 0 & 0 & 0 & 0\\
        0 & 0 & 0 & 0 & 0\\
        0 & 0 & 0 & 0 & 0
    \end{bmatrix}.
    \]
    Since $I_4\oplus[2]\oplus u\mathcal{O}_3+BB^T=\mathcal{O}$, the code generated by
    \[
    [G~|~P^{-1}B]
    \]
    generates a shortest self-orthogonal embedding of $\mathcal{C}$, and its minimum distance is 9. We have checked that this is the largest minimum distance among all shortest self-orthogonal embeddings of $\mathcal{C}$.  
\end{example}

\begin{example}\label{odd-nonsquare}
    Let $\mathcal{C}$ be a code of length $n=22$ over $R=\mathbb{F}_3+u\mathbb{F}_3$ with generator matrix
    \[
    \begingroup
    \setlength{\arraycolsep}{2pt}
    \tiny
    G=
    \left[
    \begin{array}{cccccccccccccccccccccc}
    2&2&2&1&1&1&0&0&0&0&2&2&0&1&0&0&0&1&2&2&1&2\\
    0&2&2&2&1&1&1&0&0&2&2&2&1&1&1&1&1&1&1&0&2&0\\
    0&0&2&2&2&1&1&1&0&2&2&0&2&0&1&0&0&2&1&2&0&0\\
    0&0&0&2&2&2&1&1&1&1&1&0&2&0&2&1&1&1&1&2&1&1\\
    2u&u&0&0&0&0&0&0&0&u&0&0&2u&2u&2u&u&0&0&u&0&0&u\\
    0&2u&u&0&0&0&0&0&0&2u&0&u&0&u&u&2u&2u&2u&u&0&0&u\\
    0&0&2u&u&0&0&0&0&0&u&2u&2u&2u&2u&2u&2u&2u&0&0&0&2u&2u\\
    0&0&0&2u&u&0&0&0&0&u&0&u&2u&2u&2u&0&0&0&2u&u&2u&0
    \end{array}
    \right].
    \endgroup
    \]
    The code $\mathcal{C}$ is of type $\{4, 4\}$ with minimum distance 7. Let $GG^T=A_0+uA_1$ for $A_0,\,A_1\in M_{8\times 8}(\mathbb{F}_3)$. Then $A_0$ is given as
    \[
    A_0=\left[
    \begin{array}{c|c}
        \begin{matrix}
           2 & 2 & 0 & 0\\
           2 & 2 & 2 & 2\\
           0 & 2 & 1 & 2\\
           0 & 2 & 2 & 2
        \end{matrix} & \mathcal{O}\\\hline
        \mathcal{O} & \mathcal{O}
    \end{array}
    \right].
    \]
    Let $\bar{A}_0$ be the matrix obtained by reducing the top-left $4\times 4$ submatrix of $A_0$ modulo $u$. Note that $\det\bar{A}_0=2\in \mathbb{F}_3^\times\backslash(\mathbb{F}_3^\times)^2$.
    
    Let
    \[
    U=\begin{bmatrix}
        0 & 0 & 1 & 0\\
        0 & 1 & 1 & 0\\
        1 & 1 & 1 & 0\\
        2 & 1 & 2 & 1
    \end{bmatrix}\oplus I_4\in GL_8(\mathbb{F}_3).
    \]
    Then we have
    \[
    UGG^TU^T=I_3\oplus[2]\oplus u\mathcal{O}_4.
    \]
    Since the bottom-right $4\times 4$ submatrix of $UGG^TU^T$ is the zero matrix, we have $m=\mu(-Z)=0$, and 
    \[
    (-1)^{r+m}\det(I_3\oplus[2])=2\notin (\mathbb{F}_3^\times)^2
    \]
    where $r=\operatorname{rank}_{\mathbb{F}_q}(A_0)$. Thus, by Theorem~\ref{thm:exact_odd} the length of a shortest self-orthogonal embedding of $\mathcal{C}$ is given as
    \[
    n_s(\mathcal{C})=n+r+2m+1=22+4+0+1=27.
    \]
    Choose the matrix $\tilde{B}$ as 
    \[
    \tilde{B} = \begin{bmatrix}
        0 & 0 & 0 & 1 & 1\\
        1 & 1 & 1 & 1 & 2\\
        1 & 1 & 2 & 1 & 2\\
        1 & 1 & 0 & 2 & 1\\
        0 & 0 & 0 & 0 & 0\\
        0 & 0 & 0 & 0 & 0\\
        0 & 0 & 0 & 0 & 0\\
        0 & 0 & 0 & 0 & 0
    \end{bmatrix}.
    \]
    Since $I_3\oplus[2]\oplus u\mathcal{O}_4+\tilde{B}\tilde{B}^T=\mathcal{O}$, the code generated by
    \[
    [G~|~U^{-1}\tilde{B}]
    \]
    generates a shortest self-orthogonal embedding of $\mathcal{C}$, and its minimum distance is 8.

    Set $B=U^{-1}\tilde{B}$. Choose $P'=I_8\in \Gamma_X$, $S_{P'}=\mathcal{O}_{8\times 5}$ where $B=X+uY$ and
    \[
    H=\begin{bmatrix}
        0 & 2 & 0 & 0 & 0\\
        0 & 0 & 1 & 0 & 2\\
        2 & 1 & 0 & 2 & 1\\
        2 & 2 & 0 & 2 & 0\\
        2 & 1 & 0 & 0 & 0\\
        2 & 1 & 0 & 0 & 0\\
        2 & 1 & 0 & 0 & 0\\
        0 & 0 & 0 & 0 & 0
    \end{bmatrix}\in \ker\varphi_{B}.
    \]
    By Theorem~\ref{optimalconstruct}, the code $\tilde{\mathcal{C}}$ generated by
    \[
    \tilde{G}=[G~|~P'B+u(P'\mathcal{O}+S_{P'}+P'H)]
    \]
    is a shortest self-orthogonal embedding of $\mathcal{C}$
    where
    \[
    P'B+u(P'\mathcal{O}+S_{P'}+P'H)=\begin{bmatrix}
        0 & 2u & 1 & 0 & 0\\
        1 & 1 & 1+u & 0 & 1+2u\\
        2u & u & 0 & 1+2u & 1+u\\
        2u & 2u & 0 & 2u & 1\\
        2u & u & 0 & 0 & 0\\
        2u & u & 0 & 0 & 0\\
        2u & u & 0 & 0 & 0\\
        0 & 0 & 0 & 0 & 0
    \end{bmatrix}.
    \]
    The minimum distance of $\tilde{\mathcal{C}}$ is $10$, and we have checked that this is the largest minimum distance among all shortest self-orthogonal embeddings of $\mathcal{C}$.
\end{example}

\begin{example}\label{LCDex}
    Let $\mathcal{C}$ be a code of length $n=8$ over $R=\mathbb{F}_2+u\mathbb{F}_2$ with generator matrix
    \[
    G=
    \left[
    \begin{array}{cccccccc}
    1 & 0 & 0 & 1 & u & u & 0 & 1\\
    0 & 1 & 0 & 1+u & 1 & 1+u & 1+u & 1\\
    0 & 0 & 1 & 1+u & 1+u & 0 & u & u
    \end{array}
    \right].
    \]
    The code $\mathcal{C}$ is of type $\{3, 0\}$ with minimum distance 3. Let $\phi:R^{8}\to \mathbb{F}_2^{16}$ be the Gray map. Then $\phi(\mathcal{C})$ is a $[16, 6, 6]$ code over $\mathbb{F}_2$.
    
    Let $GG^T=A_0+uA_1$ for $A_0,\,A_1\in M_{3\times 3}(\mathbb{F}_2)$. Then
    \[
    A_0=\begin{bmatrix}
        1 & 0 & 1\\
        0 & 0 & 0\\
        1 & 0 & 1
    \end{bmatrix}.
    \]
    Thus, the length of a shortest LCD embedding of $\mathcal{C}$ is given as
    \[
    n_\ell(\mathcal{C})=n+k-\operatorname{rank}_{\mathbb{F}_q}(A_0)=8+3-1=10
    \]
    by Theorem~\ref{LCD-exact}.

    Take a matrix $U$ as
    \[
    U=\begin{bmatrix}
        0 & 1 & 0\\
        1 & 0 & 1\\
        0 & 0 & 1
    \end{bmatrix}.
    \]
    Then the first two rows of $U$ form a basis of $\operatorname{Rad}(A_0)$. Let
    \[
    D=\begin{bmatrix}
        u & 1+u\\
        1+u & 1+u
    \end{bmatrix}\in GL_2(R)\quad\mbox{and}\quad
    E=\begin{bmatrix}
        1+u & u
    \end{bmatrix}.
    \]
    Define
    \[
    B=U^{-1}\begin{bmatrix}
        D\\E
    \end{bmatrix}=\begin{bmatrix}
        0 & 1\\
        u & 1+u\\1+u & u
    \end{bmatrix}.
    \]
    By Theorem~\ref{LCD_class}, the code $\tilde{\mathcal{C}}$ generated by $[G~|~B]$ is a shortest LCD embedding of $\mathcal{C}$, and its minimum distance is $4$. We have checked that this is the largest minimum distance among all shortest LCD embeddings of $\mathcal{C}$. Furthermore, $\phi$ maps $\tilde{\mathcal{C}}$ into a $[20, 6, 8]$ code over $\mathbb{F}_2$ which is an optimal LCD code.
\end{example}

\section{Conclusion}
In this paper, we determined the exact lengths of shortest self-orthogonal and LCD embeddings of linear codes over $R=\mathbb{F}_q+u\mathbb{F}_q$. For shortest self-orthogonal embeddings, we obtained complete results covering two cases in each of even and odd characteristic and developed a method for constructing all shortest embeddings. We illustrated the four cases with examples, enumerated all shortest self-orthogonal embeddings in each example, and identified one with the largest minimum distance. Moreover, in some of these examples, the Gray images of the selected embeddings are optimal codes over $\mathbb{F}_q$. We also characterized all shortest LCD embeddings by showing that they can be obtained by appending an arbitrary invertible matrix together with an arbitrary matrix of prescribed sizes to a generator matrix of the given code. We applied this characterization to enumerate all shortest LCD embeddings in an example and identify one having the largest minimum distance.

The embeddings considered in this paper may change the type of the original code. Therefore, a natural direction for future research is to investigate shortest embedding problems over rings under the additional requirement that the type of the original code be preserved.

\section*{Acknowledgement}
This research (J.-L. Kim) was supported in part by the BK21 FOUR (Fostering Outstanding Universities for Research) funded by the Ministry of Education (MOE, Korea), National Research Foundation of Korea (NRF) under Grant No. 4120240415042, National Research Foundation of Korea under Grant No. RS-2024-NR121331, and by Basic Science Research Program through the National Research Foundation of Korea (NRF) funded by the Ministry of Science and ICT under Grant No. RS-2025-24534992 and Global - Learning \& Academic research institution for Master’s·PhD students, and Postdocs(LAMP) Program of the National Research Foundation of Korea(NRF) grant funded by the Ministry of Education(No. RS-2024-00441954).

\end{document}